\journal{arXiv}
\newcommand{\bs}[1]{\boldsymbol{#1}}
\newcommand{\reals}{\mathbb{R}}
\newcommand{\dims}{n_{\text{dim}}}
\newcommand{\lhsop}{\mathcal{N}}
\newcommand{\rhsop}{\mathcal{L}}
\newcommand{\residop}{\mathcal{R}}
\newcommand{\qoi}{\mathcal{J}}
\newcommand{\fspace}{\mathcal{V}}
\newcommand{\error}{\mathcal{E}}
\theoremstyle{definition}
\newtheorem{prop}{Proposition}
\begin{document}

\begin{frontmatter}

\title{Linearization Errors in Discrete Goal-Oriented Error Estimation}
\author[sandia]{Brian N. Granzow\corref{correspondence}}
\author[sandia]{D. Thomas Seidl}
\author[sandia]{Stephen D. Bond}
\address[sandia]{Sandia National Laboratories \\
P.O. Box 5800 \\
Albuquerque, NM 87185-1321}
\cortext[correspondence]{Corresponding author, bngranz@sandia.gov}

\begin{keyword}
adjoint\sep%
goal-oriented\sep%
a posteriori error estimation\sep%
linearization error\sep%
adaptive mesh\sep%
finite element
\end{keyword}

\begin{abstract}
This paper is concerned with goal-oriented \emph{a posteriori} error
estimation for nonlinear functionals in the context of nonlinear variational
problems solved with continuous Galerkin finite element discretizations. A
two-level, or discrete, adjoint-based approach for error estimation is
considered. The traditional method to derive an error estimate in this context
requires linearizing both the nonlinear variational form and the nonlinear
functional of interest which introduces linearization errors into the error
estimate. In this paper, we investigate these linearization errors. In 
particular, we develop a novel discrete goal-oriented error estimate that
accounts for traditionally neglected nonlinear terms at the expense of greater
computational cost. We demonstrate how this error estimate can be used to drive
mesh adaptivity. We show that accounting for linearization errors in the error
estimate can improve its effectivity for several nonlinear model problems and
quantities of interest. We also demonstrate that an adaptive strategy based on
the newly proposed estimate can lead to more accurate approximations of the
nonlinear functional with fewer degrees of freedom when compared to uniform
refinement and traditional adjoint-based approaches.
\end{abstract}

\end{frontmatter}

\section{Introduction}
\label{sec:introduction}

Finite element simulations have become ubiquitous in engineering practice.
It is natural to question the accuracy of a given finite element solution
and, in particular, the accuracy of post-processed output quantities upon
which critical design decisions may be made. For instance, an aerospace
engineer may be interested in accurately assessing the drag over an airfoil
design. \emph{A posteriori} error estimation provides a mechanism to
approximate discretization errors in this context, and, in particular,
goal-oriented \emph{a posteriori} error estimation can approximate the
discretization error in a chosen functional quantity of interest (QoI).
Localized information obtained from a goal-oriented error estimate
can then be used to control the QoI discretization error by adapting the finite
element mesh via local mesh modifications \cite{li20053d, alauzet2006parallel}.

Goal-oriented \emph{a posteriori} error estimation has been successfully
applied in a wide variety of contexts over the past thirty
years~\cite{peraire1998bounds, prudhomme1999goal, venditti2000adjoint,
venditti2003anisotropic,  becker2001optimal, fidkowski2011review,
aksoylu2011pbe, holst2015convergence, granzow2017output}. Presently,
we consider \emph{two-level} or \emph{discrete} approaches for goal-oriented
error estimation. Modern finite element software
packages \cite{dealII, mfem, pumi, fenics} significantly ease the burden of
implementing a two-level adjoint-based error estimation scheme which may,
in part, explain why such schemes have been applied in a variety of contexts
recently \cite{ali2017optimal, ahrabi2017adjoint, blonigan2023evaluation,
donoghue2021spatio, fidkowski2022output, granzow2018adjoint, sleeman2022goal}.
Presently, we have used the software PUMI \cite{pumi} to implement parallel,
adaptive finite element simulations driven by goal-oriented error estimation.

Inherent in the traditional derivation of a discrete goal-oriented error estimate
are linearizations of the PDE residual and of the QoI. Once linearized,
the higher-order terms are subsequently disregarded in
typical discrete goal-oriented error estimate
derivations~\cite{fidkowski2011review}. We refer to these neglected higher-order
terms as \emph{linearization errors}. The purpose of this paper is to
investigate the effect of including these linearization errors in a
goal-oriented error estimate and the effect that localizing these
linearization errors has on mesh adaptation.

We briefly review the relevant portions of the literature that concern
linearization errors in the context of goal-oriented error estimation.
Previous authors have included the adjoint residual in the derivation of
goal-oriented error estimates to obtain an error estimate that is
\emph{locally} third-order, but \emph{globally} second-order~\cite{
venditti2000adjoint, becker2001optimal, kast2017introduction}. Additionally,
Venditti and Darmofal~\cite{venditti2000adjoint} quantified the error due to
nonlinear effects when using a two-level approach, but did not include these
effects in their error estimate. {\c{S}}im{\c{s}}ek et al.~\cite{
csimcsek2015duality} explicitly accounted for the residual linearization error
in the derivation of a two-level goal-oriented error
estimate for various time-dependent PDEs, but did not consider nonlinear QoIs
and did not consider the \emph{localization} of higher-order effects to drive
mesh adaptivity. Finally, more recently Endtmayer et
al.~\cite{endtmayer2020two} and Dolejsi and Congreve~\cite{dolejsi2023goaloriented}
developed methods to incorporate linearization errors (as well as errors that
arise from iteratively solving nonlinear algebraic equations) in a variational
setting that depend on the strong-form of the adjoint residual.

The contributions of this paper can be stated as follows. First, by considering
linearization errors, we derive a novel two-level adjoint-based error estimate
that exactly represents the QoI discretization error between two function
spaces and is independent of any form of the adjoint residual. Second, we
prove that the solution of a requisite nonlinear scalar problem for this
newly proposed estimate can be known \emph{a priori} when quadratic QoIs
are considered. Third, we outline a convenient adjoint verification
procedure that makes use of linearization errors and we highlight
how this procedure can be applied to existing adjoint-based error
estimation codes. We then present an adaptive mesh scheme that
utilizes the newly proposed error estimate. Last, we demonstrate that the
novel error estimate can lead to more accurate error estimates and more optimal
meshes in certain scenarios when compared to a traditional
adjoint-weighted residual error estimate.

The remainder of this paper is structured as follows. First, we provide
a review of discrete goal-oriented error estimation in a general setting and
motivate our interest in studying linearization errors in this context.
Next, we introduce and derive a novel discrete adjoint-based error estimate
that is free of linearization errors. The evaluation of this estimate
requires the solution of an auxiliary nonlinear scalar problem. We then
provide the solution to this nonlinear scalar problem when quadratic
QoIs are considered. Next, we highlight that linearization errors can be
used to perform adjoint verification in a traditional adjoint-weighted residual
error estimation context. We then discuss the localization of the newly
proposed error estimate and how this localization can be used to adaptively
modify meshes to reduce QoI discretization errors. Next, we investigate
a nonlinear model problem with multiple exemplar QoIs
to verify and demonstrate the effectiveness of the newly proposed estimate.
We conclude by summarizing our findings and proposing avenues for
future work.

\section{Review of Discrete Goal-Oriented Error Estimation}
\label{sec:review}

\subsection{Two-Level Strategy}
\label{ssec:two_level}

Let $\Omega \subset \reals^{\dims}$ denote an open bounded domain in $\dims$
spatial dimensions with a Lipschitz continuous boundary $\Gamma$. Consider the
stationary nonlinear variational problem: find $u \in \fspace$ such that
\begin{equation}
\lhsop(u; w) = \rhsop(w), \quad \forall \, w \in \fspace.
\label{eq:weak_form}
\end{equation}
Here, $\fspace$ denotes an appropriate Sobolev space for the underlying PDE,
$\lhsop : \fspace \times \fspace \to \reals$ denotes a semilinear form,
nonlinear in its first argument and linear in its second, $\rhsop: \fspace \to
\reals$ denotes a linear functional, $u$ denotes the exact weak solution to the
chosen PDE, and $w$ denotes an arbitrary weighting function. Let $\cup_{e=1}^
{n_{el}} \overline{\Omega^e} = \overline{\Omega}$ denote a partitioning of the
domain $\Omega$ into $n_{el}$ non-overlapping elements,
such that $\Omega^i \cap \Omega^j = \varnothing$. Let
$\fspace^H \subset \fspace$ denote a finite-dimensional
space. A finite element formulation of the variational 
problem~\eqref{eq:weak_form} can be stated: find $u^H \in \fspace^H$ such that
\begin{equation}
\lhsop(u^H; w^H) = \rhsop(w^H), \quad \forall \, w^H \in \fspace^H,
\label{eq:fem_form}
\end{equation}
which results in a system of $N$ nonlinear algebraic equations $\bs{R}^H :
\reals^N \to \reals^N$
\begin{equation}
\bs{R}^H(\bs{u}^H) = \bs{0},
\label{eq:primal_coarse}
\end{equation}
where $\bs{u}^H \in \reals^N$ denotes the vector of nodal solution
coefficients. The problem posed by equation~\eqref{eq:primal_coarse} is
referred to as the \emph{primal problem}. The primal 
problem~\eqref{eq:primal_coarse} is solved via Newton's method, where 
iterations of the steps:
\begin{equation}
\begin{aligned}
\left[ \frac{\partial \bs{R}^H}{\partial \bs{u}^H} \biggr|_{\bs{u}^H_i} \right]
\, \delta \bs{u}^H_i = -\bs{R}^H(\bs{u}^H_i), \\
\bs{u}^H_{i+1} = \bs{u}^H_i + \delta \bs{u}^H_i,
\label{eq:primal_newton}
\end{aligned}
\end{equation}
are performed until the convergence criterion $\| \bs{R}^H(\bs{u}^H) \|_2 <
\text{TOL}$ is satisfied. Here, $\text{TOL}$ is some user-specified
convergence tolerance, $\bs{u}^H_i$ denotes the solution vector at the $i$th
Newton iteration, and $\delta \bs{u}^H_i$ denotes the incremental solution
update at the $i$th iteration.

Let $\qoi(u): \fspace \to \reals$ denote a functional that is of particular
interest to a given simulation, referred to as the \emph{quantity of
interest}. The aim of goal-oriented error estimation is to approximate the
discretization error in the QoI, $\error := \qoi(u) - \qoi(u^H)$.
To this end, we adopt a two-level error estimation 
strategy~\cite{venditti2000adjoint,venditti2002grid,venditti2003anisotropic,
fidkowski2011review}, which is sometimes referred to as a \emph{discrete}
approach for goal-oriented error estimation. For this strategy, a second
finite-dimensional space $\fspace^h$ is chosen, such that $\fspace^H \subset
\fspace^h \subset \fspace$. In this context, $\fspace^H$ and $\fspace^h$ are
referred to the \emph{coarse space} and \emph{fine space}, respectively.

The finite element problem~\eqref{eq:fem_form} posed on the fine space
$\fspace^h$ leads to a system of $n$ nonlinear algebraic equations $\bs{R}^h :
\reals^n \to \reals^n$, where $n > N$, and
\begin{equation}
\bs{R}^h(\bs{u}^h) = \bs{0}.
\label{eq:primal_fine}
\end{equation}
Here $\bs{u}^h$ denotes the solution vector of nodal coefficients for the fine
problem. Additionally, the QoI can be evaluated using the coarse and fine
space discretizations, which we denote as $\qoi^H : \reals^N \to \reals$ and
$\qoi^h : \reals^n \to \reals$, respectively. Let $\bs{I}^H_h : \fspace^H \to
\fspace^h$ denote the prolongation operator from the coarse space to the fine
space defined by interpolation and let $(\cdot)^H_h := \bs{I}^H_h (\cdot)^H$
denote the prolongation of a coarse space vector $(\cdot)^H$ onto the fine
space $\fspace^h$. Because the exact error $\error$ is generally unknowable,
two-level approaches for goal-oriented error estimation introduce computable
approximations for the QoI discretization error $\error^h$ between the coarse
and fine space, where
\begin{equation}
\begin{aligned}
\error^h &:= \qoi^h(\bs{u}^h) - \qoi^H(\bs{u}^H) \\
&= \qoi^h(\bs{u}^h) - \qoi^h(\bs{u}^H_h).
\end{aligned}
\label{eq:discrete_qoi_error}
\end{equation}
Here the error $\error^h$ serves as a proxy for the true discretization error
$\error$.

\subsection{Choices for the Fine Space}
\label{ssec:fine_space}

There exist several common choices for the fine space $\fspace^h$. Of primary
importance in making this choice is that $\error^h$ is a \emph{good} proxy for
the true QoI discretization error $\error$. With this in mind, the fine space
could be obtained by uniformly refining the mesh used for the primal problem.
We refer to this approach as $h$-enrichment. Alternatively, the fine space
could be defined by using a finite element basis with an increased polynomial
order when compared to the coarse space basis. We refer to this approach as
$p$-enrichment. Lastly, a combination of both $h$- and $p$-enrichment could be
used to define the fine space, which we refer to as $hp$-enrichment. As a note,
more esoteric approaches can be chosen for the fine space that we will not
consider in the present context~\cite{granzow2017output,
richter2015variational}.

The choice of the fine space affects the asymptotic behavior of the error
estimate $\error^h$. For example, consider a QoI that converges at a rate of
$k$ on the coarse space and at a rate of $k+l$ on the fine space, such that
$\qoi(u) - \qoi^h(\bs{u}^H_h) = c | \Omega^e_H |^k$ and $\qoi(u) -
\qoi^h(\bs{u}^h) = c | \Omega^e_h |^{k+l}$. Here, $|\Omega^e_H| :=
\text{meas}(\Omega^e_H)$ and $|\Omega^e_h| := \text{meas}(\Omega^e_h)$ denote
the characteristic mesh size on the coarse space and fine space, respectively.
The ratio of the error in the QoI between the two spaces $\error^h$ and the
true discretization error $\error$ can then be written as
\begin{equation}
\begin{aligned}
\frac{\error^h}{\error} &= \frac{\left( \qoi(u) - \qoi^h(\bs{u}^H_h) \right) -
\left( \qoi(u) - \qoi^h(\bs{u}^h) \right)}{\qoi(u) - \qoi^h(\bs{u}^H_h)} \\
&= \frac{c \, |\Omega^e_H|^k - c \, |\Omega^e_h|^{k+l}}{c \, |\Omega^e_H|^k} \\
&= 1 - \left[ \frac{|\Omega^e_h|}{|\Omega^e_H|} \right]^k |\Omega^e_h|^l.
\label{eq:asymptotic_effectivity_h}
\end{aligned}
\end{equation}
When $p$-enrichment is utilized, $l>0$ and the characteristic mesh size on
both spaces is equal, with $|\Omega^e_H| = |\Omega^e_h|$. This results in
the asymptotic behavior $\error^h/\error \to 1$ as $|\Omega^e_H| \to 0$. When
uniform $h$-enrichment is utilized, the ratio of characteristic mesh sizes is
given as $|\Omega^e_h|/|\Omega^e_H| = 1/2$ and there is no increase in the
convergence rate of the QoI, with $l = 0$. This results in the asymptotic
behavior $\error^h/\error \to 1 - (1/2)^k$ as $|\Omega^e_H| \to 0$.
Presently, we choose the fine space to be $p$-enrichment, though this is not a
requirement for the following discussions.

\subsection{Traditional Adjoint-Weighted Residual Estimate}
\label{ssec:adjoint_weighted_residual}

Given a choice for the fine space and the notation introduced in
Section~\ref{ssec:two_level}, the residual equations on the fine space can be 
expanded about the prolonged coarse solution $\bs{u}^H_h$ as
\begin{equation}
\cancelto{0}{\bs{R}^h(\bs{u}^h)} = \bs{R}^h(\bs{u}^H_h) + \left[ \frac{
\partial \bs{R}^h}{\partial \bs{u}^h} \biggr|_{\bs{u}^H_h} \right] \bs{e}^h +
\bs{E}^{\residop}_L.
\label{eq:resid_taylor}
\end{equation}
Here the left-hand side vanishes by virtue of the governing
equations on the fine space~\eqref{eq:primal_fine}, $\bs{e}^h := \left(
\bs{u}^h - \bs{u}^H_h \right)$ denotes the solution discretization error
between the coarse and fine space, and $\bs{E}^{\residop}_L$ denotes
higher-order terms in the residual Taylor expansion that we will refer to as
the \emph{residual linearization error}. Similarly, the QoI can be expanded
about the prolonged coarse solution in a Taylor series as
\begin{equation}
\qoi^h(\bs{u}^h) = \qoi^h(\bs{u}^H_h) + \left[ \frac{\partial \qoi^h}{\partial
\bs{u}^h} \biggr|_{\bs{u}^H_h} \right] \bs{e}^h + \error^{\qoi}_L,
\label{eq:qoi_taylor}
\end{equation}
where $\error^{\qoi}_L$ denotes higher-order terms  that we will refer to as
the \emph{QoI linearization error}. Neglecting the residual linearization
error in the expansion~\eqref{eq:resid_taylor}, the discretization error can
be approximated linearly as
\begin{equation}
\bs{e}^h \approx \bs{e}^h_L := - \left[ \frac{\partial \bs{R}^h}{\partial
\bs{u}^h} \biggr|_{\bs{u}^H_h} \right]^{-1} \bs{R}^h(\bs{u}^H_h).
\label{eq:linearized_error}
\end{equation}
Let $\bs{z}^h$ denote the solution to the so-called \emph{adjoint problem},
given as
\begin{equation}
\left[ \frac{\partial \bs{R}^h}{\partial \bs{u}^h} \biggr|_{\bs{u}^H_h}
\right]^T \bs{z}^h = \left[ \frac{\partial \qoi^h}{\partial \bs{u}^h}
\biggr|_{\bs{u}^H_h} \right]^T.
\label{eq:adjoint_fine}
\end{equation}
Substituting the linearized approximation for the discretization 
error~\eqref{eq:linearized_error} into the QoI Taylor 
expansion~\eqref{eq:qoi_taylor}, neglecting the QoI linearization 
error, and utilizing the adjoint solution~\eqref{eq:adjoint_fine} 
yields the well-known adjoint-weighted residual error estimate:
\begin{equation}
\error^h \approx \eta_1 := -\bs{z}^h \cdot \bs{R}^h(\bs{u}^h_H)
\label{eq:adjoint_weighted_residual}.
\end{equation}

The introduction of the adjoint problem is motivated by two considerations.
First, it replaces a potentially cost prohibitive nonlinear primal solve on
the fine space with a cheaper linear solve on the fine space. Depending on the
nonlinearity of the primal PDE and the choice of fine space, the cost of
solving the adjoint problem~\eqref{eq:adjoint_fine} on the fine space can be
comparable or cheaper than the cost of solving the primal problem on the coarse
space~\eqref{eq:primal_coarse}. Second, and perhaps most important, the
adjoint solution itself provides information about the spatial distribution of
contributions to the QoI discretization error. Concretely, this is because
$\bs{z}^h$ represents the linear sensitivity of the QoI with respect to
perturbations in the PDE residual.

For example, consider a point-wise QoI inside of the
computational domain. We could na\"{i}vely solve for the linearized
discretization error~\eqref{eq:linearized_error} for the purpose of replacing
a nonlinear solve on the fine space with a linear solve on the fine space.
We could then use the linearized discretization error $\bs{e}^h_L$ to
approximate the fine space solution as $\bs{u}^h \approx \bs{u}^H_h +
\bs{e}^h_L$ and, in turn,  approximate the QoI error $\error^h$. In fact, this
process would yield an identical error estimate to $\eta_1$. However, in doing
so, we would obtain no information about the spatial distribution of the QoI
error in the domain except at the point of interest and would thus have limited
tools to control the discretization error in the QoI via mesh adaptivity.

\subsection{Motivation For Studying Linearization Errors}
\label{ssec:motivation}

Much has been said about the error estimate $\eta_1$~\cite{fidkowski2011review}.
It is, however, less common to consider the effect of higher-order information
when deriving goal-oriented error estimates~\cite{csimcsek2015duality,
endtmayer2020two, dolejsi2023goaloriented}, particularly in a discrete adjoint
setting. This is the primary aim of the present work. Consider, for example,
the function $f(x) = x^2$, whose Taylor expansion about the point $0$ can be
expressed exactly as $f(x) = f(0) + f'(0)x + \frac12f''(0)x^2$. At any
evaluation point $a$, the entire Taylor series expansion is encoded in the
second derivative term, with the linearized approximation $f(a) - f(0)
\approx f'(0)a$ providing no meaningful information about the difference $f(a) -
f(0)$ as it simply evaluates to $0$. Unfortunately, this is exactly the
approximation that is made in the derivation of the traditional
adjoint-weighted residual estimate. If
instead we include higher-order information in the function's Taylor expansion,
where $f(a) - f(0) = f'(0)a + f''(0)a^2$, we recognize that we will
recover the exact difference $f(0) - f(a)$.

Further, let $u^H$ denote the function corresponding to the nodal coefficient
vector $\bs{u}^H$ and let $e := u - u^H$ denote the exact discretization
error. Consider the quadratic functional $\qoi(u) = \int_{\Omega} \nabla u
\cdot \nabla u \; \text{d} \Omega$. The discretization error in this functional
can be exactly represented as
\begin{equation}
\qoi(u) - \qoi(u^H) =
2 \int_{\Omega} \nabla u^H \cdot \nabla e \; \text{d} \Omega +
\int_{\Omega} \nabla e \cdot \nabla e \; \text{d} \Omega.
\label{eq:example_qoi_error}
\end{equation}
Here, we recognize that the first integral on the right-hand side of
the error expression \eqref{eq:example_qoi_error} corresponds to the
linearization used for the QoI in the traditional adjoint-weighted residual
approach and the second integral corresponds to the linearization error
introduced in the estimate. Naturally, for a convergent discretization
$e \to 0$ as $H \to 0$, and one would expect the significance of the
linearization error to quickly diminish as $H \to 0$ since it is
$\mathcal{O}(e^2)$. However, the integrand
in the linearization error term is strictly positive, while the first integral
has the potential to tend to zero quickly relative to the linearization
error due to subtractive cancellation. In this scenario, if the linearization
error is neglected then there could be a significant under-prediction of the
actual error.

These are our primary motivations for studying linearization errors and
including higher-order information in the derivation of a goal-oriented error
estimate. While the above scenarios may seem pathological, we provide concrete
examples where the traditional error estimate $\eta_1$ performs poorly for
nonlinear QoIs in Section~\ref{sec:results}. While we are presently
concerned with extending discrete adjoint-weighted error estimates to include
nonlinear effects, it is also worth noting that the reliability of the
traditional dual weighted residual error \cite{becker2001optimal} estimate,
which is derived in a continuous variational setting, has also been
demonstrated to significantly under-predict errors
\cite{nochetto2009safeguarded} in specific contexts.

\section{An Estimate Free of Linearization Errors}
\label{sec:estimate1}

\subsection{Derivation}
\label{ssec:estimate1_derivation}

We can entirely account for the linearization errors $\bs{E}^{
\residop}_L$ and $\error^{\qoi}_L$ in the derivation of a two-level
adjoint-based error estimate. From the mean-value theorem, there exists
a vector $\bs{u}^*$ where the QoI linearization error
$\error^{\qoi}_L$ vanishes, such that
\begin{equation}
\qoi^h(\bs{u}^h) = \qoi^h(\bs{u}^H_h) + \left[ \frac{\partial \qoi^h}{
\partial \bs{u}^h} \biggr|_{\bs{u}^*} \right] \bs{e}^h.
\label{eq:qoi_taylor_exact}
\end{equation}
Here $\bs{u}^*$ denotes a point on a linear path between the coarse-space solution
$\bs{u}^H_h$ and the fine-space solution $\bs{u}^h$, written as
\begin{equation}
\bs{u}^*(\theta) = \bs{u}^H_h + \theta \bs{e}^h, \quad \theta \in [0,1].
\end{equation}
Finding the appropriate value of $\theta$, and thus $\bs{u}^*$, amounts to
solving the nonlinear scalar equation
\begin{equation}
f(\theta) := \error^h - \left[ \frac{\partial \qoi^h}{\partial \bs{u}^h}
\biggr|_{\bs{u}^*(\theta)} \right] \bs{e}^h = 0.
\label{eq:theta_nonlinear}
\end{equation}
We solve this problem with Newton's method, where the iterations 
\begin{equation}
\theta_{i+1} = \theta_i - \frac{f(\theta_i)}{f'(\theta_i)},
\end{equation}
are performed until $|f(\theta)| < \text{TOL}$ is satisfied. Here, $\text{TOL}$
denotes some user-specified convergence tolerance and $f'$ denotes the
derivative of the function $f$ with respect to $\theta$, which can be expressed
using the chain rule as
\begin{equation}
f'(\theta) = -\left[ \bs{e}^h \right]^T \left[ \frac{\partial^2 \qoi^h}{
\partial \bs{u}^h \partial \bs{u}^h} \biggr|_{\bs{u}^*(\theta)} \right]
\bs{e}^h.
\end{equation}
Here, the cost of solving the nonlinear scalar problem~\eqref{eq:theta_nonlinear} 
involves only assembly and does not require matrix inversion.

Unfortunately, no direct analogue to the mean-value theorem exists for
vector-valued functions~\cite{nocedal1999numerical}. Thus, the residual
linearization error $\bs{E}^{\residop}_L$ must be accounted for in a
different manner. Given the vector $\bs{u}^*$, we introduce the
\emph{modified adjoint problem}:
\begin{equation}
\left[ \frac{\partial \bs{R}^h}{\partial \bs{u}^h} \biggr|_{\bs{u}^H_h}
\right]^T \bs{z}^* = \left[ \frac{\partial \qoi^h}{\partial \bs{u}^h}
\biggr|_{\bs{u}^*} \right]^T.
\label{eq:modified_adjoint}
\end{equation}
Solving for the discretization error $\bs{e}^h$ in the residual Taylor
expansion~\eqref{eq:resid_taylor}, substituting the resulting expression
for $\bs{e}^h$ into the modified QoI Taylor expansion~\eqref{eq:qoi_taylor_exact}, 
and using the solution $\bs{z}^*$ to the modified 
adjoint~\eqref{eq:modified_adjoint} problem yields the exact error representation
\begin{equation}
\error^h = - \bs{z}^* \cdot \left( \bs{R}^h(\bs{u}^H_h) + \bs{E}^{\residop}_L
\right).
\end{equation}

For the purposes of mesh adaptivity, it is desirable to express the
error only in terms of the residual scaled by a weighting vector. To this end,
we endeavor to find a modified vector $\bs{z}^{**}$ that satisfies
\begin{equation}
\bs{z}^{**} \cdot \bs{R}^h(\bs{u}^H_h) = \bs{z}^* \cdot \left(
\bs{R}^h(\bs{u}^H_h) + \bs{E}^{\residop}_L \right).
\end{equation}
The exact least-squares solution to the above equality is
\begin{equation}
\bs{z}^{**} = \bs{z}^* + \frac{\bs{z}^* \cdot \bs{E}^{\residop}_L}{
\bs{R}^h(\bs{u}^H_h) \cdot \bs{R}^h(\bs{u}^H_h)} \bs{R}^h(\bs{u}^H_h),
\end{equation}
and the QoI discretization between the coarse and fine space can be
exactly expressed as
\begin{equation}
\error^h = \eta_2 := -\bs{z}^{**} \cdot \bs{R}^h(\bs{u}^H_h).
\label{eq:modified_adjoint_weighted_residual}
\end{equation}

Unlike the traditional adjoint-weighted residual error estimate $\eta_1$, the
estimate $\eta_2$ accounts for the contributions to the error that arise
from the residual and QoI linearization errors, $\bs{E}^{\residop}_L$ and
$\error^{\qoi}_L$, respectively. Previous techniques that include higher-order
effects in goal-oriented error estimates have been derived in a continuous adjoint
setting and include the strong form of the adjoint residual in their 
evaluation~\cite{becker2001optimal, csimcsek2015duality, endtmayer2020two,
dolejsi2023goaloriented}, a term which can be difficult to compute or even
derive~\cite{richter2015variational}. In the present discrete adjoint context,
the estimate $\eta_2$ includes higher-order terms without requiring an
evaluation of the adjoint residual.

The evaluation of $\eta_2$ is not without practical downsides, however, as it
requires the solution of the nonlinear primal problem on the fine 
space~\eqref{eq:primal_fine} to evaluate the term $\bs{e}^h$, which is present
in the nonlinear scalar equation~\eqref{eq:theta_nonlinear} and in the
evaluation of the residual linearization error~\eqref{eq:resid_taylor}.
This is a computationally expensive proposition. Similarly to
Nochetto et al.~\cite{nochetto2009safeguarded}, we propose that this method
can be used to safeguard traditional adjoint-weighted residual methods
at coarse mesh resolutions, where the solution of the fine space nonlinear
problem remains tractable. At such mesh resolutions, the traditional
adjoint-weighted residual estimate $\eta_1$ may severely under-predict
the QoI discretization error and cause an iterative adaptive algorithm to
terminate prematurely. We demonstrate the under-prediction behavior of $\eta_1$
in Section~\ref{sec:results} and have provided a potential explanation
for this behavior for nonlinear QoIs in Section~\ref{ssec:motivation}.

In fact, it is natural to question why it is even necessary to evaluate the term
$\bs{z}^{**}$ when the QoI discretization error $\error^h$ can be found exactly
when the coarse space solution $\bs{u}^H$ and fine space solution $\bs{u}^h$
are known. The answer, analogously to the traditional adjoint-weighted residual
approach, is that the term $\bs{z}^{**}$ provides \emph{local} information
about the spatial distribution of the QoI discretization error which can
be used for the purposes of mesh adaptivity. In particular, the term
$\bs{z}^{**}$ also provides local information about the residual and
QoI linearization errors $\bs{E}^{\residop}_L$ and $\error^{\qoi}_L$, 
respectively, whereas the traditional adjoint solution $\bs{z}^h$ does not.
As an avenue for future work, one could attempt to approximate the term
$\bs{e}^h$ with patch recovery techniques~\cite{zienkiewicz1992superconvergent,
wiberg1994enhanced, rodenas2007improvement, gonzalez2014mesh} to reduce
the computational cost of the newly proposed estimate.

Lastly, we remark that while the estimate $\eta_2$ exactly represents the QoI
discretization error $\error^h$ between the coarse and the fine spaces, there
is a remaining error term $\qoi(u) - \qoi^h(\bs{u}^h)$, such that the exact
QoI discretization is given as $\error = \eta_2 + \qoi(u) - \qoi^h(\bs{u}^h)$.
Naturally, this remainder term is, in general, unknowable. However, as
another avenue for future work, one could envision using traditional
functional analysis techniques found in the goal-oriented error estimation
literature to bound this remaining term and combine its effects with the
estimate $\eta_2$.

\subsection{Quadratic Quantities of Interest}
\label{ssec:estimate1_quadratic}

\begin{prop}
If $\qoi(u)$ is a quadratic functional, then the solution to the 
problem~\eqref{eq:theta_nonlinear} is $\theta = \nicefrac12$.
\end{prop}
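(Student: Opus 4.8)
The plan is to exploit the fact that a quadratic functional has a Taylor expansion about any point that terminates exactly at second order, with a Hessian that is independent of the expansion point. First I would introduce shorthand for the gradient and Hessian of $\qoi^h$ at the prolonged coarse solution, writing $\bs{g} := \bigl[ \partial \qoi^h / \partial \bs{u}^h \bigr]_{\bs{u}^H_h}$ and $\bs{H} := \partial^2 \qoi^h / \partial \bs{u}^h \partial \bs{u}^h$, where $\bs{H}$ is constant precisely because $\qoi$ is quadratic. Since the expansion of $\qoi^h(\bs{u}^h) = \qoi^h(\bs{u}^H_h + \bs{e}^h)$ is exact with all third- and higher-order derivatives vanishing, the discrete QoI error admits the closed form
\[
\error^h = \bs{g} \cdot \bs{e}^h + \tfrac12 \left[ \bs{e}^h \right]^T \bs{H} \, \bs{e}^h .
\]

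Next I would evaluate the gradient appearing in $f(\theta)$ at the intermediate point $\bs{u}^*(\theta) = \bs{u}^H_h + \theta \bs{e}^h$. Because the gradient of a quadratic is affine in its argument, contracting it against $\bs{e}^h$ gives
\[
\left[ \frac{\partial \qoi^h}{\partial \bs{u}^h} \biggr|_{\bs{u}^*(\theta)} \right] \bs{e}^h = \bs{g} \cdot \bs{e}^h + \theta \left[ \bs{e}^h \right]^T \bs{H} \, \bs{e}^h .
\]
Substituting this together with the closed form for $\error^h$ into the definition~\eqref{eq:theta_nonlinear} of $f$, the linear term $\bs{g} \cdot \bs{e}^h$ cancels and I am left with $f(\theta) = \left( \tfrac12 - \theta \right) \left[ \bs{e}^h \right]^T \bs{H} \, \bs{e}^h$. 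The root of this expression is $\theta = \tfrac12$, which establishes the claim.

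I do not expect a serious obstacle, as the argument is a direct computation once the exactness of the quadratic expansion is invoked; indeed $f$ is affine in $\theta$ here, so Newton's method from any starting point would converge in a single step. The only point meriting care is the degenerate case $\left[ \bs{e}^h \right]^T \bs{H} \, \bs{e}^h = 0$, in which $f \equiv 0$ and every $\theta \in [0,1]$ is a root; there the QoI linearization error already vanishes, so $\theta = \tfrac12$ remains a valid solution and the statement still holds. I would close by noting that this result makes the heuristic of Section~\ref{ssec:motivation} precise: for a quadratic QoI the mean-value point at which the linearized expansion becomes exact is exactly the midpoint of the linear path between $\bs{u}^H_h$ and $\bs{u}^h$.
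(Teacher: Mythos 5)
Your proof is correct, and while it rests on the same two facts as the paper's --- the Taylor expansion of a quadratic functional terminates exactly at second order, and its gradient is affine so the Hessian is constant --- the execution is genuinely different and, in one respect, tighter. The paper equates the mean-value form with the exact expansion at the level of gradient operators, argues that the resulting identity must hold for \emph{arbitrary} $\bs{e}^h$ and arbitrary quadratic QoIs, and thereby identifies $\bs{u}^*$ itself as the midpoint $\tfrac12(\bs{u}^H_h + \bs{u}^h)$. You instead contract everything against the particular $\bs{e}^h$ at hand and compute $f(\theta) = \left(\tfrac12 - \theta\right)\left[\bs{e}^h\right]^T \bs{H}\, \bs{e}^h$ in closed form, reading off the root directly. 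This buys two things: you never need the ``arbitrary $\bs{e}^h$'' step, which is the loosest link in the paper's argument (the defining equation~\eqref{eq:theta_nonlinear} only involves the one fixed $\bs{e}^h$, so quantifying over all error vectors is an unnecessary strengthening of what must be shown); and you explicitly dispose of the degenerate case $\left[\bs{e}^h\right]^T \bs{H}\, \bs{e}^h = 0$, where $f \equiv 0$ and $\theta = \tfrac12$ is merely one of a continuum of roots --- a case the paper's proof silently passes over. What the paper's formulation buys in exchange is the slightly stronger geometric statement that the mean-value point $\bs{u}^*$ is the midpoint of the path independently of the direction $\bs{e}^h$, which is what motivates their recommendation of $\theta_0 = \nicefrac12$ as an initial guess for general nonlinear QoIs. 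Your closing observation that $f$ is affine in $\theta$ for quadratic QoIs, so Newton converges in one step from any starting point, is a nice addition not present in the paper.
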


\begin{proof}

Let $\qoi(u)$ denote a quadratic functional. We begin by equating the
expression for $\error^h$ obtained from the Taylor 
expansion~\eqref{eq:qoi_taylor_exact} with the exact Taylor expansion for a quadratic
functional expanded about the prolonged coarse solution, yielding
\begin{equation}
\left[ \frac{\partial \qoi^h}{\partial \bs{u}^h} \biggr|_{\bs{u}^*} \right]
\bs{e}^h =
\left[ \frac{\partial \qoi^h}{\partial \bs{u}^h} \biggr|_{\bs{u}^H_h} \right]
\bs{e}^h +
\frac12 \left[ \bs{e}^h \right]^T
\left[ \frac{\partial^2 \qoi^h}{\partial \bs{u}^h \partial \bs{u}^h} \right]
\bs{e}^h.
\end{equation}
Because the QoI is quadratic, its first derivative is a linear operator acting
on the solution state and its second derivative is constant, independent of
the solution state. The above equality must hold for arbitrary discretization
errors $\bs{e}^h$, with 
\begin{equation}
\left[ \frac{\partial \qoi^h}{\partial \bs{u}^h} \biggr|_{\bs{u}^*} \right]
= \left[ \frac{\partial \qoi^h}{\partial \bs{u}^h} \biggr|_{\bs{u}^H_h}
\right] +
\frac12 \left[ \bs{e}^h \right]^T
\left[ \frac{\partial^2 \qoi^h}{\partial \bs{u}^h \partial \bs{u}^h} \right].
\end{equation}
Due to the linearity of the QoI's first derivative, we can write
\begin{equation}
\left[ \frac{\partial \qoi^h}{\partial \bs{u}^h}
\biggr|_{\bs{u}^* - \bs{u}^H_h} \right]
= \frac12 \left[ \bs{e}^h \right]^T
\left[ \frac{\partial^2 \qoi^h}{\partial \bs{u}^h \partial \bs{u}^h} \right],
\end{equation}
and because the QoI is quadratic we can write
\begin{equation}
\left( \bs{u}^* - \bs{u}^H_h \right)^T
\left[ \frac{\partial^2 \qoi}{\partial \bs{u}^h \partial \bs{u}^h}
\right]
= \frac12 \left[ \bs{e}^h \right]^T
\left[ \frac{\partial^2 \qoi}{\partial \bs{u}^h \partial \bs{u}^h} \right].
\end{equation}
Using the definition of $\bs{e}^h$ and because the above expansions hold for
arbitrary quadratic QoIs, we obtain
\begin{equation}
\bs{u}^* = \frac12 \left( \bs{u}^H_h + \bs{u}^h \right).
\end{equation}
From the definition of $\bs{u}^*$, the result $\theta = \nicefrac12$ follows
immediately.
\end{proof}

This result allows us to entirely circumvent a nonlinear solve when a
quadratic QoI is chosen, and motivates an initial guess of $\theta_0 = \nicefrac12$
for Newton's method when solving~\eqref{eq:theta_nonlinear}.

\section{Adjoint Verification}
\label{sec:verification}

One surprising benefit of studying linearization errors manifests itself
in the realm of verification. For complex nonlinear physics, it is natural to
question whether the implementation of the adjoint 
problem~\eqref{eq:adjoint_fine} leads to the correct adjoint solution $\bs{z}^h$.
Answering this question with traditional verification techniques may be a
burdensome and non-trivial task. It turns out, however, that exactly solving
for the residual linearization error $\bs{E}^{\residop}_L$ provides a
convenient and straightforward mechanism for adjoint verification.

This verification procedure can be described as follows. Assume that we have
the machinery to compute every term on the right-hand side of 
equation~\eqref{eq:resid_taylor}, so that we can solve for the residual linearization
error $\bs{E}^R_L$ exactly. Additionally, assume that we have the machinery to
solve the adjoint problem~\eqref{eq:adjoint_fine} on the fine space for
$\bs{z}^h$. Choose $\qoi(u)$ to be a \emph{linear} functional, such that
$\error^{\qoi}_L = 0$.  We can exactly express the error $\error^h$ in this
context as a function of the adjoint solution $\bs{z}^h$:
\begin{equation}
\begin{aligned}
\error^h &:= \qoi^h(\bs{u}^h) - \qoi^h(\bs{u}^H_h), \\
&= -\bs{z}^h \cdot \left(\bs{R}^h(\bs{u}^H_h)+\bs{E}^{\residop}_L \right), \\
&= \eta_1 + \eta^{\residop}_L,
\end{aligned}
\label{eq:verify}
\end{equation}
where $\eta^{\residop}_L := -\bs{z}^h \cdot \bs{E}^{\residop}_L$. As a
verification check, we ensure that the effectivity $\mathcal{I}^v$, defined as
\begin{equation}
\mathcal{I}^v := \frac{\eta_1 + \eta^{\residop}_L}{\error^h}
\label{eq:verify_effectivity}
\end{equation}
evaluates to 1. Here we make several remarks. First, the evaluation of
$\bs{E}^{\residop}_L$ from the Taylor expansion~\eqref{eq:resid_taylor}
involves only data obtained from the primal problem. Thus, if one has
confidence in the correctness of the primal problem implementation, then the
linearization error is accurately obtained through linear algebra.
Second, the denominator of the effectivity~\eqref{eq:verify_effectivity}
contains only data from the primal problem and is independent of the adjoint
solution. This implies that $\mathcal{I}^v = 1$ is a true verification check
in the sense that it relates the adjoint solution to a computable value that
does not involve the adjoint implementation.

For codes that were not designed with two-level error estimation in mind but
that still have an adjoint capability, the above verification procedure can be
performed using $h$-enrichment for the fine space. In this context, the primal
problem can readily be solved on the coarse and fine mesh. The only source of
additional development effort stems from the prolongation of the coarse
solution to the fine mesh and the evaluation of the terms in 
equation~\eqref{eq:resid_taylor}. Depending on software design, the latter step may
not be trivial, but seems well worth the effort for the verification benefit it
provides.

\section{Mesh Adaptivity}
\label{sec:adapt}

\subsection{Error Localization}
\label{ssec:localization}

To control discretization errors, it is necessary to \emph{localize} the error
estimate into contributions to the error at the mesh entity level. These
contributions are often called \emph{correction indicators}. Both estimates,
$\eta_1$ and $\eta_2$, take the form of the PDE residual weighted by an
adjoint vector. Because the original variational problem~\eqref{eq:fem_form}
is posed with a Galerkin finite element method, Galerkin orthogonality
allows us to write equivalent expressions for the estimates $\eta_1$ and
$\eta_2$ as
\begin{equation}
\eta_1 = -\left( \bs{z}^h - \bs{z}^h_H \right) \cdot
\bs{R}^h(\bs{u}^H_h),
\label{eq:eta1_subtracted}
\end{equation}
and
\begin{equation}
\eta_2 = -\left( \bs{z}^{**} - \bs{z}^{**}_H \right ) \cdot
\bs{R}^h(\bs{u}^H_h),
\label{eq:eta2_subtracted}
\end{equation}
respectively. Here $\bs{z}^h_H$ and $\bs{z}^{**}_H$ denote the restriction
via interpolation of the adjoint solutions $\bs{z}^h$ and $\bs{z}^{**}$
onto the coarse space $\fspace^H$, respectively. The subtraction of the
coarse space representation of the adjoint solution is common in the
literature and is done, in part, so that the residual is weighted
by a term that indicates where the adjoint is \emph{poorly approximated}
rather than by a term that indicates where the adjoint is large in 
value~\cite{kast2017introduction}.

A common localization approach in finite volume and discontinuous Galerkin 
methods considers decomposing the adjoint-weighted residual into a sum of
absolute values of element-level adjoint-weighted residuals, for instance as
$\eta_1 \leq \sum_{e=1}^{n_{el}} \left| \left(-(\bs{z}^h - \bs{z}^h_H) \cdot
\bs{R}^h(\bs{u}^H_h) \right) \bigr|_{\Omega_e} \right|$, where the absolute
value term at each element is used as the correction indicator. This approach,
however, may not be optimal for continuous Galerkin finite element methods as it
does not account for inter-element cancellation, where the sum of the resultant
indicators may vastly overestimate the error and, in turn, lead to a suboptimal
adaptive strategy~\cite{fidkowski2011review}.

Presently, we make use of a variational technique for localization developed
by Richter and Wick~\cite{richter2015variational}, which expresses the
estimates~\eqref{eq:eta1_subtracted} and \eqref{eq:eta2_subtracted}
analogously in variational form and achieves localization by inserting a
partition of unity $\sum_i \phi^i = 1$ into the weighting function slot of
the variational residual $\residop$, to be defined. We take this partition of
unity to be linear Lagrange basis functions which results in the error being
localized to the $n_{vtx}$ mesh vertices. Concretely, this results in
localized error contributions $\eta^i_1$ and $\eta^i_2$, $i=1,2,\dots,n_{vtx}$
at mesh vertices of the form
\begin{equation}
\eta_1 = \sum_{i=1}^{n_{vtx}}
\underbrace{-\mathcal{R}(u^H; (z^h - z^h_H) \phi^i)}_{\eta^i_1}
\label{eq:eta1_localized}
\end{equation}
and
\begin{equation}
\eta_2 = \sum_{i=1}^{n_{vtx}}
\underbrace{-\mathcal{R}(u^H; (z^{**} - z^{**}_H) \phi^i)}_{\eta^i_2},
\label{eq:eta2_localized}
\end{equation}
where $\mathcal{R}: \fspace \times \fspace \to \reals$ denotes the variational
residual $\mathcal{R}(u,w) := \rhsop(w) - \lhsop(u;w)$ and $u^H, z^h, z^h_H,
z^{**}, z^{**}_H$ denote the function counterparts of the nodal coefficient
vectors $\bs{u}^H, \bs{z}^h, \bs{z}^h_H, \bs{z}^{**}$, and $\bs{z}^{**}_H$,
respectively, as determined by the function spaces $\fspace^H$ and $\fspace^h$.
As a final step, to compute an element level correction indicator $\eta^e$,
we interpolate the value of the vertex-based error contribution $\eta^i$ to
element centers and take its absolute value.

Here we remark that a primary motivation in the localization of the estimate
$\eta_2$ is to additionally localize the effects of the linearization errors
$\error^{\qoi}_L$ and $\bs{E}^{\residop}_L$ during the mesh adaptation process.
That is, the localized indicator $\eta^i_2$ will also indicate regions
in the domain where the linearization errors are large. By accurately
resolving the mesh in such regions, we endeavor to obtain more optimal meshes
than those obtained from the use of the traditional adjoint-weighted residual
estimate $\eta_1$.

\subsection{Mesh Size Field}
\label{ssec:size_field}

Once element-level correction indicators $\eta^e$, $e=1,2,\dots,n_{el}$
have been computed, they must be used to modify the finite element mesh.
Naturally, we would like to refine regions of the mesh that the indicators
$\eta^e$ suggest contribute strongly to the QoI discretization error and, if
applicable, coarsen regions of the mesh that do not greatly affect the QoI
discretization error. To this end, we use isotropic \emph{conformal} mesh
adaptation driven by a \emph{mesh size field}, which produces an
adapted mesh without hanging nodes. This process achieves refinement and
coarsening via a series of edge splits, swaps, and collapses~\cite{li20053d,
alauzet2006parallel}. The mesh size field indicates the desired characteristic
element size in the adapted mesh. We choose a mesh size field that attempts
to equidistribute the error in the resultant mesh given a target $T$ number of
elements, as described by Bousetta et al.~\cite{boussetta2006adaptive}.
Here, the size field is found by assuming that the error is uniformly
distributed in the resultant adapted mesh and comparing the ratio
of the element contribution to the error in the current mesh to the
desired adapted mesh as related by expected convergence rates. This leads to an
optimization problem that can be solved analytically for the new mesh size.
Let $p$ denote the polynomial interpolant order of the function space
$\fspace^H$ in $d$ spatial dimensions. Let $|\Omega^e_H| :=
\text{meas}(\Omega^e)$ denote the characteristic mesh
size of element $\Omega^e$. The desired new element size $|\Omega^e_H|^{
\text{new}}$ can then be expressed as
\begin{equation}
\left| \Omega^e_H \right|^{\text{new}} = \left( T^{-1} \sum_{e=1}^{n_{el}}
\left( \eta^e \right)^{\frac{2d}{2p + d}} \right)^{\frac{1}{d}}
\left( \eta^e \right)^{\frac{-2}{2p + d}} \left| \Omega^e_H \right|.
\label{eq:size_field}
\end{equation}
As a final consideration we clamp the value of the new element size to be
no less than half of the current element size and no greater than twice
the current element size, such that $\nicefrac12 |\Omega^e_H| \leq
|\Omega^e_H|^{\text{new}} \leq 2 | \Omega^e_H|$.

\section{Results}
\label{sec:results}

\subsection{A Nonlinear Poisson's Problem}
\label{ssec:model_problem}

As a model problem, we consider a nonlinear Poisson's equation, written as
\begin{equation}
\begin{cases}
\begin{aligned}
-\nabla \cdot [(1+\alpha u^2) \nabla u] &= f,
&& \text{in} \; \Omega, \\
u &= 0, && \text{on} \; \Gamma. \\
\end{aligned}
\end{cases}
\label{eq:strong_form}
\end{equation}
where $\alpha \in \reals_{\geq 0}$ denotes a scaling parameter that controls
the degree of nonlinearity. The forcing function $f$ is assumed
to be sufficiently smooth, such that $f \in L^2(\Omega)$.
Let $\fspace$ denote the Sobolev space $H^1(\Omega)$ whose trace vanishes on
the boundary $\Gamma$, such that $\fspace := \left\{ v \in H^1(\Omega) :
v = 0 \; \text{on} \; \Gamma \right\}$. Placing the model problem in weak
form yields the semilinear form
\begin{equation}
\lhsop(u; w) := \int_{\Omega} (1 + \alpha u^2) \nabla u \cdot \nabla w \,
\text{d} \Omega,
\label{eq:lhs}
\end{equation}
and the linear functional
\begin{equation}
\rhsop(w) := \int_{\Omega} f w \, \text{d}\Omega.
\end{equation}
Let $\fspace^H$ and $\fspace^h$ denote finite dimensional function spaces
defined as
\begin{equation}
\fspace^H := \left\{ u^H : u^H \in \fspace, \, u^H |_{\Omega^e} \in
\mathbb{P}^1(\Omega)^{\dims} \right\},
\end{equation}
and
\begin{equation}
\fspace^h := \left\{ u^h : u^h \in \fspace, \, u^h |_{\Omega^e} \in
\mathbb{P}^2(\Omega)^{\dims} \right\},
\end{equation}
respectively, where $\mathbb{P}^p$ denotes the space of piecewise polynomials
of order $p$ over the elements $\Omega^e$, $e=1,2,\dots,n_{el}$.

For all results below, we consider a square geometry with a square hole,
such that $\Omega := \{ \bs{x} : \bs{x} \in (-1,1) \times (-1,1) \setminus 
[-\nicefrac12,\nicefrac12] \times [-\nicefrac12, \nicefrac12] \}$,
discretized with triangular elements. To eliminate or drastically reduce
additional sources of numerical errors, all results use the following
operations: requisite integral evaluations are computed using a $6^{th}$
order, 12 point Gaussian quadrature rule over triangles, nonlinear
systems are solved using an absolute tolerance of $\text{TOL}=10^{-10}$,
linear systems are solved with GMRES using an absolute tolerance
of $10^{-12}$, and derivative quantities (the residual Jacobian, the residual
Jacobian adjoint, the QoI gradient, and the QoI Hessian) are computed with
automatic differentiation.

We consider the following four quantities of interest:
\begin{equation}
\begin{aligned}
\qoi_1(u) &:= \int_{\Omega} u \, \text{d} \Omega, \\
\qoi_2(u) &:= \int_{\Omega_s} u^3 \, \text{d} \Omega, \\
\qoi_3(u) &:= \int_{\Omega_s} \nabla u \cdot \nabla u \, \text{d} \Omega, \\
\qoi_4(u) &:= \int_{\Omega_s} \sqrt{\nabla u \cdot \nabla u} \, \text{d} \Omega.
\end{aligned}
\end{equation}
Here, $\Omega_s := \{ \bs{x} : \bs{x} \in (0,1) \times (-1,0) \setminus
[0,\nicefrac12] \times [-\nicefrac12,0] \}$ denotes a subdomain of the overall
domain $\Omega$. Figure~\ref{fig:results_domain_and_mesh} illustrates the
domain $\Omega$, its initial mesh, and the definition of the subdomain
$\Omega_s$.
 
\begin{figure}[ht]
\centering
\includegraphics[width=0.4\textwidth]{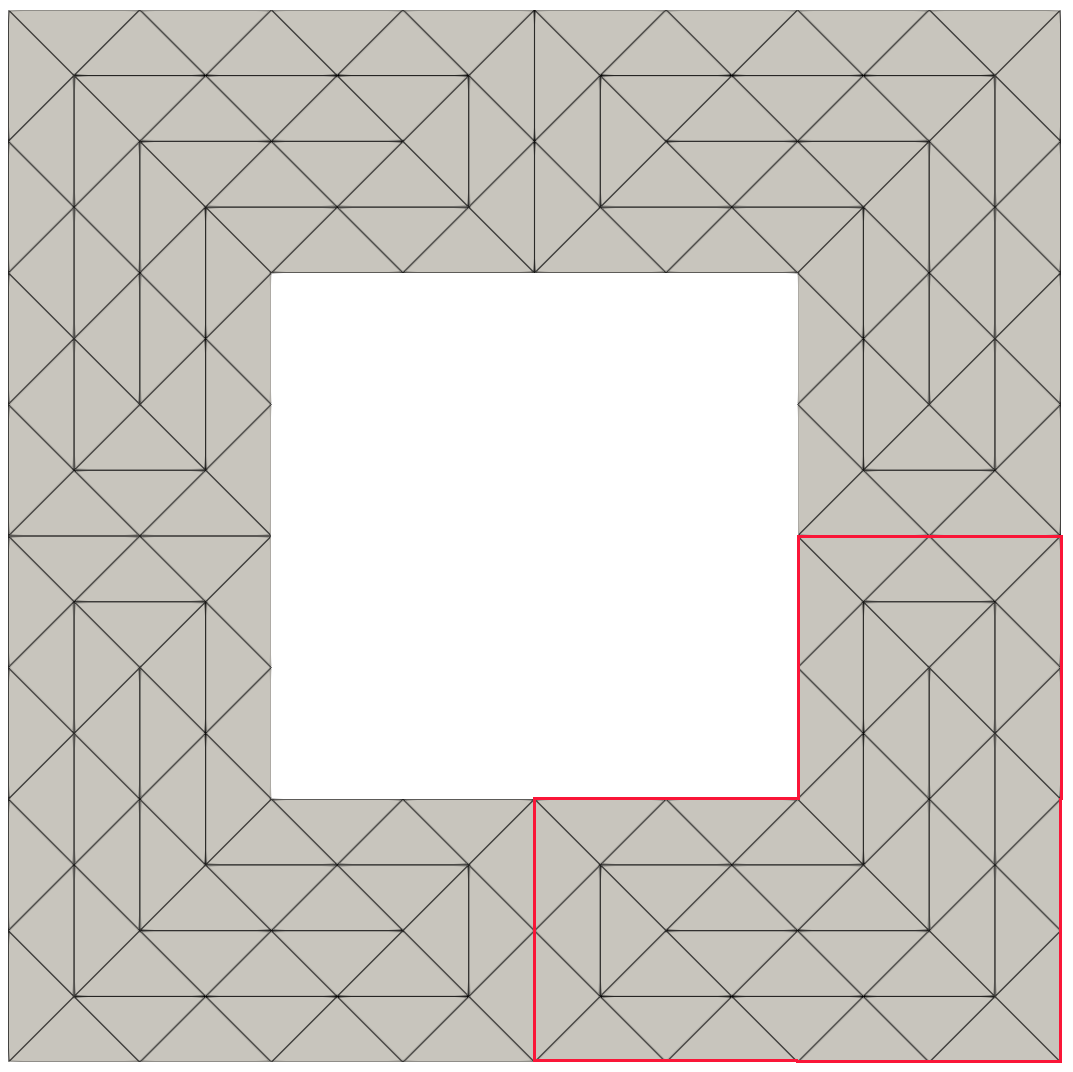}
\caption{The domain $\Omega$, its initial mesh, and the subdomain $\Omega_s$
(red outline).}
\label{fig:results_domain_and_mesh}
\end{figure}

\subsubsection{A Manufactured Solution}
\label{ssec:manufactured}

We first consider a manufactured solution on the domain $\Omega$ of the form:
\begin{equation}
u(x,y) = \sin(2 \pi x) \sin(2 \pi y) \exp(\nicefrac52 (x+y)),
\end{equation}
which satisfies the homogeneous Dirichlet boundary conditions.
The right-hand side forcing function $f$ is found by 
substituting the exact expression for $u$ into the original governing
equations~\eqref{eq:strong_form}. The solution and its gradient components are
shown in Figure~\ref{fig:manufactured_soln}. The numerical values of the
QoIs $\qoi_i, \; i=1,2,3,4,$ are given in~\ref{sec:qoi_values_manufactured}.

\begin{figure}[ht]
\centering
\begin{subfigure}{0.33\textwidth}
\centering
\includegraphics[width=.99\linewidth]{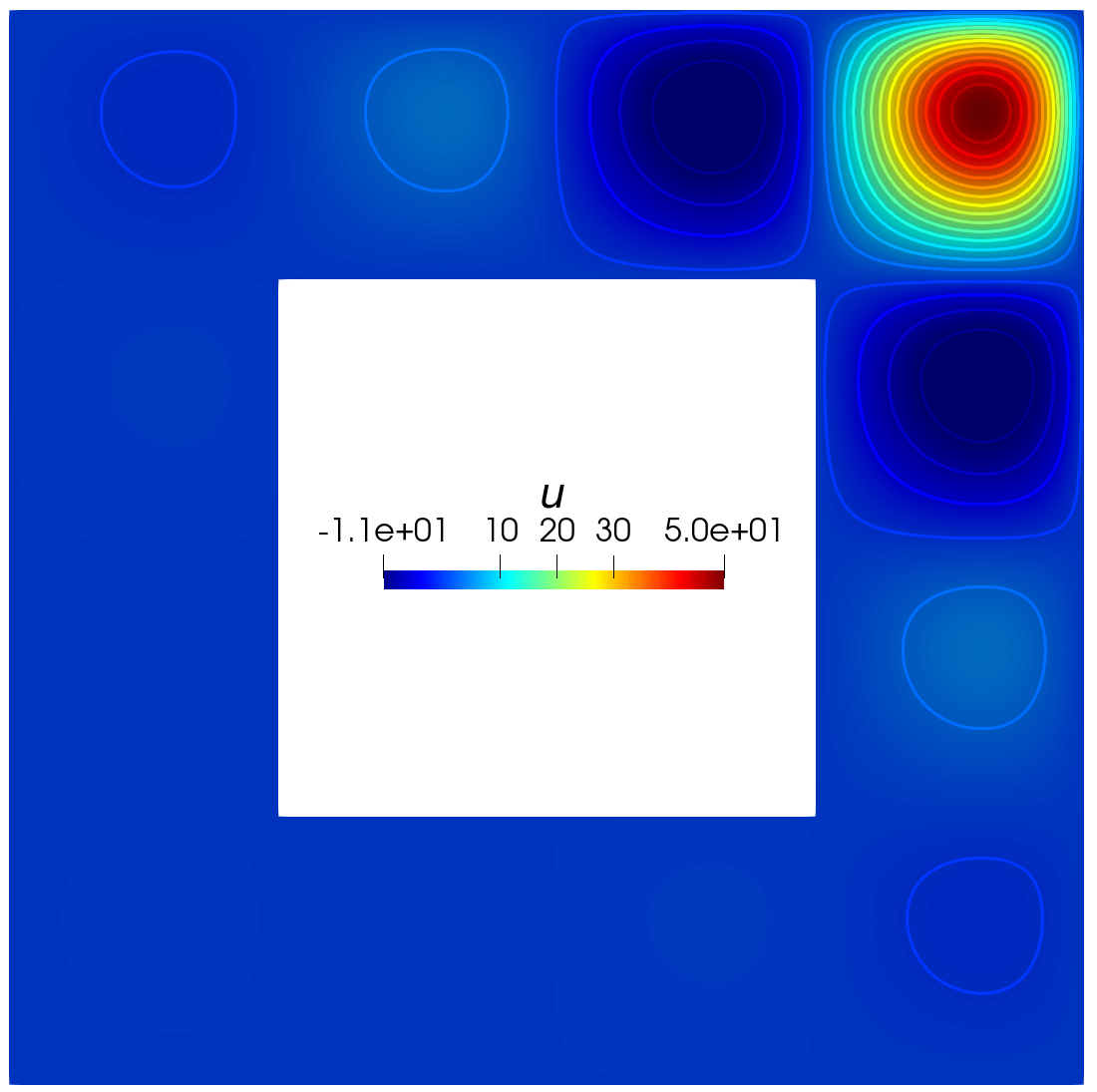}
\end{subfigure}%
\begin{subfigure}{0.33\textwidth}
\centering
\includegraphics[width=.99\linewidth]{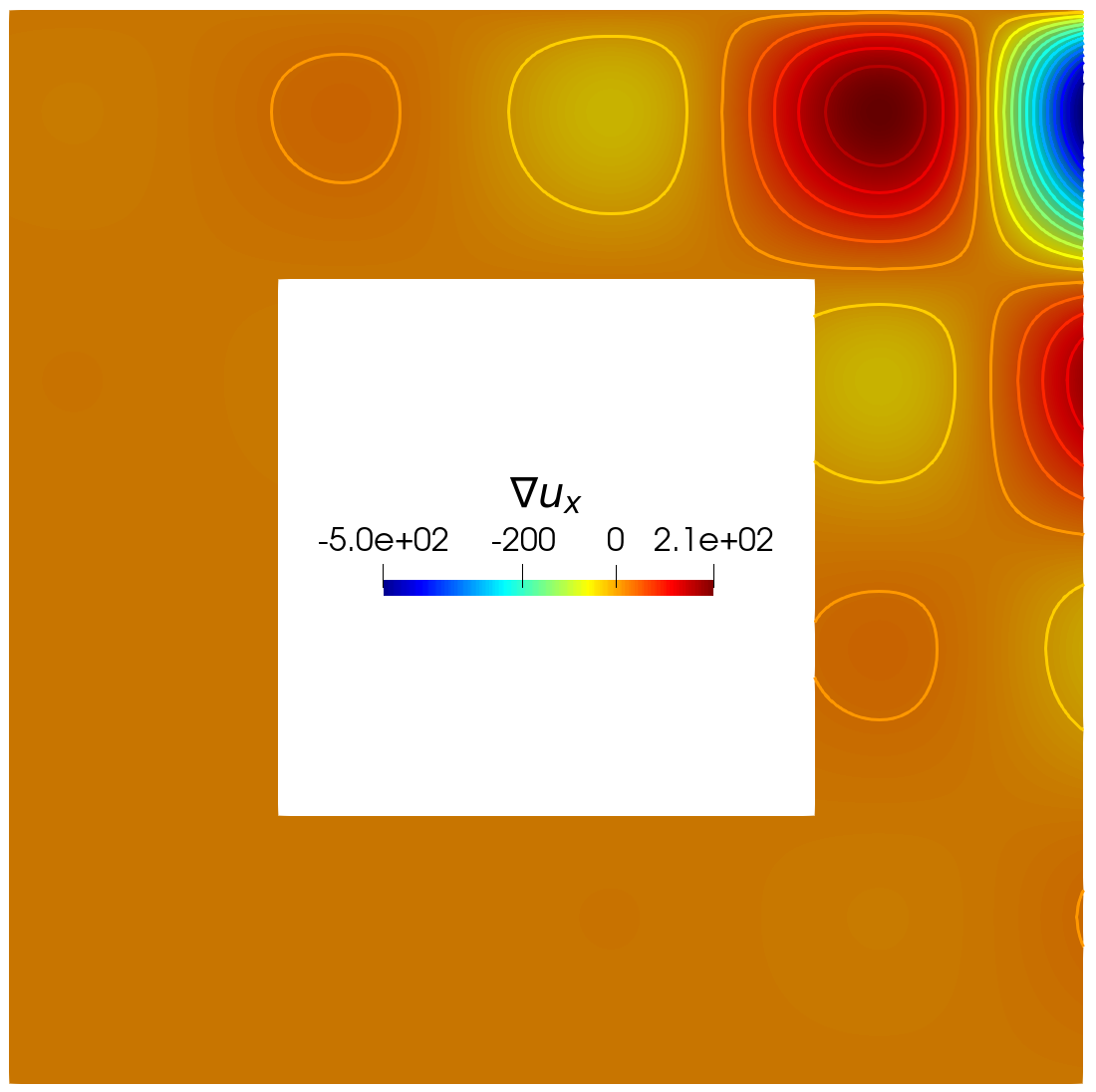}
\end{subfigure}%
\begin{subfigure}{0.33\textwidth}
\centering
\includegraphics[width=.99\linewidth]{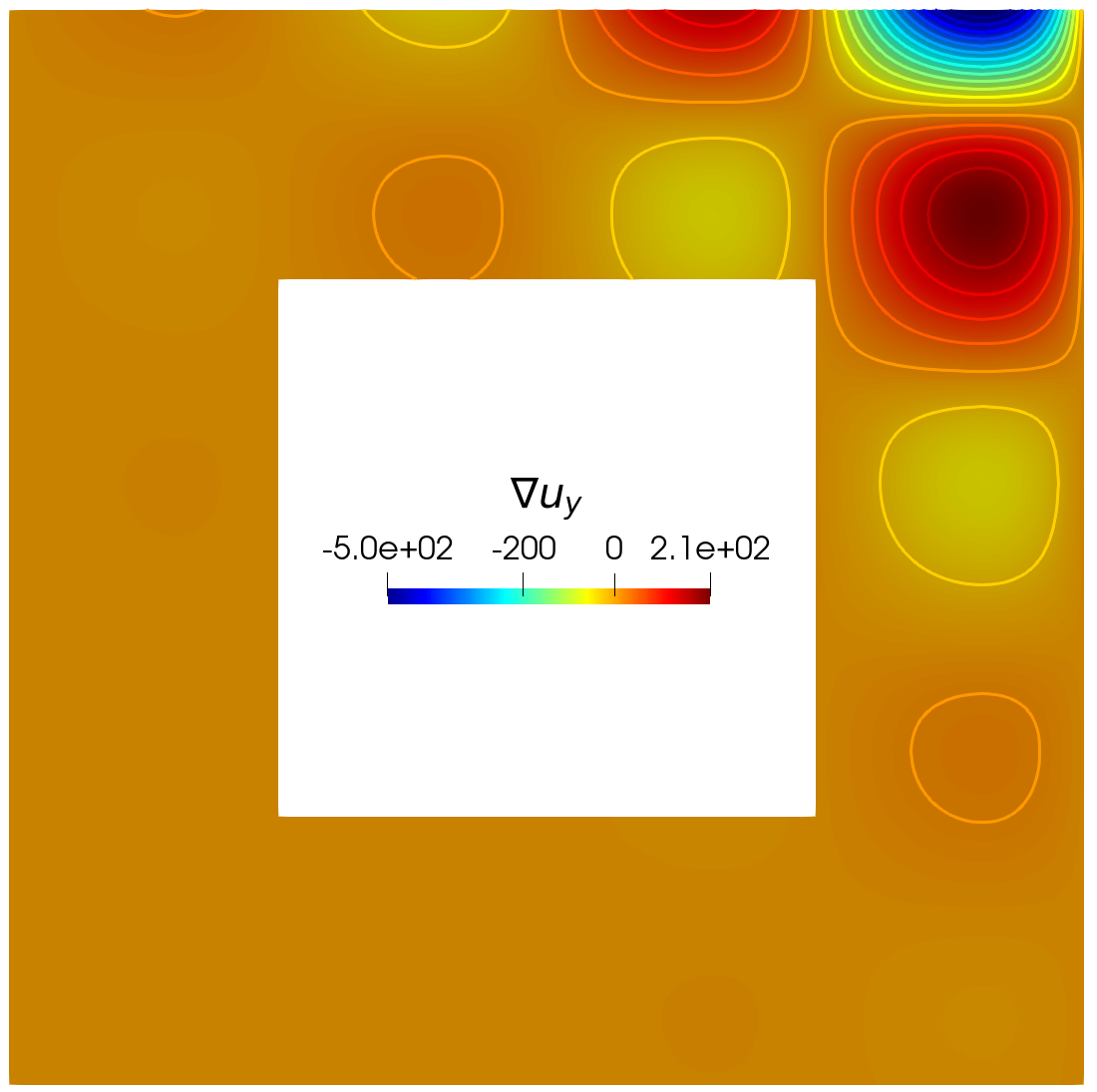}
\end{subfigure}
\caption{The manufactured solution (left), the $x$-component of the
manufactured solution gradient (center), and the $y$-component of the
manufactured solution gradient (right) with $20$ linearly-spaced contour
lines.}
\label{fig:manufactured_soln}
\end{figure}

\subsubsection{Adjoint Verification}

As an initial test with the manufactured solution, we consider the mesh with
$192$ elements shown in Figure~\ref{fig:results_domain_and_mesh}. We then
consider the following values of $\alpha \in \left\{ 0, 10^{-4}, 10^{-3}, 10^{-2},
10^{-1} \right\}$, where $\alpha=0$ corresponds to a linear PDE and increasing
values of $\alpha$ correspond to a greater degree of nonlinearity in the
PDE. We solve the primal problem on the coarse and fine spaces, $\fspace^H$ and
$\fspace^h$. For context, Newton's method takes $0,4,7,11,17$ iterations
to solve the primal problem on both spaces for the corresponding increasing
values of $\alpha$. Using the coarse and fine space primal solutions,
we compute the quantities $\error^h$,
$\eta_1$ and $\eta^{\residop}_L$, as required for the verification procedure
outlined in Section~\ref{sec:verification}. We consider the linear quantity of
interest $\qoi_1(u)$ and demonstrate that the adjoint solution $\bs{z}^h$
passes the verification check, as shown in the right-most column of
Table~\ref{tab:qoi1_verify}. Additionally, the second column of this table
illustrates that the norm of the residual linearization error
$\bs{E}^{\residop}_L$ is increasing with increasing values of $\alpha$.
Lastly, the sixth column of these tables illustrates that the traditional error
estimate $\eta_1$ does not entirely capture the QoI discretization error
$\error^h$ between the fine and coarse spaces due to the residual linearization
error $\bs{E}^{\residop}_L$. However, correcting the traditional estimate with
the term $\eta^{\residop}_L$ that corresponds to residual linearization errors
entirely recovers the error $\error^h$.

\begin{table}[ht!]
\begin{center}
\begin{tabular}{c || c | c | c | c | c | c}
$\alpha$ & $||\bs{E}^{\residop}_L||_2 $ & $\error^h$ & $\eta_1$ & $\eta^{\residop}_L$ &
$\nicefrac{\eta_1}{\error^h}$ &
$\nicefrac{(\eta_1+\eta^{\residop}_L)}{\error^h}$ \\ \hline \hline
$0$       & 1.8785e$-$13 & 3.2696e$-$01 & 3.2696e$-$01 &  1.6688e$-$15 & 1.0000e+00 & 1.0000e+00 \\ \hline 
$10^{-4}$ & 1.7321e+00 & 2.9114e$-$01 & 3.0420e$-$01 & -1.3061e$-$02 & 1.0449e+00 & 1.0000e+00 \\ \hline
$10^{-3}$ & 1.2996e+01 & 1.3244e$-$01 & 1.9139e$-$01 & -5.8949e$-$02 & 1.4451e+00 & 1.0000e+00 \\ \hline
$10^{-2}$ & 8.5432e+01 & 7.6441e$-$04 & 1.1158e$-$01 & -1.1082e$-$01 & 1.4597e+02 & 1.0000e+00 \\ \hline
$10^{-1}$ & 7.7580e+02 & 9.4845e$-$02 & 2.6463e$-$01 & -1.6978e$-$01 & 2.7901e+00 & 1.0000e+00
\end{tabular}
\end{center}
\caption{Adjoint verification data for the linear QoI $\qoi_1(u)$ for the
manufactured solution.}
\label{tab:qoi1_verify}
\end{table}

\subsubsection{Modified Adjoint Verification}
\label{ssec:second_adjoint_verification}

As a second test with the manufactured solution, we consider the mesh with
$192$ elements shown in Figure~\ref{fig:results_domain_and_mesh}, choose
$\alpha=10^{-2}$, and consider the nonlinear QoIs $\qoi_2(u), \qoi_3(u),$ and
$\qoi_4(u)$, so the residual and QoI linearization errors,
$\bs{E}^{\residop}_L$ and $\error^{\qoi}_L$, respectively, are nonzero. As a
verification check, we ensure the estimate $\eta_2$ exactly recovers the QoI
discretization error $\error^h$ between the coarse and fine space, such that
$\nicefrac{\eta_2}{\error^h} = 1$. This provides assurance that the value
$\bs{z}^{**}$ is computed correctly. Table~\ref{tab:qoi_nonlinear_verify} shows
results obtained from this verification check. In particular, the right-most
column illustrates that the traditional error estimate $\eta_1$ does not exactly
recover the error $\error^h$ in the presence of residual and linearization
errors. In contrast, the right-most column of the table demonstrates that the
newly proposed estimate $\eta_2$ does indeed exactly recover the error
$\error^h$, thus completing the verification check. Figure~\ref{fig:modified_adjoint_verify}
illustrates the adjoint solutions
$z := \lim_{H \to 0} \bs{z}^H$ for the QoIs $\qoi_2(u), \qoi_3(u)$ and
$\qoi_4(u)$. Lastly, Figure~\ref{fig:qoi2_f_theta} illustrates the form of the
function $f(\theta)$, as described by equation~\eqref{eq:theta_nonlinear}, for
the QoI $\qoi_2(u)$. Here we remark that even though the functional form of
$f(\theta)$ is quadratic in $\theta$, it remains very close to linear
behavior in the domain $\theta \in [0,1]$, and its endpoints are very nearly
equal and opposite. As a result, our initial guess of $\theta_0 = \nicefrac12$
for Newton's method seems well justified in this instance.

\begin{table}[ht!]
\begin{center}
\begin{tabular}{c || c | c | c | c | c }
 & $\error^h$ & $\eta_1$ & $\eta_2$ & $\nicefrac{\eta_1}{\error^h}$ &
$\nicefrac{\eta_2}{\error^h}$ \\ \hline \hline
$\qoi_2(u)$ & 1.1417e+00 &  8.7186e$-$01 & 1.1417e+00 &  7.6362e$-$01 & 1.0000e+00 \\ \hline
$\qoi_3(u)$ & 1.5043e+01 & -1.8944e+00 & 1.5043e+01 & -1.2593e$-$01 & 1.0000e+00 \\ \hline
$\qoi_4(u)$ & 3.3263e$-$01 & -9.4750e$-$02 & 3.3263e$-$01 & -2.8485e$-$01 & 1.0000e+00
\end{tabular}
\end{center}
\caption{Modified adjoint verification data for the nonlinear QoIs
$\qoi_2(u), \qoi_3(u), \qoi_4(u)$ for the manufactured solution
when $\alpha=10^{-2}$.}
\label{tab:qoi_nonlinear_verify}
\end{table}

\begin{figure}[ht!]
\centering
\begin{subfigure}{0.33\textwidth}
\centering
\includegraphics[width=.99\linewidth]{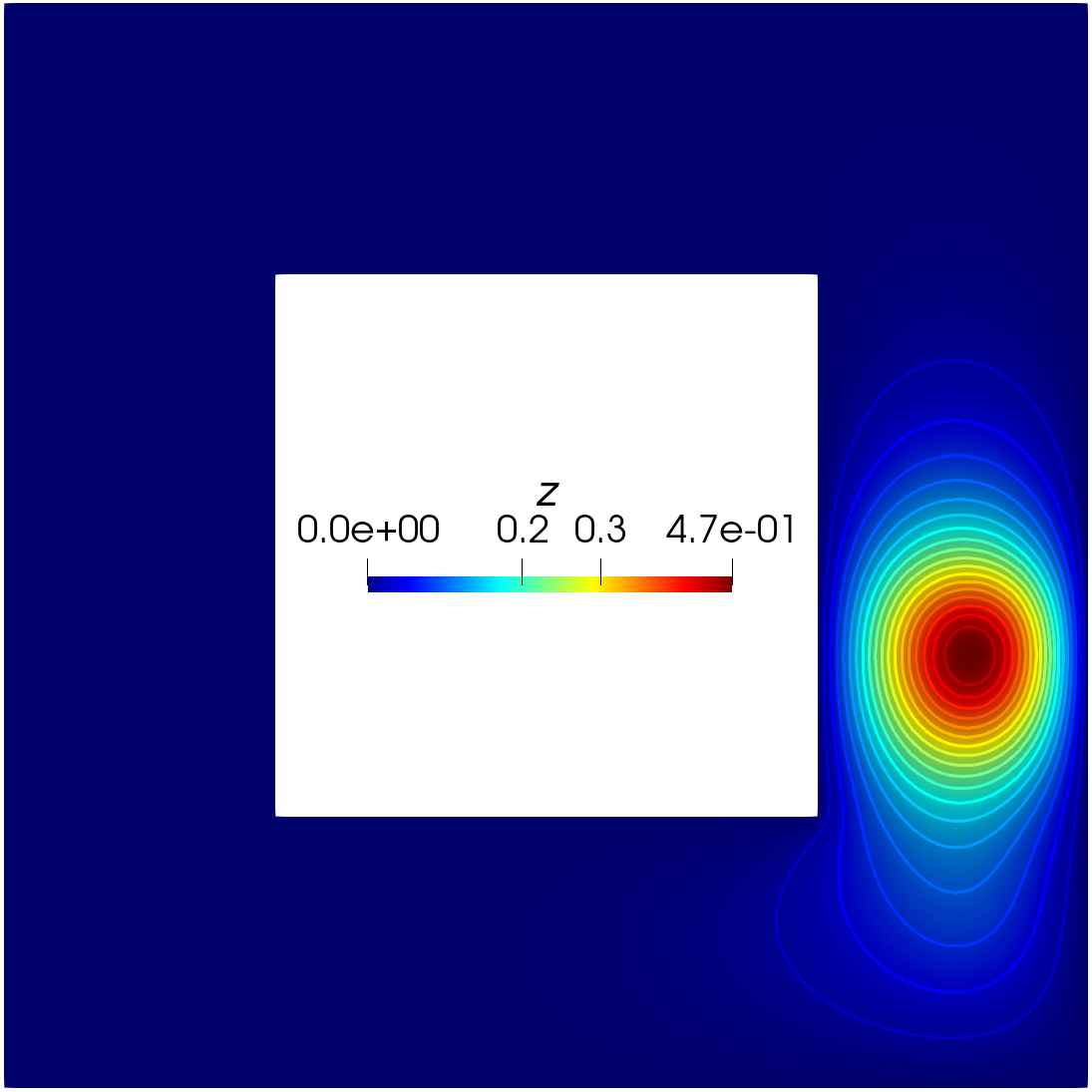}
\end{subfigure}%
\begin{subfigure}{0.33\textwidth}
\centering
\includegraphics[width=.99\linewidth]{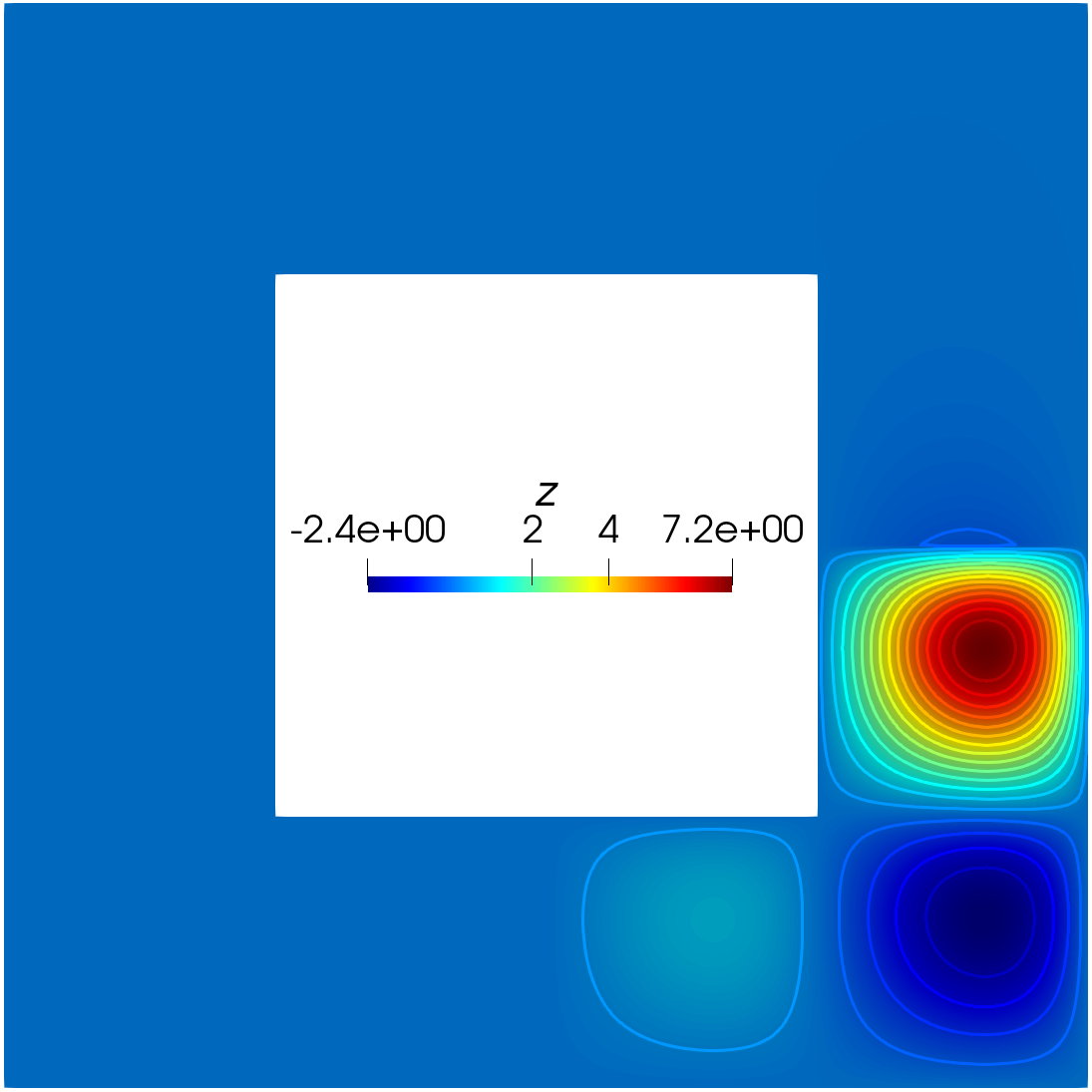}
\end{subfigure}%
\begin{subfigure}{0.33\textwidth}
\centering
\includegraphics[width=.99\linewidth]{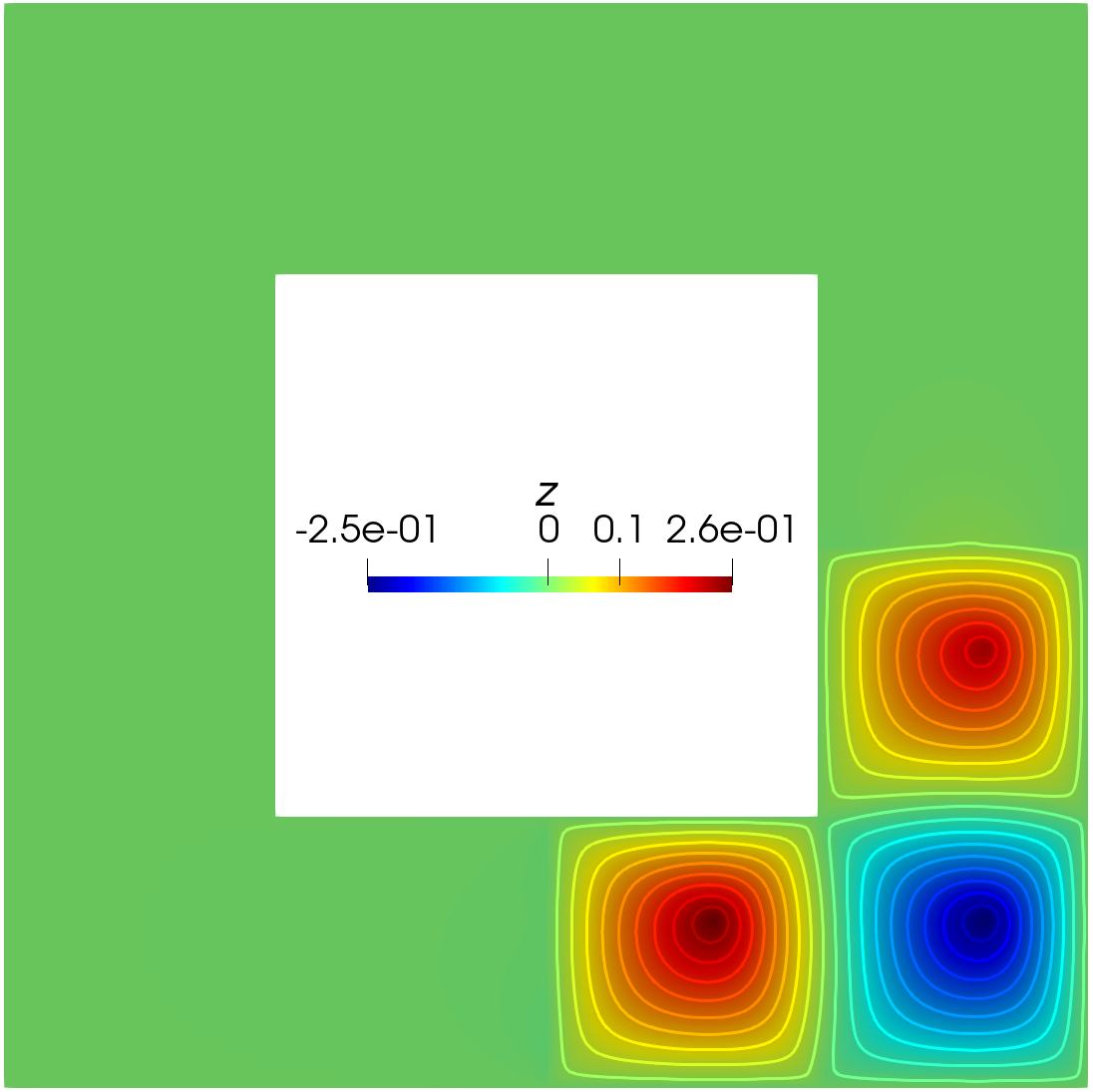}
\end{subfigure}
\caption{The adjoint solution $z$ corresponding to the model problem with a
manufactured solution when $\alpha=10^{-2}$ for the nonlinear QoIs $\qoi_2(u)$
(left), $\qoi_3(u)$ (center), and the $\qoi_4(u)$ (right) with 20 linearly
spaced contour curves.}
\label{fig:modified_adjoint_verify}
\end{figure}

\begin{figure}[ht!]
\centering
\includegraphics[width=.50\linewidth]{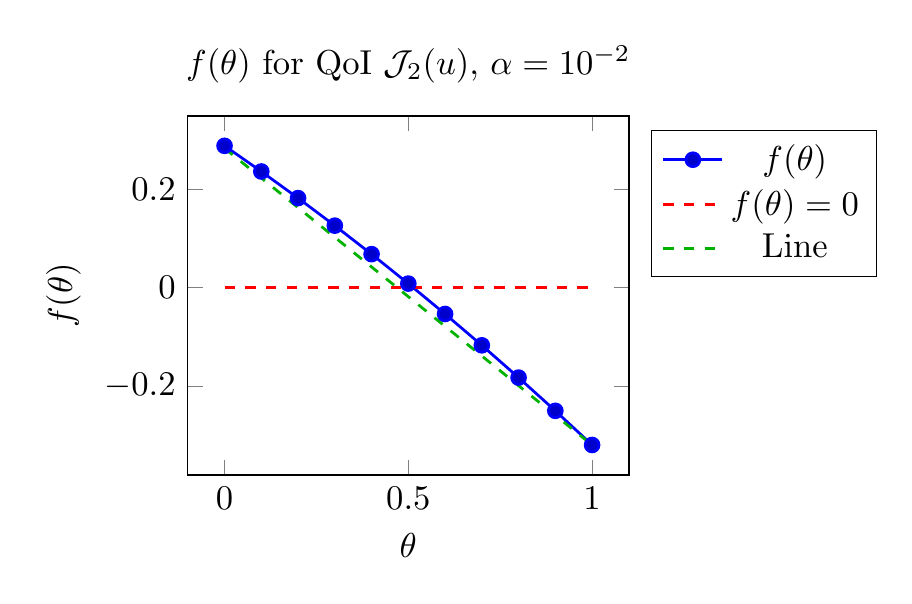}
\caption{The function $f(\theta)$ plotted at $10$ evenly spaced points in
$[0,1]$ for the QoI $\qoi_2(u)$ for the manufactured solution when
$\alpha=10^{-2}$.}
\label{fig:qoi2_f_theta}
\end{figure}

\subsubsection{Asymptotic Behavior}
\label{ssec:asymptotic_behavior}

In this section, we investigate the behavior of the nonlinear functional
quantities and the estimates $\eta_1$ and $\eta_2$ as $| \Omega^e_H | \to 0$.
For all examples, we begin with an initial mesh with $192$ elements, as shown
in Figure~\ref{fig:results_domain_and_mesh} and consider a sequence of
uniformly refined meshes, where the number of elements is $n_{el} =
192,768,3072,12288$ on this sequence of meshes. For each QoI, at each mesh
instance we solve the primal problems~\eqref{eq:primal_coarse} and
\eqref{eq:primal_fine}, solve the traditional adjoint 
equation~\eqref{eq:adjoint_fine}, solve the modified adjoint 
equation~\eqref{eq:modified_adjoint}, compute the error estimate $\eta_1$ 
from~\eqref{eq:adjoint_weighted_residual}, and compute the error estimate $\eta_2$
from~\eqref{eq:modified_adjoint_weighted_residual}, where the scaling parameter
$\alpha$ is chosen to be $\alpha = 10^{-2}$.

Using the error estimates $\eta_1$ and $\eta_2$, we compute four
effectivity indices: $\nicefrac{\eta_1}{\error}, \nicefrac{\eta_2}{\error},
\nicefrac{\eta_1}{\error^h}$, and $\nicefrac{\eta_2}{\error^h}$. The first
two effectivities compare the error estimates to the exact QoI discretization
error $\error$, while the second two effectivities compare the error
estimates to the QoI discretization error between the coarse and fine
spaces, $\error^h$. As discussed in the previous section, the effectivity
$\nicefrac{\eta_2}{\error^h}$ should evaluate identically to $1$. In contrast,
the effectivity $\nicefrac{\eta_1}{\error^h}$ illustrates the effect that
neglecting the linearization errors $\bs{E}^{\residop}_L$ and $\error^{\qoi}_L$
has on recovering the two-space QoI error $\error^h$. If these linearization
errors become small as $|\Omega^e_H| \to 0$, as one would hope, then the
effectivity $\nicefrac{\eta_1}{\error^h}$ should approach $1$. Lastly, if the
error estimates are \emph{effective}, then the effectivity indices with respect
to the exact error $\error$ will tend towards $1$ as
$|\Omega^e_H| \to 0$.

In addition to computing the effectivity indices described above, we also
compute \emph{corrected} functional approximations using the error estimates.
These corrected functional values take the form $\qoi^H(\bs{u}^H) + \eta_1$
for the traditional error estimate $\eta_1$ and $\qoi^H(\bs{u}^H) + \eta_2$ 
for the newly proposed estimate $\eta_2$. The corrected functional values
should provide a more accurate representation of the QoI, which we measure by
computing their errors with respect to the exact solution $\qoi(u) -
(\qoi^H(\bs{u}^H) + \eta_1)$ and $\qoi(u) - (\qoi^H(\bs{u}^H) + \eta_2)$. If
the correction improves the accuracy of the QoI approximation, then these
computed errors will converge at a faster rate than the exact QoI
discretization error, $\qoi(u) - \qoi^H(\bs{u}^H)$. 
Figures~\ref{fig:qoi2_manufactured_plots}, \ref{fig:qoi3_manufactured_plots},
and \ref{fig:qoi4_manufactured_plots} illustrate the asymptotic behaviors of
the error estimates $\eta_1$ and $\eta_2$ for the QoIs $\qoi_2(u), \qoi_3(u)$,
and $\qoi_4(u)$, respectively. For each figure, the plot on the left 
illustrates the behavior of the effectivity indices and the plot on the right
illustrates the behavior of computed error metrics as $|\Omega^e_H| \to 0$.

\begin{figure}[ht!]
\centering
\begin{subfigure}{0.52\textwidth}
\centering
\includegraphics[width=.99\linewidth]{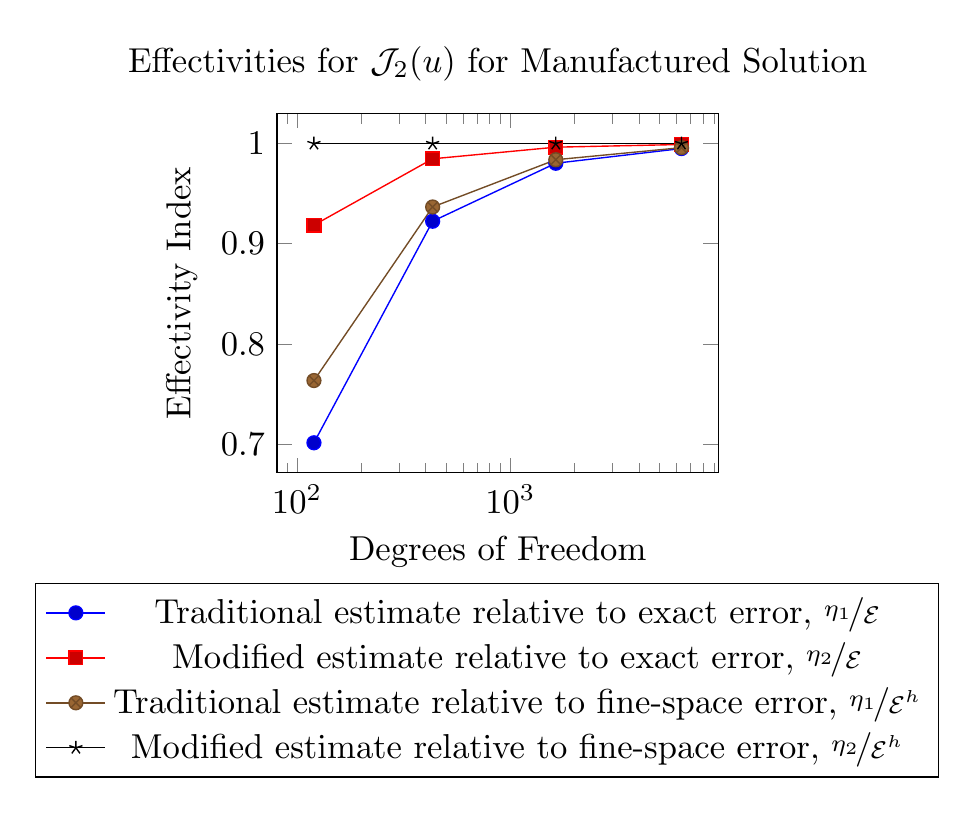}
\end{subfigure}%
\begin{subfigure}{0.47\textwidth}
\centering
\includegraphics[width=.99\linewidth]{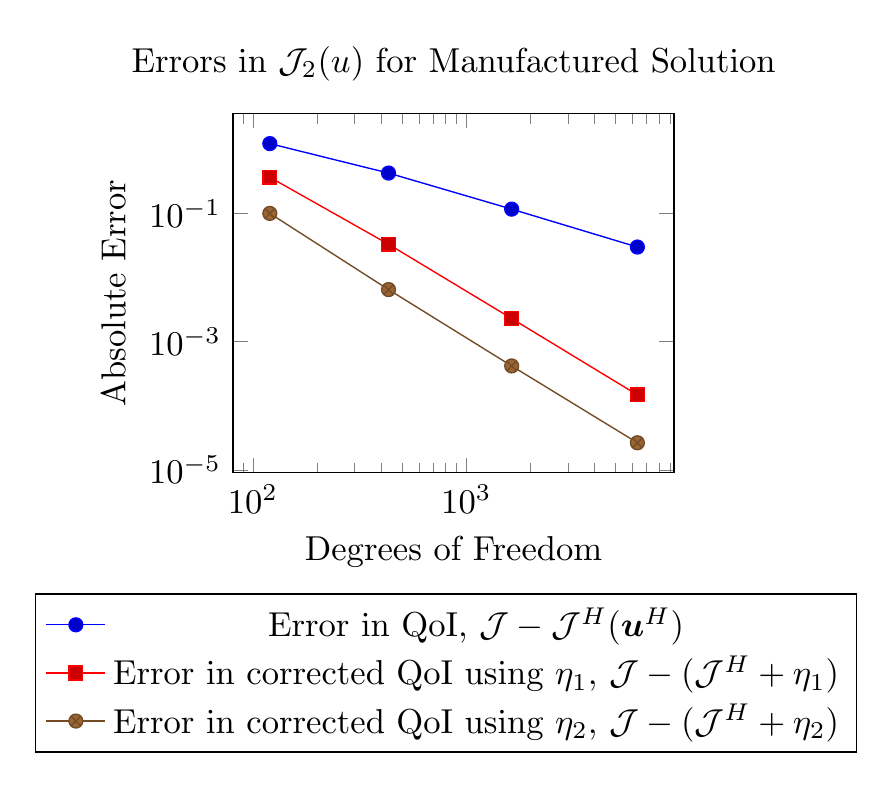}
\end{subfigure}
\caption{Asymptotic behavior of the estimates $\eta_1$ and $\eta_2$
for the QoI $\qoi_2(u)$ for the manufactured solution when $\alpha=10^{-2}$
on a sequence of uniformly refined meshes.}
\label{fig:qoi2_manufactured_plots}
\end{figure}

Figure~\ref{fig:qoi2_manufactured_plots} illustrates that both estimates
$\eta_1$ and $\eta_2$ are effective as the mesh size goes to zero for the QoI
$\qoi_2(u)$ for the chosen problem of interest, with
$\nicefrac{\eta_1}{\error} \to 1$ and $\nicefrac{\eta_2}{\error} \to 1$
as $|\Omega^e_H| \to 0$. However, the plot on the left of
Figure~\ref{fig:qoi2_manufactured_plots} illustrates that the newly proposed
estimate $\eta_2$ is more effective than that of the traditional estimate
$\eta_1$ at coarse mesh resolutions. Additionally, the plot on the left 
of Figure~\ref{fig:qoi2_manufactured_plots} demonstrates that both estimates
result in a corrected functional evaluation that converges at a faster
rate than that of $\qoi_2^H(\bs{U}^H)$. In fact, the rate of this corrected 
functional is nearly identical for the two estimates, but the newly proposed
estimate $\eta_2$ results in a constant shift towards a lower error and,
as a result, a more accurate approximation of the true functional value
$\qoi_2(u)$.

\begin{figure}[ht!]
\centering
\begin{subfigure}{0.52\textwidth}
\centering
\includegraphics[width=.99\linewidth]{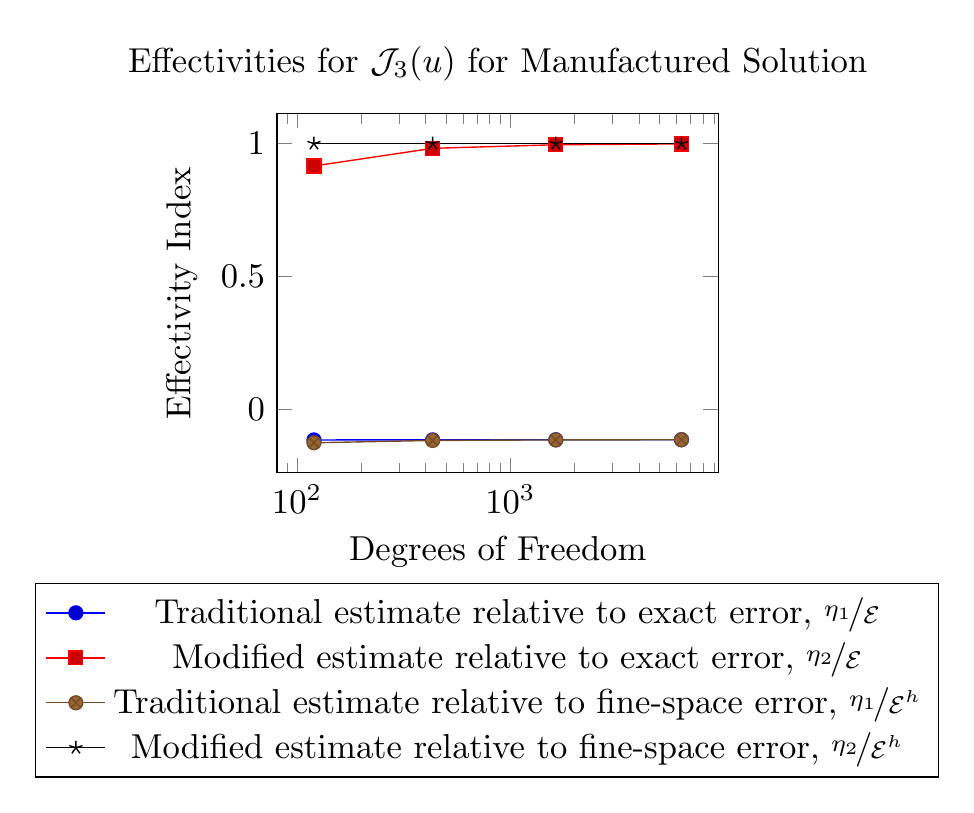}
\end{subfigure}%
\begin{subfigure}{0.47\textwidth}
\centering
\includegraphics[width=.99\linewidth]{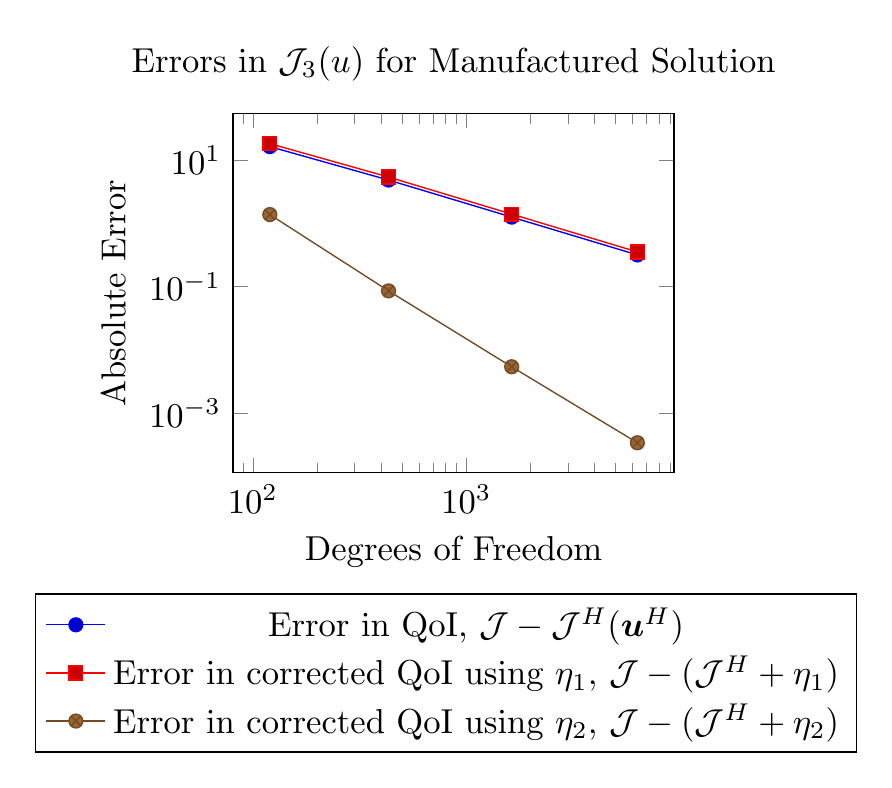}
\end{subfigure}
\caption{Asymptotic behavior of the estimates $\eta_1$ and $\eta_2$
for the QoI $\qoi_3(u)$ for the manufactured solution when $\alpha=10^{-2}$
on a sequence of uniformly refined meshes.}
\label{fig:qoi3_manufactured_plots}
\end{figure}

\begin{figure}[ht!]
\centering
\begin{subfigure}{0.52\textwidth}
\centering
\includegraphics[width=.99\linewidth]{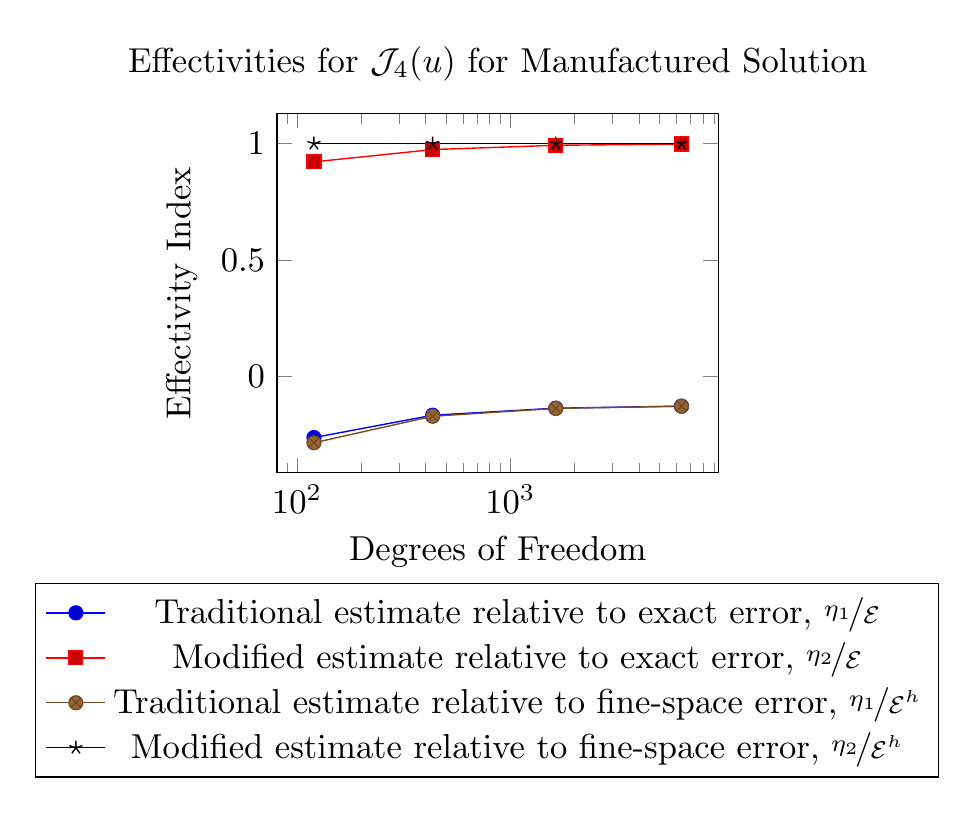}
\end{subfigure}%
\begin{subfigure}{0.47\textwidth}
\centering
\includegraphics[width=.99\linewidth]{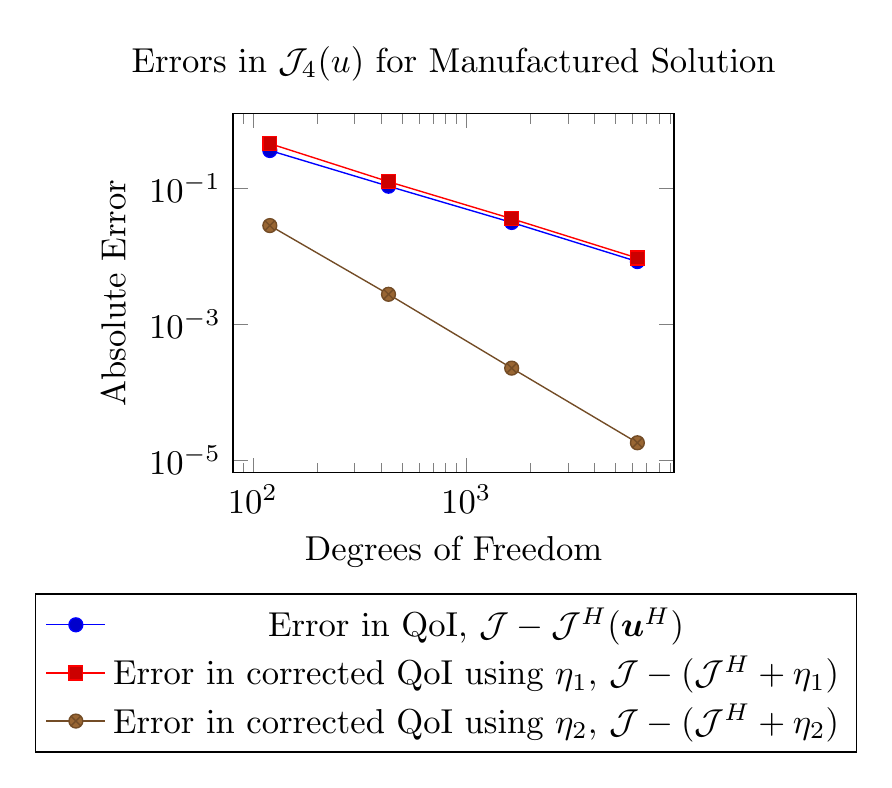}
\end{subfigure}
\caption{Asymptotic behavior of the estimates $\eta_1$ and $\eta_2$ for the
QoI $\qoi_4(u)$ for the manufactured solution when $\alpha=10^{-2}$ on a
sequence of uniformly refined meshes.}
\label{fig:qoi4_manufactured_plots}
\end{figure}

However, Figures~\ref{fig:qoi3_manufactured_plots} and
\ref{fig:qoi4_manufactured_plots} surprisingly illustrate that only the
newly proposed estimate $\eta_2$ is effective as the mesh size goes to
zero for the QoIs $\qoi_3(u)$ and $\qoi_4(u)$ for the chosen problem.
In fact, the error approximated by the traditional adjoint-weighted residual
estimate $\eta_1$ incurs a sign error. As a result, the corrected functional
using $\eta_1$ is actually \emph{less accurate} than that of the coarse-space
functional evaluation $\qoi^H(\bs{u}^H)$, as seen by the right-hand plots of
Figures~\ref{fig:qoi3_manufactured_plots} and
\ref{fig:qoi4_manufactured_plots}. In contrast, the newly proposed estimate
$\eta_2$ still results in a corrected functional evaluation that converges
at a faster rate than the coarse-space functional evaluation for the
QoIs $\qoi_3(u)$ and $\qoi_4(u)$.

\begin{table}[ht!]
\begin{center}
\begin{tabular}{c || c }
$n_{el}$ & $||\bs{E}^{\residop}_L||_2$ \\ \hline \hline
192 &   8.54316e+01 \\ \hline
768 &   1.76089e+01 \\ \hline
3072 &  3.05150e+00 \\ \hline
12287 & 4.26147e$-$01
\end{tabular}
\end{center}
\caption{Residual linearization errors for the manufactured solution
with $\alpha=10^{-2}$ at different mesh resolutions.}
\label{tab:manufactured_erl}
\end{table}

This leads to an insight of the present work: \emph{the effectiveness of
the traditional adjoint-weighted residual can be QoI dependent}. That is to
say, it appears that the QoI linearization error $\error^{\qoi}_L$ can have a
large impact on the effectiveness of the estimate $\eta_1$. We speculate
that this effect is typically unobserved in common goal-oriented error
estimation settings, where QoIs such as lift or drag over an airfoil are
\emph{linear} functionals, such that $\error^{\qoi}_L = 0$. Further, for our
specific PDE of interest,
the stark contrast in the asymptotic behavior of the estimate $\eta_1$ for the
QoI $\qoi_2(u)$ and the QoIs $\qoi_3(u)$ and $\qoi_4(u)$ indicates that the
residual linearization error $\bs{E}^{\residop}_L$ is not the culprit for this
loss of effectivity, since this term is independent of the chosen QoI.
That is to say, as the mesh size decreases, $\bs{E}^{\residop}_L \to \bs{0}$,
which is confirmed in Table~\ref{tab:manufactured_erl}.

\subsubsection{Mesh Adaptivity}
\label{sssec:manufactured_adapt}

In this section, we investigate the behavior of the nonlinear functional
quantities and the use of the localized error estimates $\eta^i_1$ and
$\eta^i_2$, $i=1,2,\dots,n_{vtx}$, in the adaptive procedure described in
Section~\ref{sec:adapt}. For all examples, we begin with an initial mesh
with $192$ elements, as shown in Figure~\ref{fig:results_domain_and_mesh}
and perform five iterations of the process:
\begin{equation*}
\text{solve primal PDE} \rightarrow
\text{solve adjoint PDE} \rightarrow
\text{estimate error} \rightarrow
\text{adapt mesh}
\end{equation*}
where either the traditional adjoint-weighted residual error estimate $\eta_1$
or the newly proposed error estimate $\eta_2$ is used to drive mesh
modification. During each adaptive iteration, the mesh size field is specified
according to equation~\eqref{eq:size_field} so that the desired number of
elements $T$ is twice the number of elements in the previous mesh, and
we evaluate the QoI $\qoi^H(\bs{u}^H)$ on the coarse space
and measure its error $\error$ with respect to the exact QoI value $\qoi(u)$.
Additionally, we compute the effectivity of both error estimates,
$\nicefrac{\eta_1}{\error}$ and $\nicefrac{\eta_2}{\error}$, respectively.
Figures~\ref{fig:manufactured_qoi2_adapt_convergence_plots},
\ref{fig:manufactured_qoi3_adapt_convergence_plots}, and
\ref{fig:manufactured_qoi4_adapt_convergence_plots} illustrate the
behavior of the adaptive schemes when $\eta_1$ and $\eta_2$ are chosen
to drive mesh adaptivity for the QoIs $\qoi_2(u), \qoi_3(u),$ and
$\qoi_4(u)$, respectively. The plot on the left of these figures shows
the effectivities of the estimates $\eta_1$ and $\eta_2$ as the mesh
is adapted. The plot on the right demonstrates the convergence behavior
of the error $\error$ as the mesh is adapted. Additionally, the plot on
the right of these figures compares the convergence behavior of the adaptive
schemes to that of uniform refinement.

\begin{figure}[ht!]
\centering
\begin{subfigure}{0.47\textwidth}
\centering
\includegraphics[width=.99\linewidth]{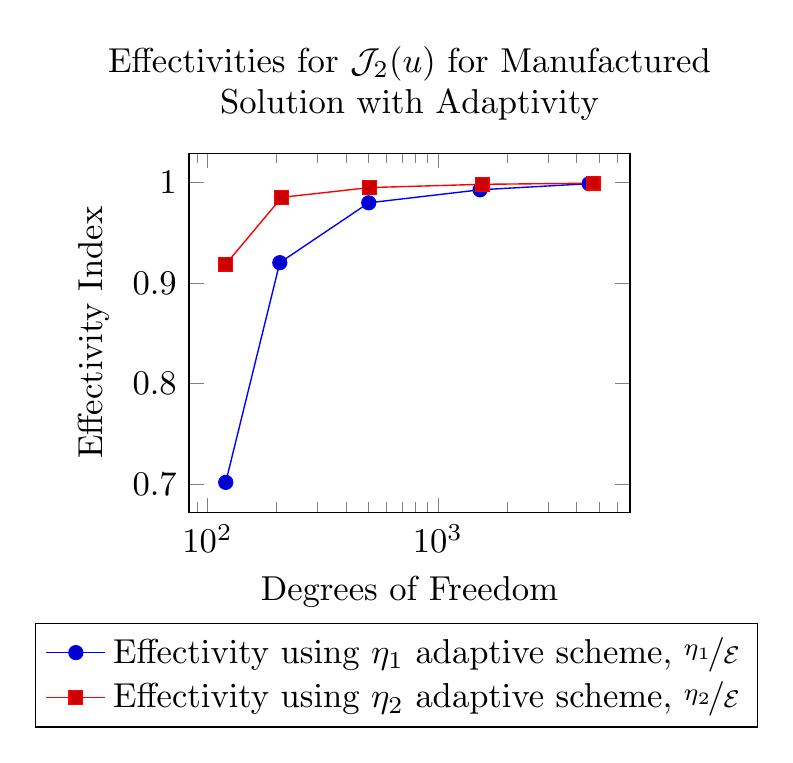}
\end{subfigure}%
\begin{subfigure}{0.43\textwidth}
\centering
\includegraphics[width=.99\linewidth]{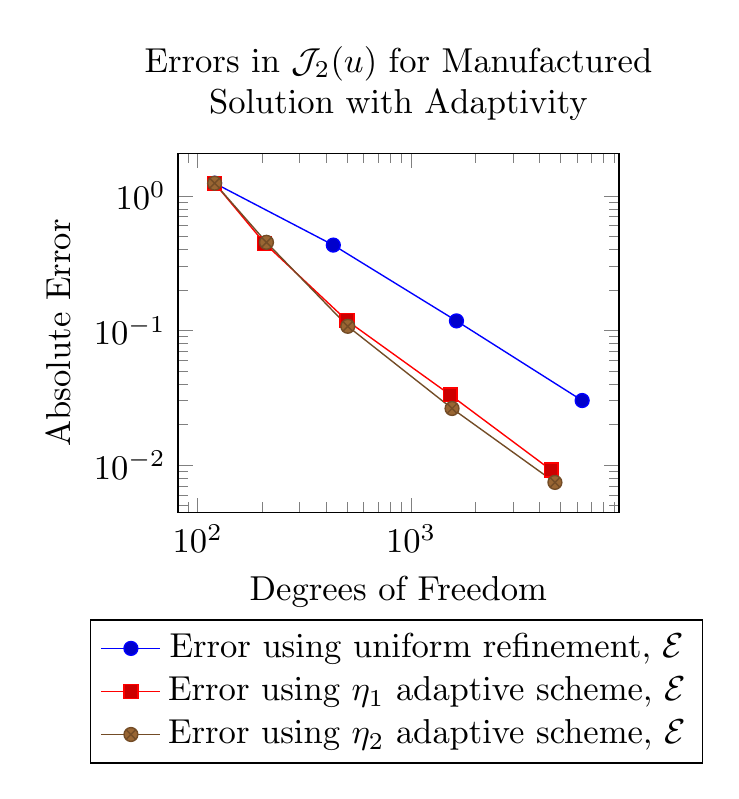}
\end{subfigure}
\caption{Behavior of the adaptive scheme using the estimates $\eta_1$ and
$\eta_2$ for the QoI $\qoi_2(u)$ for the manufactured solution when
$\alpha=10^{-2}$.}
\label{fig:manufactured_qoi2_adapt_convergence_plots}
\end{figure}

Figure~\ref{fig:manufactured_qoi2_adapt_convergence_plots} illustrates
that both estimates $\eta_1$ and $\eta_2$ are effective as the mesh is
adapted according to their localized indicators~\eqref{eq:eta1_localized}
and \eqref{eq:eta2_localized}, respectively, where the newly proposed
estimate $\eta_2$ is slightly more accurate at coarser mesh resolutions.
Despite this fact, however, the right-hand side of
Figure~\ref{fig:manufactured_qoi2_adapt_convergence_plots} illustrates that the
localizations of each estimate lead to adaptive schemes that are
very similar in terms of error convergence, while adapting based on the newly
proposed error estimate $\eta_2$ only marginally improves the accuracy of
the QoI when contrasted to adapting based on the traditional adjoint-weighted
residual error estimate $\eta_1$. Both adaptive schemes, though, considerably
outperform uniform refinement in terms of reducing the error at a fixed
number of degrees of freedom in the problem.

\begin{figure}[ht!]
\centering
\begin{subfigure}{0.47\textwidth}
\centering
\includegraphics[width=.99\linewidth]{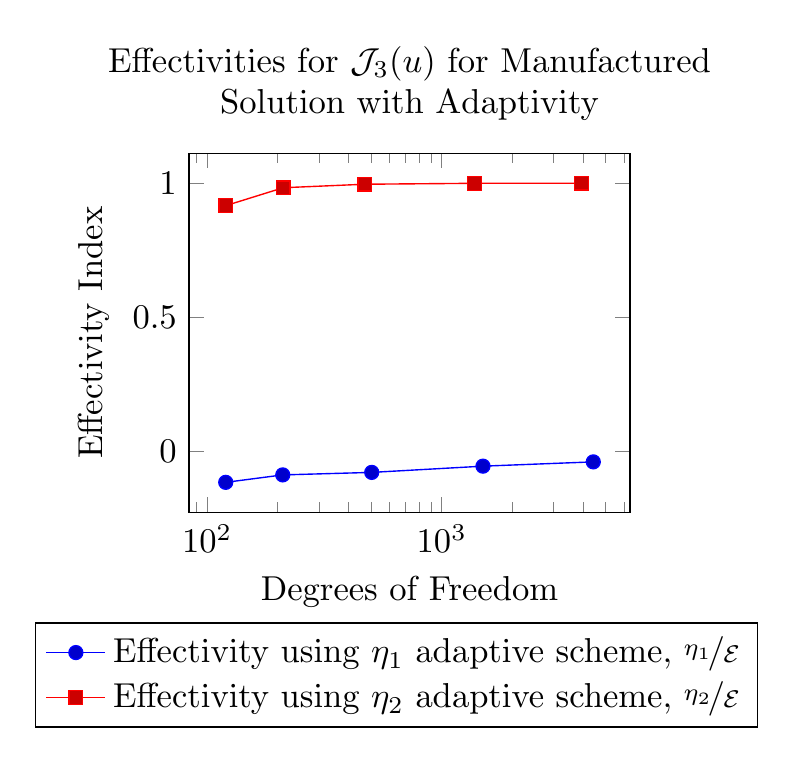}
\end{subfigure}%
\begin{subfigure}{0.43\textwidth}
\centering
\includegraphics[width=.99\linewidth]{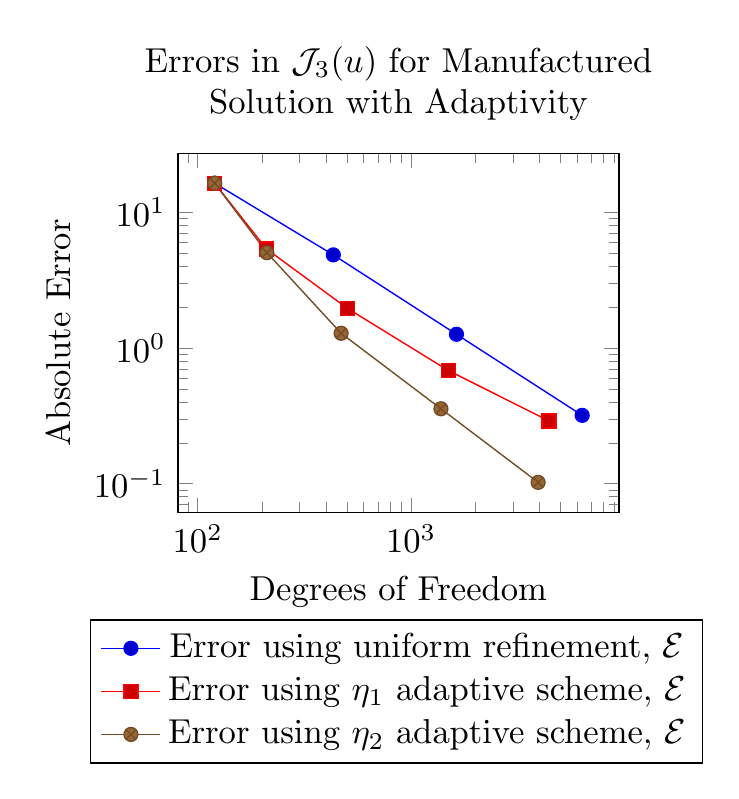}
\end{subfigure}
\caption{Behavior of the adaptive scheme using the estimates $\eta_1$ and
$\eta_2$ for the QoI $\qoi_3(u)$ for the manufactured solution when
$\alpha=10^{-2}$.}
\label{fig:manufactured_qoi3_adapt_convergence_plots}
\end{figure}

\begin{figure}[ht!]
\centering
\begin{subfigure}{0.47\textwidth}
\centering
\includegraphics[width=.99\linewidth]{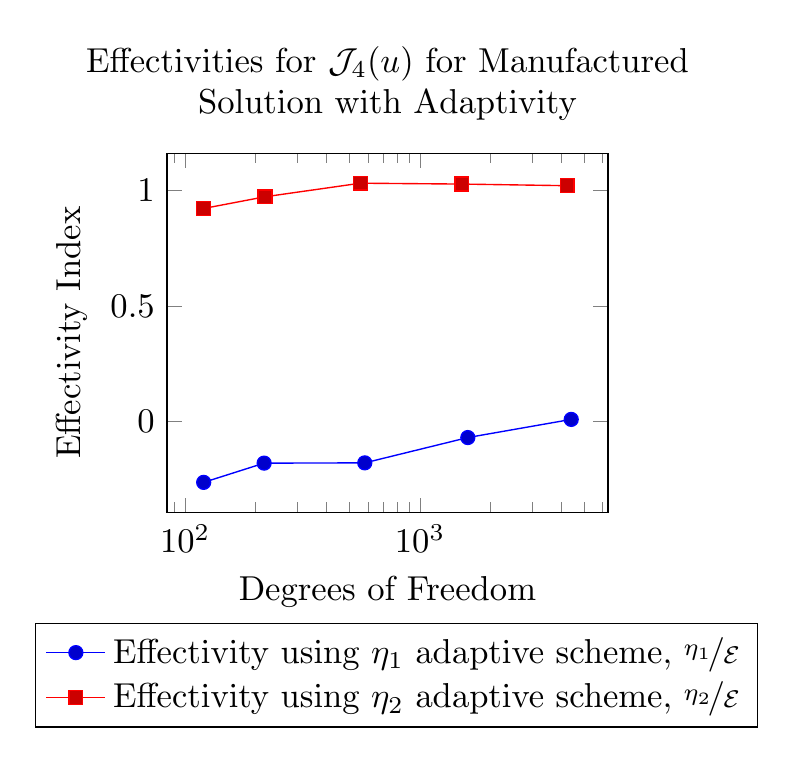}
\end{subfigure}%
\begin{subfigure}{0.43\textwidth}
\centering
\includegraphics[width=.99\linewidth]{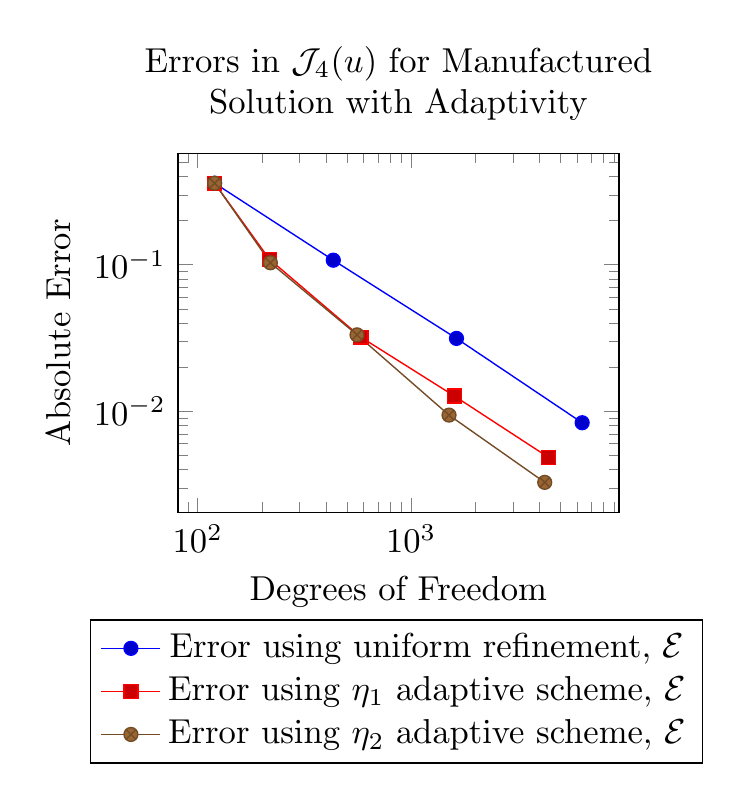}
\end{subfigure}
\caption{Behavior of the adaptive scheme using the estimates $\eta_1$ and
$\eta_2$ for the QoI $\qoi_4(u)$ for the manufactured solution when
$\alpha=10^{-2}$.}
\label{fig:manufactured_qoi4_adapt_convergence_plots}
\end{figure}

In contrast, the left-hand plots of
Figures~\ref{fig:manufactured_qoi3_adapt_convergence_plots} and
\ref{fig:manufactured_qoi4_adapt_convergence_plots} highlight that only
the newly proposed estimate $\eta_2$ is effective for the QoIs $\qoi_3(U)$
and $\qoi_4(u)$ as adaptive iterations are performed. Again, the traditional
adjoint-weighted residual estimate $\eta_1$ greatly under-predicts the 
QoI discretization error for these two quantities of interest, which results
in an effectivity $\nicefrac{\eta_1}{\error}$ close to $0$. The right-hand
plot of Figure~\ref{fig:manufactured_qoi3_adapt_convergence_plots} also
illustrates that driving mesh adaptivity with the estimate $\eta_2$
leads to a significantly more accurate adaptive scheme for the QoI
$\qoi_3(u)$ when compared to driving adaptivity with the estimate $\eta_1$ or
when compared to uniform refinement. The right-hand plot of
Figure~\ref{fig:manufactured_qoi4_adapt_convergence_plots} tells a similar story
for the QoI $\qoi_4(u)$, however, the improvement of the adaptive scheme using
the estimate $\eta_2$ over the estimate $\eta_1$ is not as pronounced for
this QoI.

This leads to another insight of the present work: \emph{incorporating
localized information about the linearization errors $\bs{E}^{\residop}_L$ and
$\error^{\qoi}_L$ into an adaptive procedure can lead to more optimal
meshes in certain scenarios}. For our particular chosen PDE, the fact that
$\eta_1$ and $\eta_2$ lead to adaptive schemes that provide quite similar
behavior for the QoI $\qoi_2(u)$ but not for $\qoi_3(u)$ and $\qoi_4(u)$
suggests that localizing the QoI linearization $\error^{\qoi}_L$ may provide
more benefit than localizing the residual linearization error
$\bs{E}^{\residop}_L$. We have presently made no effort to decouple these
effects, however, as an avenue for future work one could investigate the
approximation $\error^h \approx -\bs{z}^* \cdot \bs{R}^h(\bs{u}^H_h)$, which
does not account for the residual linearization error $\bs{E}^{\residop}_L$.
In general, the interaction between the two linearization error terms will
depend on the structure of the primal PDE and the chosen QoI.

\subsubsection{A Problem with Gradient Singularities}
\label{ssec:singular}

In this section, we consider the model problem~\eqref{eq:strong_form}
over the domain $\Omega$ shown in Figure~\ref{fig:results_domain_and_mesh}
with a forcing function $f$ chosen to be $f=100$ and the scaling parameter
$\alpha=10^{-2}$. For context, Newton's method takes $4$ iterations for the
solution of the primal problem on the coarse space and the fine space for
all results shown below. The choice of forcing function $f$ and domain
$\Omega$ leads to gradient singularities in the solution $u$ at each
interior corner of the domain $\Omega$. In a general sense, without
accurately resolving these singularities, so-called ``pollution'' 
error~\cite{babuvska1994pollution} will affect the accuracy of the finite element
approximation to the QoI throughout the domain. Figure~\ref{fig:singular_solution}
illustrates the solution $u$ and its gradient
magnitude $|\!| \nabla u |\!|$ of the model problem with the chosen
constant forcing function.

\begin{figure}[ht!]
\centering
\begin{subfigure}{0.33\textwidth}
\centering
\includegraphics[width=.99\linewidth]{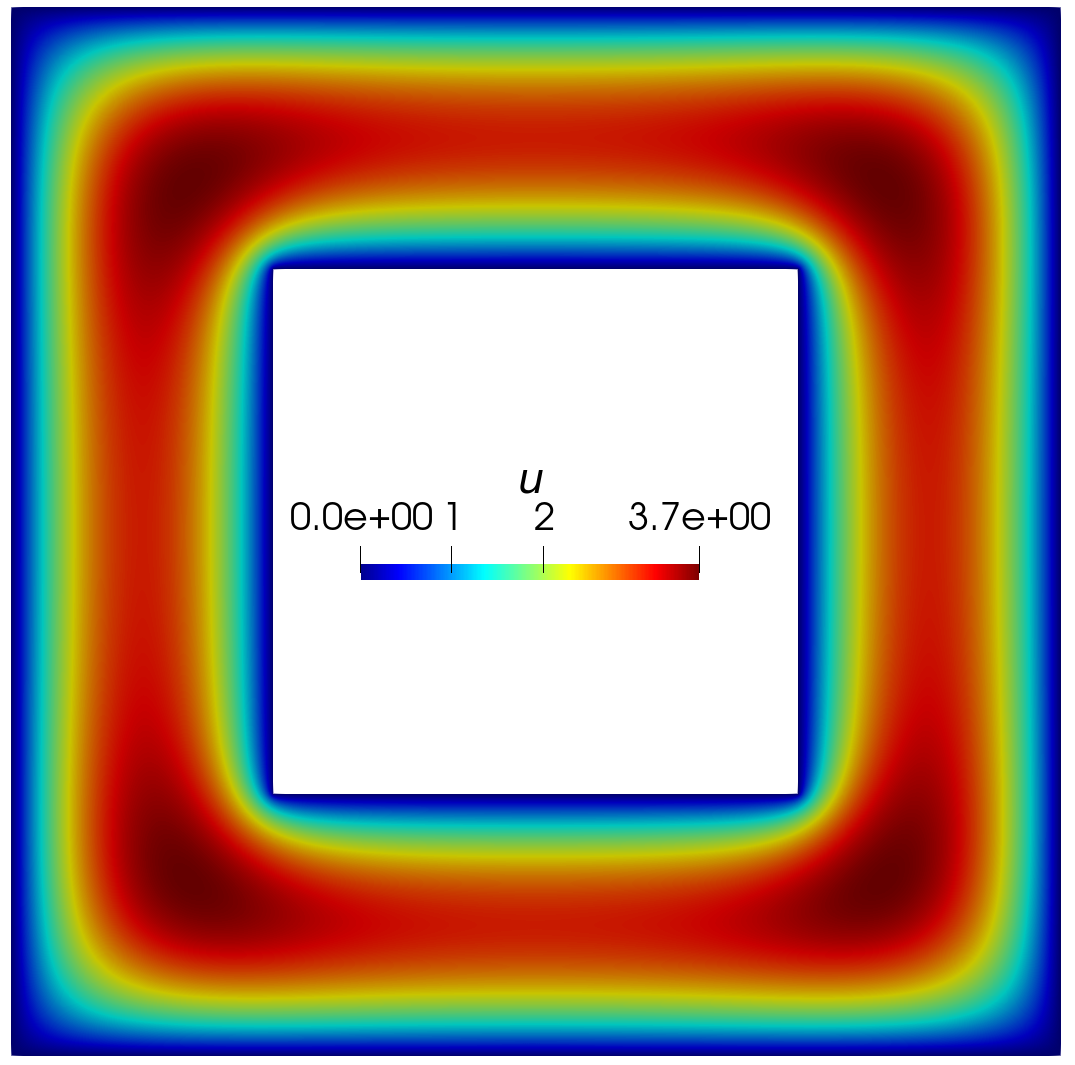}
\end{subfigure}%
\begin{subfigure}{0.33\textwidth}
\centering
\includegraphics[width=.99\linewidth]{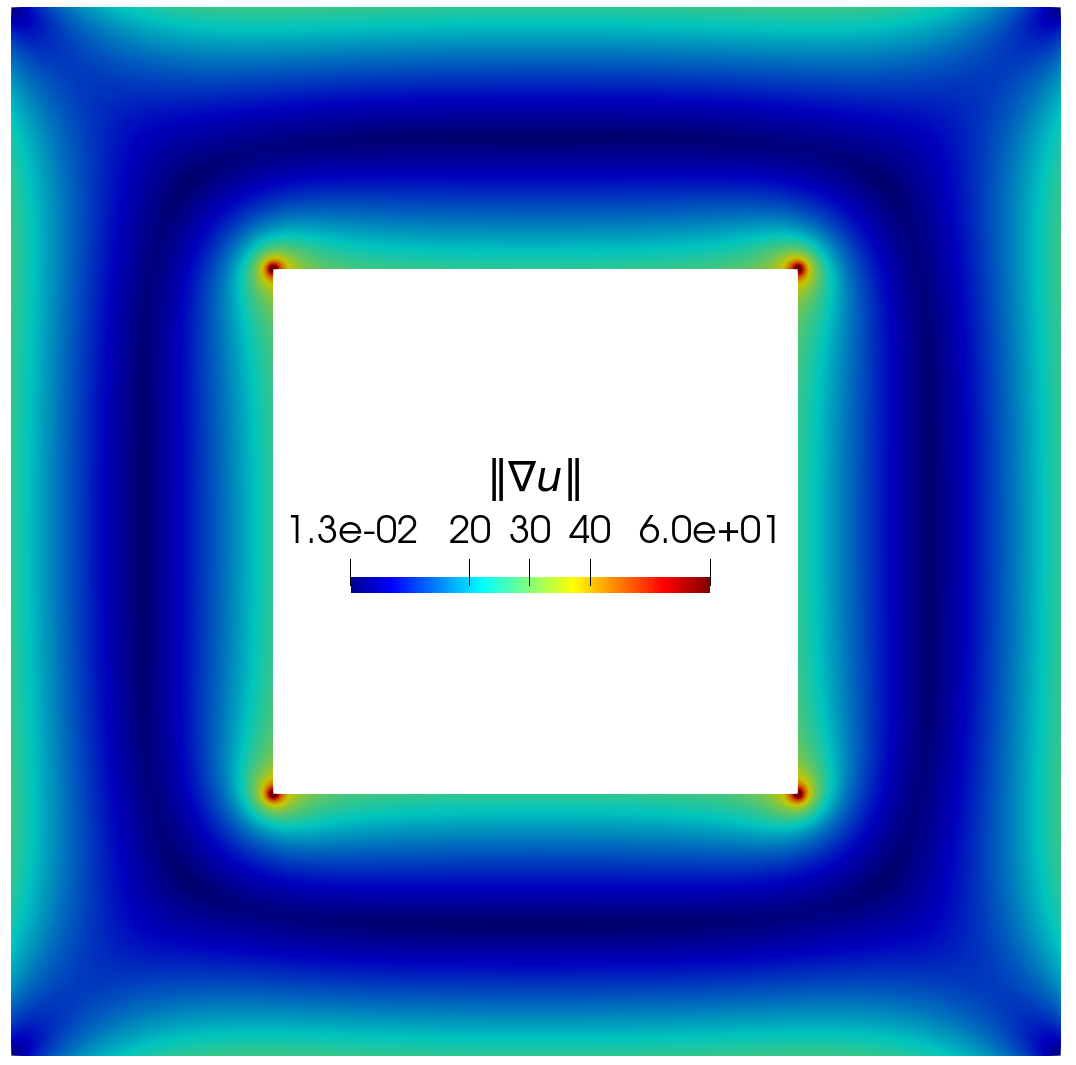}
\end{subfigure}
\caption{The solution with gradient singularities (left), the $L_2$
norm of the solution gradient (right).}
\label{fig:singular_solution}
\end{figure}

We investigate the behavior of all chosen functional quantities
$\qoi_i(u), i=1,2,3,4,$ and the use of the localized error estimates
$\eta^i_1$ and $\eta^i_2, i = 1, 2, \dots, n_{vtx}$, in the adaptive
process described in Section~\ref{sec:adapt}. In general, the exact values
of the QoIs are not quantifiable analytically. To approximate these exact
values, the primal problem is solved on a mesh with about one half
million degrees of freedom using the fine space $\fspace^h$. This mesh
is finer at every spatial location than the final adapted meshes in the
following results, and the reference values computed by this process
for each QoI are shown in~\ref{sec:qoi_values_singular}.

\begin{figure}[ht!]
\centering
\begin{subfigure}{0.45\textwidth}
\centering
\includegraphics[width=.99\linewidth]{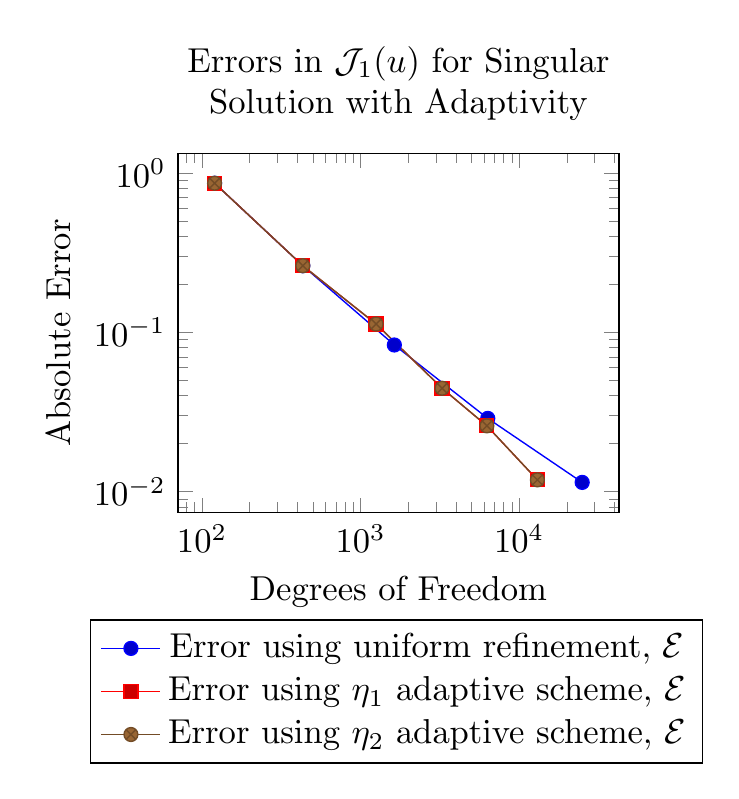}
\end{subfigure}%
\begin{subfigure}{0.45\textwidth}
\centering
\includegraphics[width=.99\linewidth]{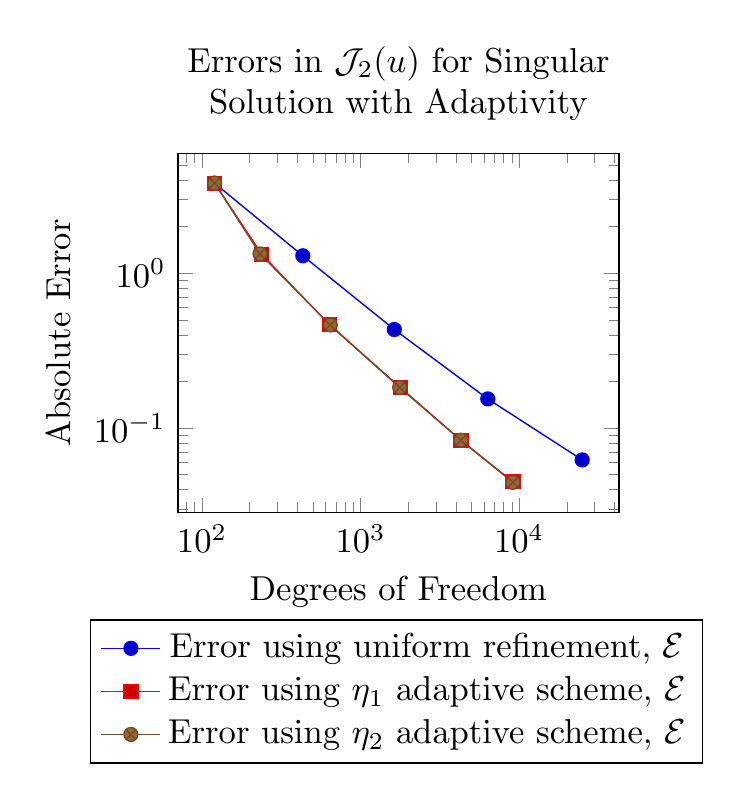}
\end{subfigure}
\caption{Convergence behavior for the example problem with gradient
singularities using uniform refinement and adaptivity driven by the
estimates $\eta_1$ and $\eta_2$ for the $\qoi_1(u)$ (left) and
$\qoi_2(u)$ (right).}
\label{fig:singular_adapt_convergence_plots1}
\end{figure}

For all examples, we begin with an initial mesh with $192$ elements, as shown
in Figure~\ref{fig:results_domain_and_mesh} and perform five iterations
of the process:
\begin{equation*}
\text{solve primal PDE} \rightarrow
\text{solve adjoint PDE} \rightarrow
\text{estimate error} \rightarrow
\text{adapt mesh}
\end{equation*}
where either the traditional adjoint-weighted residual error estimate
$\eta_1$ or the newly proposed estimate $\eta_2$ is used to drive mesh
modification. At each adaptive iteration, the mesh size field is specified
according to equation~\eqref{eq:size_field} so that the target number of
elements $T$ in the resultant mesh is twice the number of elements in the
current mesh. At each mesh instance, we evaluate the chosen QoI
$\qoi^H(\bs{u}^H)$ on the coarse space and measure its error $\error$
with respect to the reference value that approximates its exact value
$\qoi(u)$.

\begin{figure}[ht!]
\centering
\begin{subfigure}{0.45\textwidth}
\centering
\includegraphics[width=.99\linewidth]{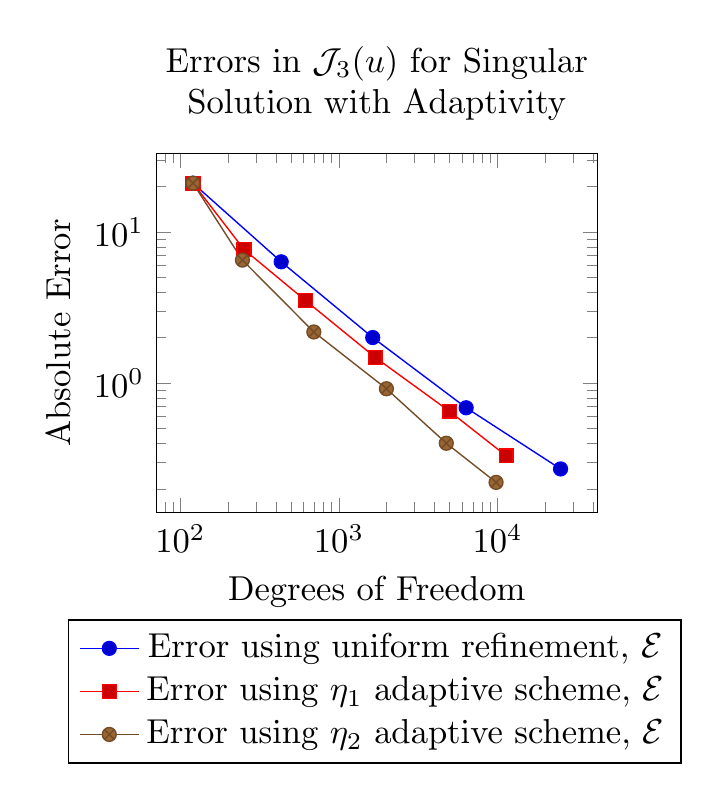}
\end{subfigure}%
\begin{subfigure}{0.45\textwidth}
\centering
\includegraphics[width=.99\linewidth]{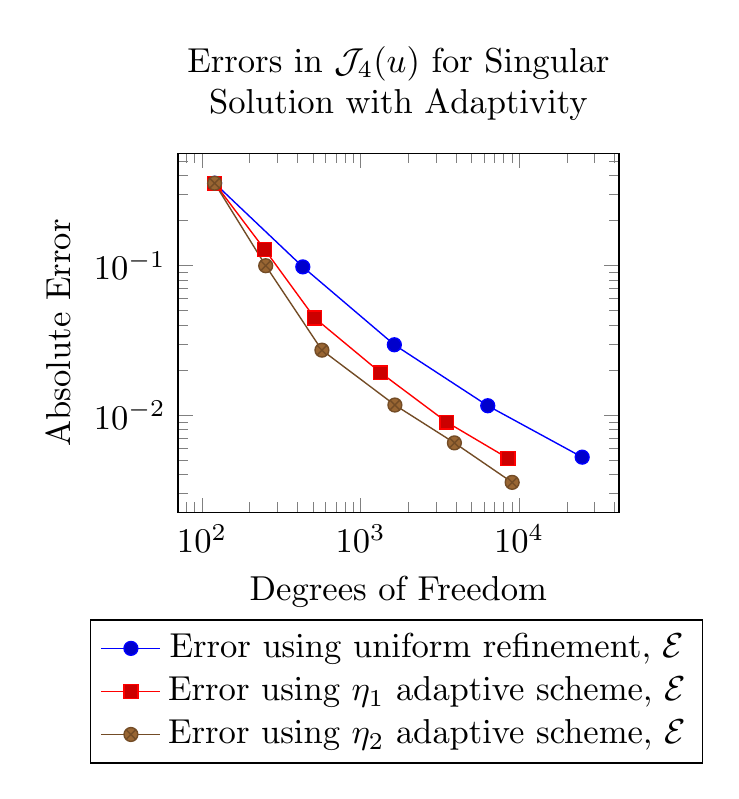}
\end{subfigure}
\caption{Convergence behavior for the example problem with gradient
singularities using uniform refinement and adaptivity driven by the
estimates $\eta_1$ and $\eta_2$ for the $\qoi_3(u)$ (left) and
the $\qoi_4(u)$ (right).}
\label{fig:singular_adapt_convergence_plots2}
\end{figure}

Figure~\ref{fig:singular_adapt_convergence_plots1}
illustrates the convergence behavior of the QoI discretization error
when the mesh is uniformly refined and when $\eta_1$ and $\eta_2$ are
chosen to drive mesh adaptivity for $\qoi_1(u)$ and $\qoi_2(u)$.
Figure~\ref{fig:singular_adapt_convergence_plots2} similarly illustrates
this behavior for $\qoi_3(u)$ and $\qoi_4(u)$. Interestingly,
Figure~\ref{fig:singular_adapt_convergence_plots1} shows that both estimates,
$\eta_1$ and $\eta_2$, result in adaptive convergence histories with
identical errors. This is because both schemes lead to identical meshes
at each adaptive iteration, as illustrated in
Figure~\ref{fig:adapt_singular_mesh_qoi2}, where the resultant mesh of the
third, fourth, and fifth adaptive iteration is shown. In addition, the
plot on the left of Figure~\ref{fig:singular_adapt_convergence_plots1}
illustrates that both estimates, $\eta_1$ and $\eta_2$, only begin to provide
improved convergence behavior at finer mesh resolutions, whereas the plot on
the right of Figure~\ref{fig:singular_adapt_convergence_plots1} demonstrates
that both estimates improve the QoI convergence behavior, where the error is
reduced by over half an order of magnitude at the final mesh resolution of
around $10,000$ degrees of freedom. This may, in part, be explained by the
fact that $\qoi_1(u)$ is a \emph{global} QoI, while $\qoi_2(u)$ is
restricted to only a localized portion of the domain.

\begin{figure}[ht]
\centering
\begin{subfigure}{0.33\textwidth}
\centering
\includegraphics[width=.99\linewidth]{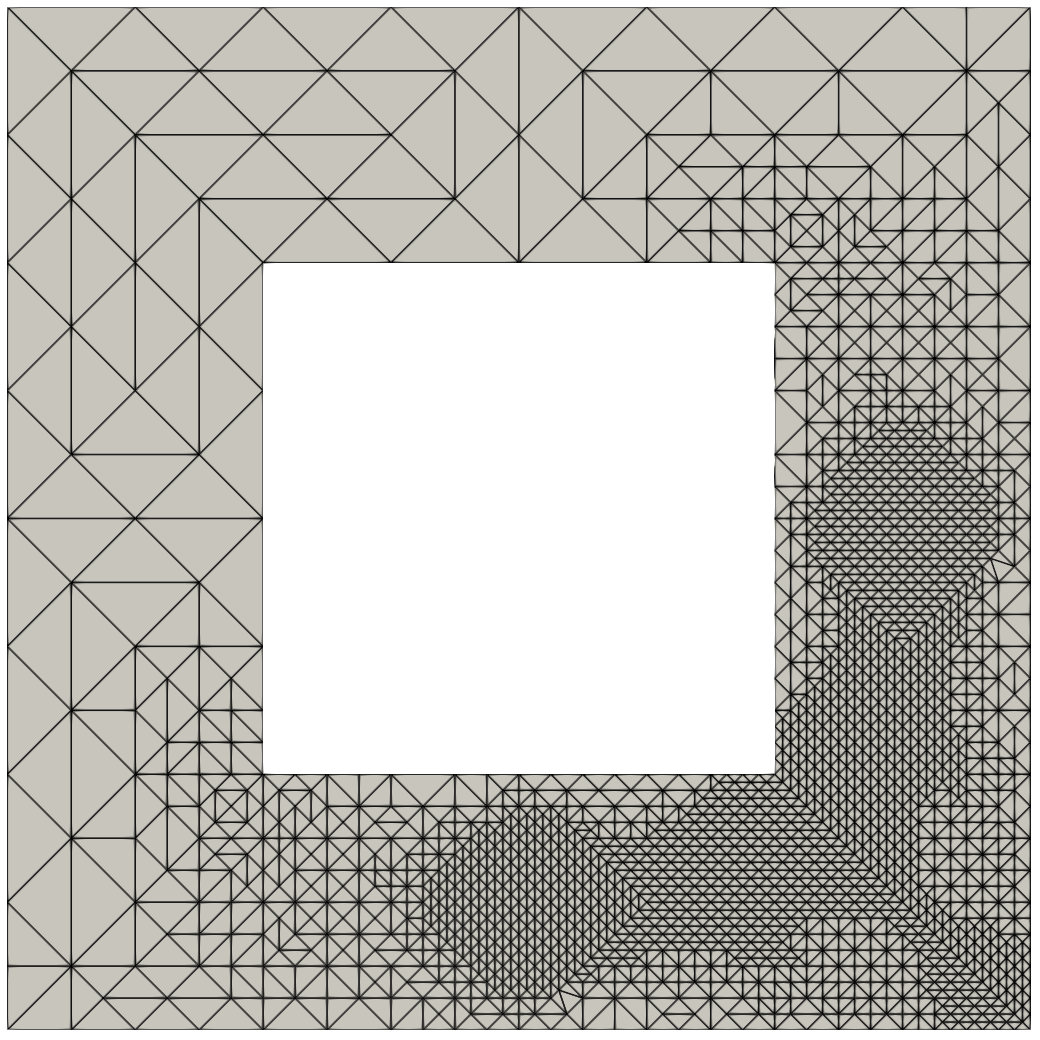}
\end{subfigure}%
\begin{subfigure}{0.33\textwidth}
\centering
\includegraphics[width=.99\linewidth]{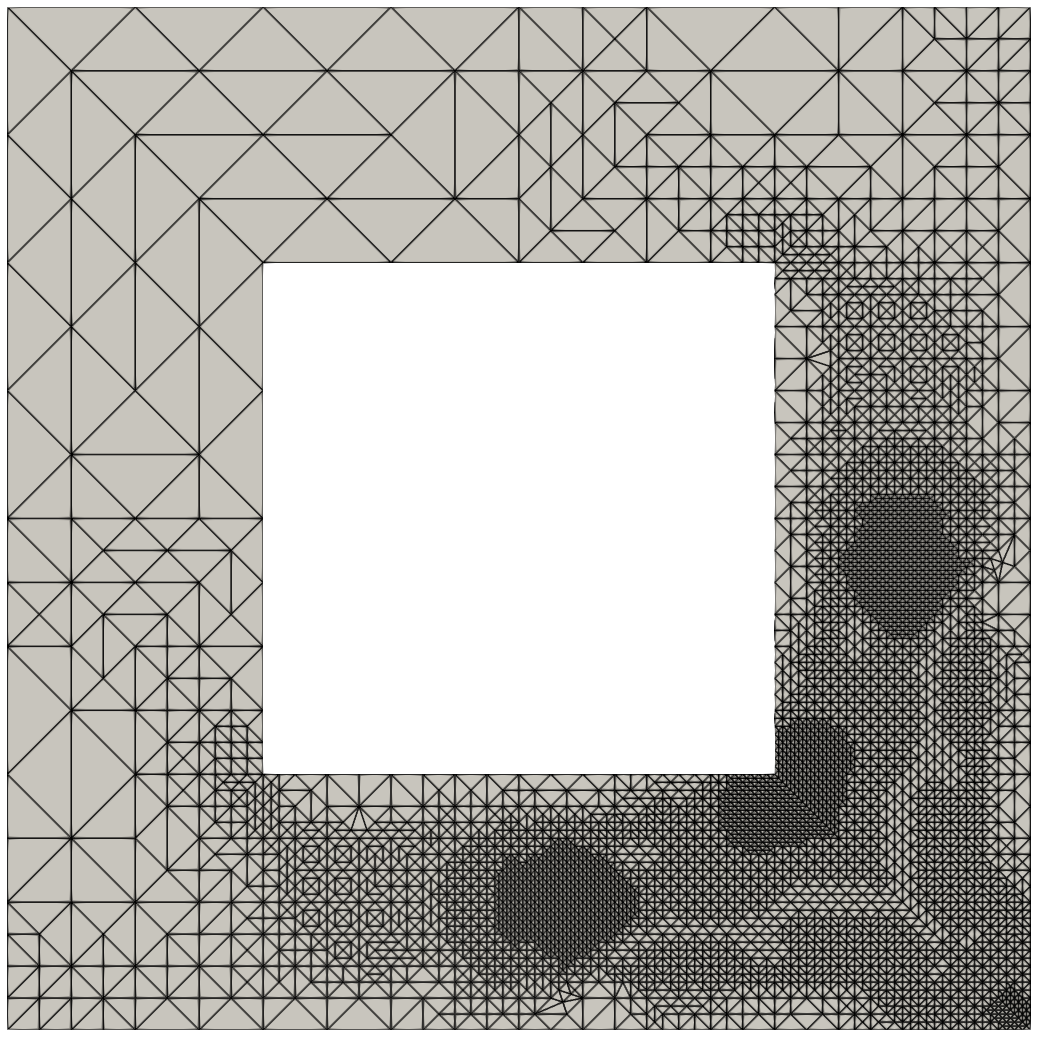}
\end{subfigure}%
\begin{subfigure}{0.33\textwidth}
\centering
\includegraphics[width=.99\linewidth]{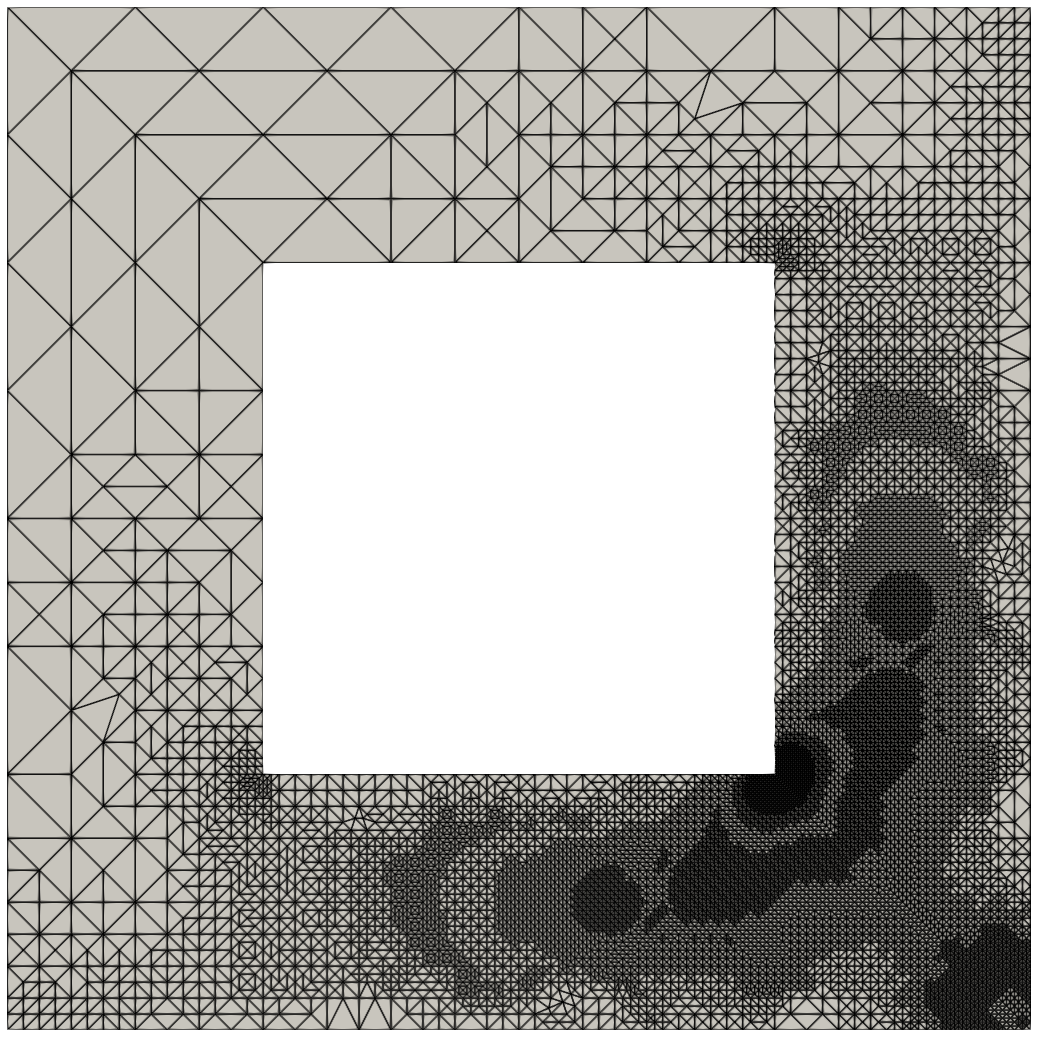}
\end{subfigure}
\caption{Sequence of meshes obtained for the example problem with gradient
singularities for $\qoi_2(u)$ using either the estimate $\eta_1$
or $\eta_2$ at the third (left), fourth (center), and fifth (right)
adaptive iterations.}
\label{fig:adapt_singular_mesh_qoi2}
\end{figure}

Figure~\ref{fig:singular_adapt_convergence_plots2} highlights that, for the
QoIs $\qoi_3(u)$ and $\qoi_4(u)$, the error convergence behavior using
either the estimate $\eta_1$ or $\eta_2$ is improved when compared to
uniform refinement, but that using the estimate $\eta_2$ provides a
further reduction in error when compared to $\eta_1$. This result is
consistent with the behavior seen in the previous manufactured solution
section. Additionally, Figures~\ref{fig:adapt_singular_mesh_qoi3} and
\ref{fig:adapt_singular_mesh_qoi4} compare the final meshes obtained
from the iterative adaptive process when considering $\eta_1$ or $\eta_2$
for the QoIs $\qoi_3(u)$ and $\qoi_4(u)$, respectively. Here, we make
several remarks. First, for both QoIs, both estimates heavily refine the
gradient singularity nearest to the local QoI subdomain, refine to a
lesser degree the next two nearest gradient singularities, and
do not resolve at all the furthest gradient singularity. Second,
comparing the image on the right to the image on the left of these figures
highlights that both estimates lead to qualitatively distinct 
meshes. This can, in part, be explained by the fact that the newly
proposed estimate $\eta_2$ includes linearization errors in its localization,
while the traditional adjoint-weighted residual $\eta_1$ does not.

\begin{figure}[ht]
\centering
\begin{subfigure}{0.33\textwidth}
\centering
\includegraphics[width=.99\linewidth]{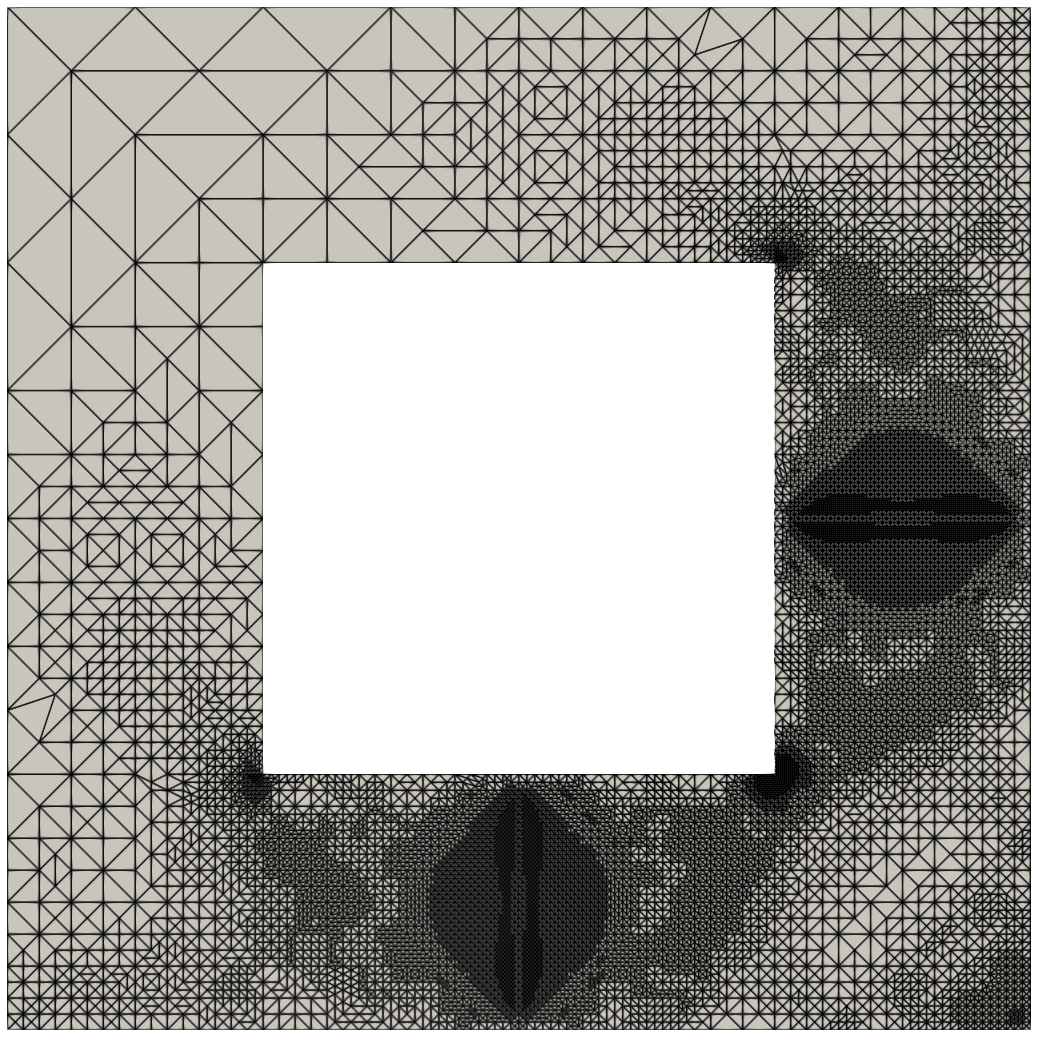}
\end{subfigure}%
\begin{subfigure}{0.33\textwidth}
\centering
\includegraphics[width=.99\linewidth]{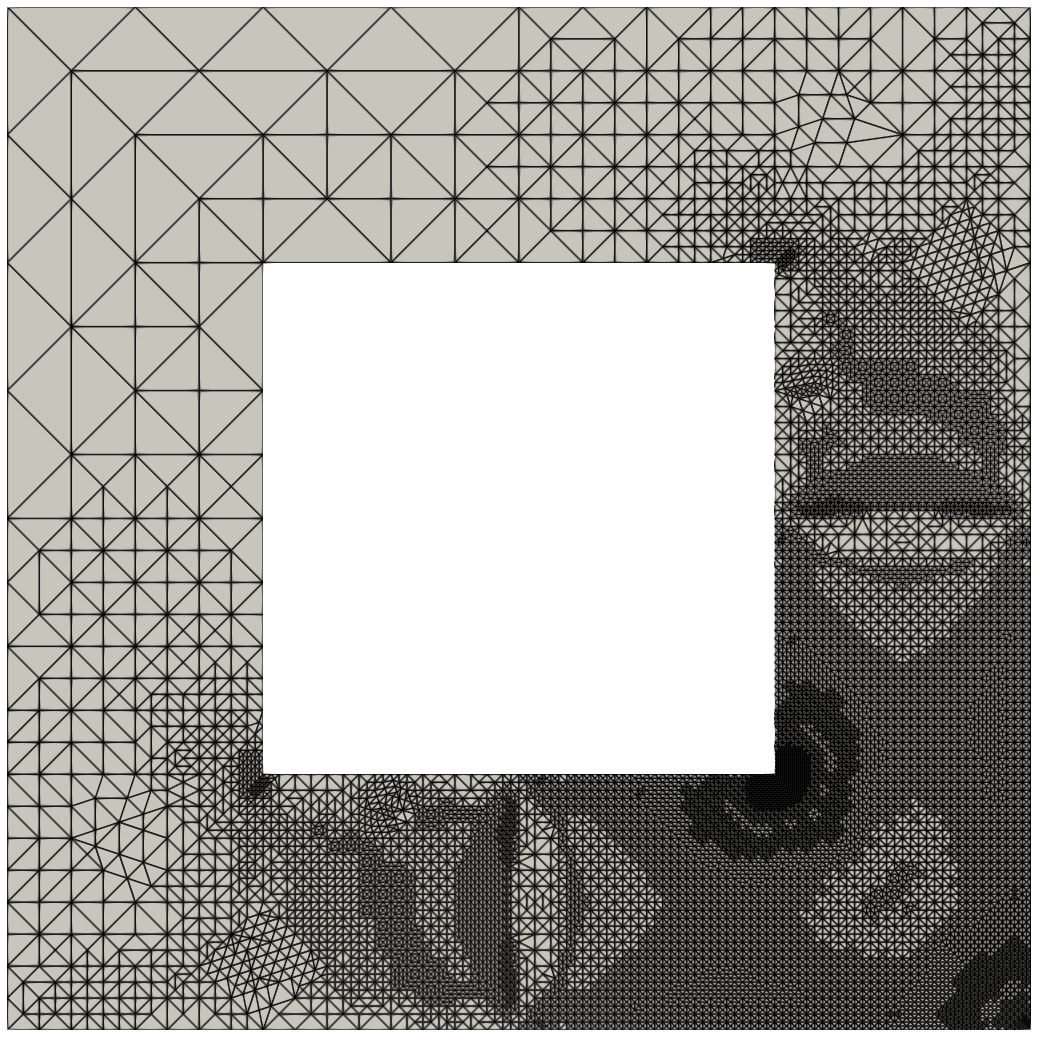}
\end{subfigure}
\caption{Final adapted meshes obtained for the example problem with gradient
singularities for $\qoi_3(u)$ using the estimate $\eta_1$ (left) and
the estimate $\eta_2$ (right).}
\label{fig:adapt_singular_mesh_qoi3}
\end{figure}

\begin{figure}[ht]
\centering
\begin{subfigure}{0.33\textwidth}
\centering
\includegraphics[width=.99\linewidth]{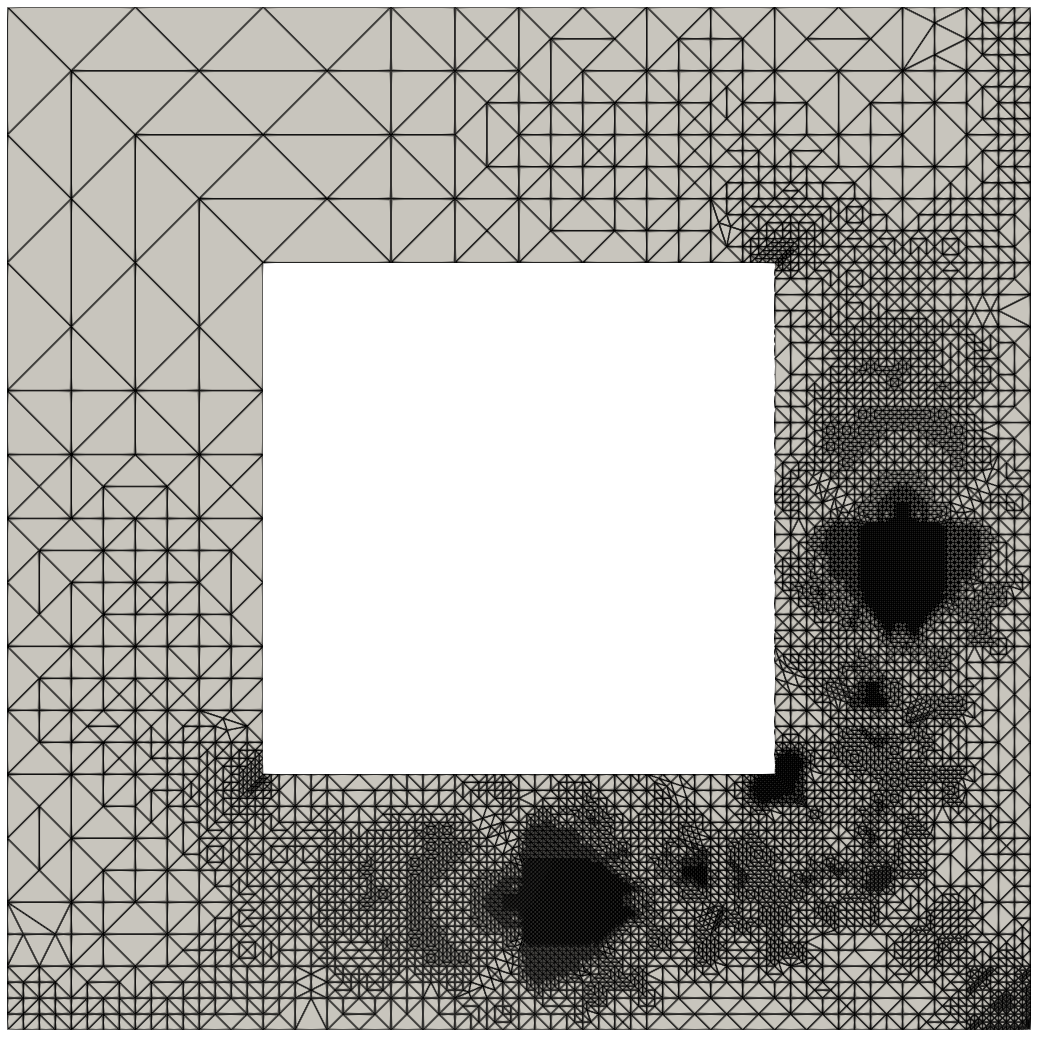}
\end{subfigure}%
\begin{subfigure}{0.33\textwidth}
\centering
\includegraphics[width=.99\linewidth]{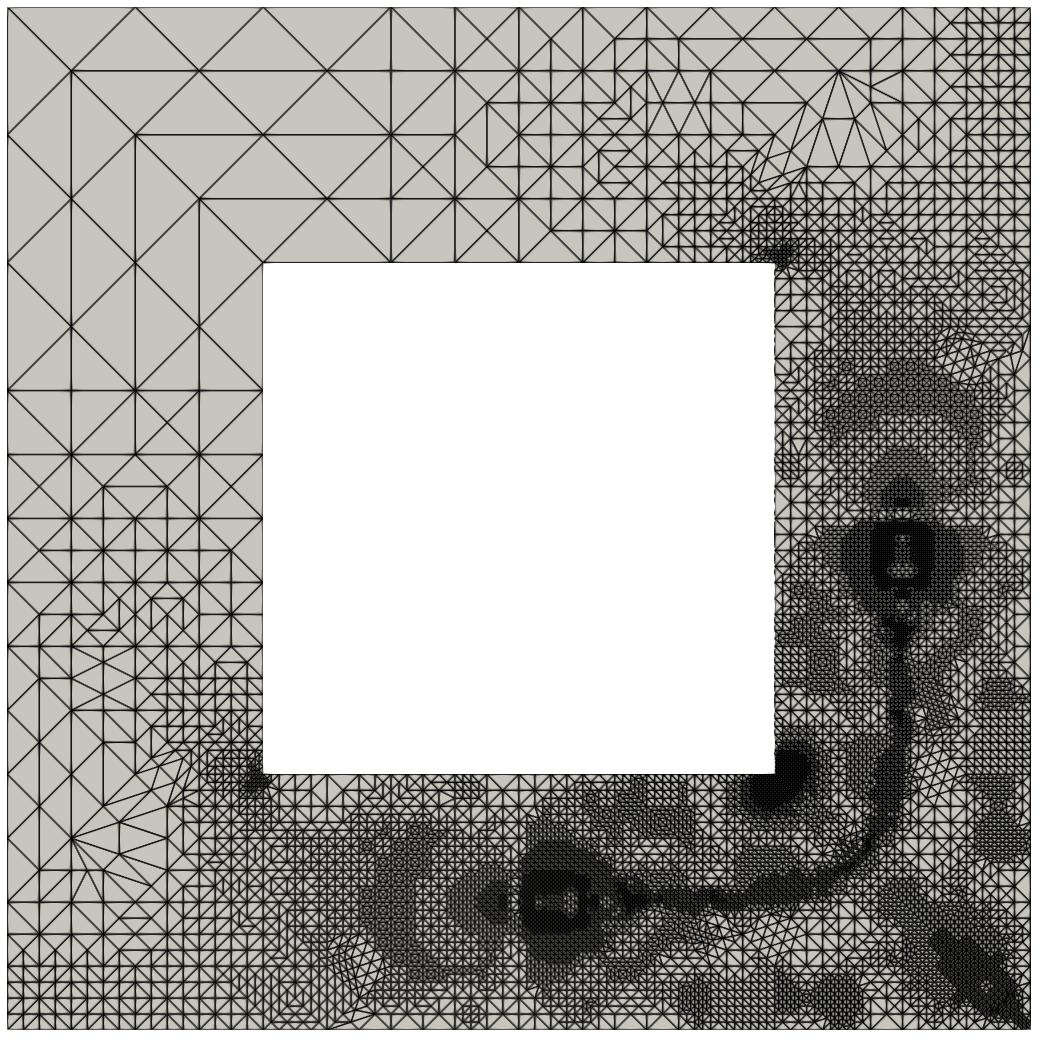}
\end{subfigure}
\caption{Final adapted meshes obtained for the example problem with gradient
singularities for the QoI $\qoi_4(u)$ using the estimate $\eta_1$ (left) and
the estimate $\eta_2$ (right).}
\label{fig:adapt_singular_mesh_qoi4}
\end{figure}

\subsection{Finite Deformation Elasticity}
\label{ssec:elasticity}

In this section, we consider finite deformation elasticity with a neo-Hookean
material model in three spatial dimensions. Let the domain boundary $\Gamma$
be decomposed into $\Gamma_G \subseteq \Gamma$ and $\Gamma_H = \Gamma \setminus \Gamma_G$. Let $\bs{X} \in \Omega \subset \reals^3$ denote a
 point in a reference configuration that transforms to a point
$\bs{x} \in \Omega_t$ in a deformed configuration $\Omega_t \subset \reals^3$
after the reference domain $\Omega$ undergoes some deformation.
Let $\bs{u} = \bs{x} - \bs{X}$ denote the displacement.
Let $\bs{F} := \bs{I} + \frac{\partial \bs{u}}{\partial \bs{X}}$ denote
the deformation gradient, where $\bs{I}$ denotes the second-order identity
tensor. Let $J := \det(\bs{F})$ denote the determinant of the deformation
gradient. In this context, the balance of linear momentum without inertial and
body forces can be written as
\begin{equation}
\begin{cases}
\begin{aligned}
-\nabla \cdot \bs{P} &= \bs{0}, \quad && \bs{X} \in \Omega, \\
\bs{u} &= \bs{G}, && \bs{X} \in \Gamma_G, \\
\bs{P} \cdot \bs{N} &= \bs{0}, && \bs{X} \in \Gamma_H.
\end{aligned}
\end{cases}
\label{eq:elast_strong_form}
\end{equation}
Here, $\bs{G}$ denotes an externally applied displacement, $\bs{N}$ denotes
the unit outward normal to the boundary $\Gamma_H$,
$\bs{P} := J \bs{\sigma} \bs{F}^{-T}$ denotes the first Piola-Kirchhoff stress
tensor, and $\bs{\sigma}$ denotes the Cauchy stress tensor.
We consider a neo-Hookean constitutive model determined by the relationship
\begin{equation}
\bs{\sigma} = \mu J^{-\nicefrac53} \text{dev}(\bs{F}\bs{F}^{T}) +
\frac{\kappa}{2}(J - 1/J) \bs{I},
\end{equation}
where $\kappa$ denotes the bulk modulus, $\mu$ denotes the shear modulus,
and $\text{dev}(\cdot)$ denotes the deviatoric portion of a second-order tensor.
Placing the problem \eqref{eq:elast_strong_form} in weak form yields the
semilinear form
\begin{equation}
\lhsop(\bs{u}; \bs{w}) := \int_{\Omega} \bs{P}(\bs{u}) :
\nabla \bs{w} \, \text{d} \Omega,
\end{equation}
and the linear functional $\rhsop(\bs{w}) = 0$.

\begin{figure}[ht]
\centering
\begin{subfigure}{0.33\textwidth}
\centering
\includegraphics[width=.99\linewidth]{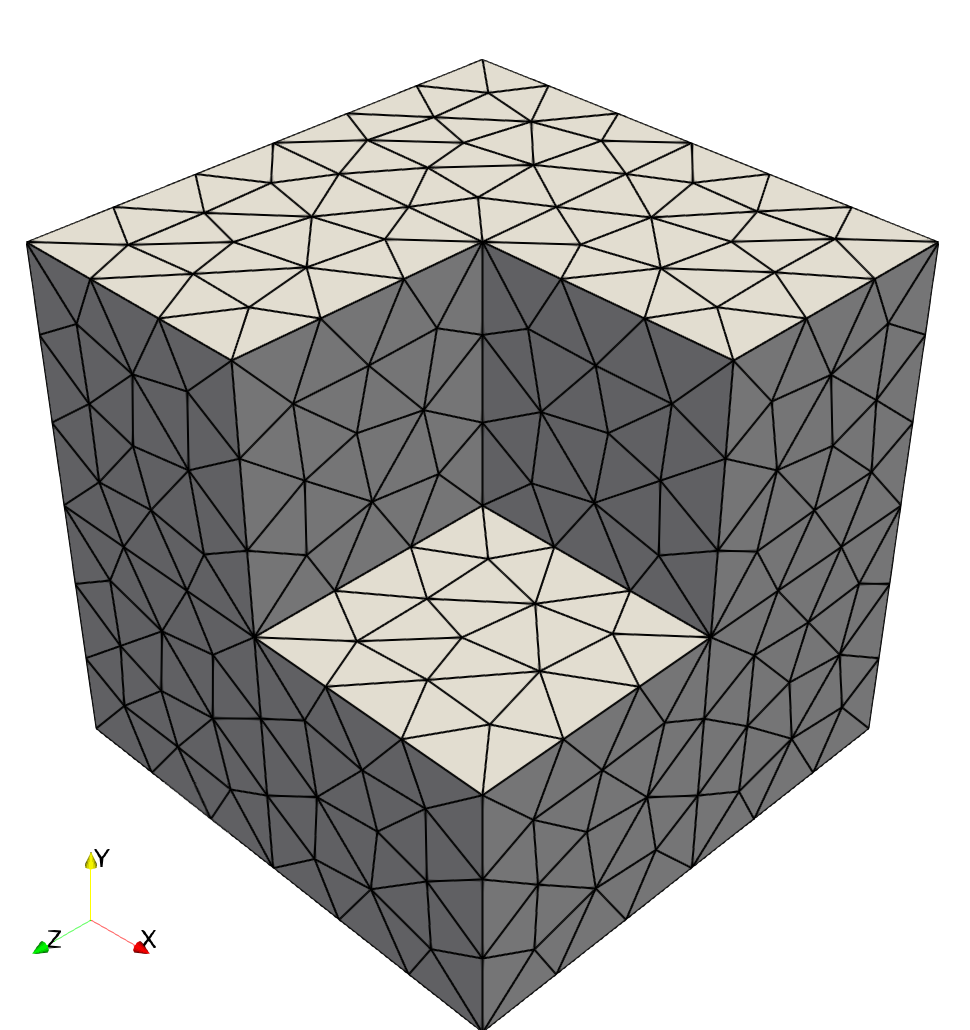}
\end{subfigure}%
\begin{subfigure}{0.33\textwidth}
\centering
\includegraphics[width=.99\linewidth]{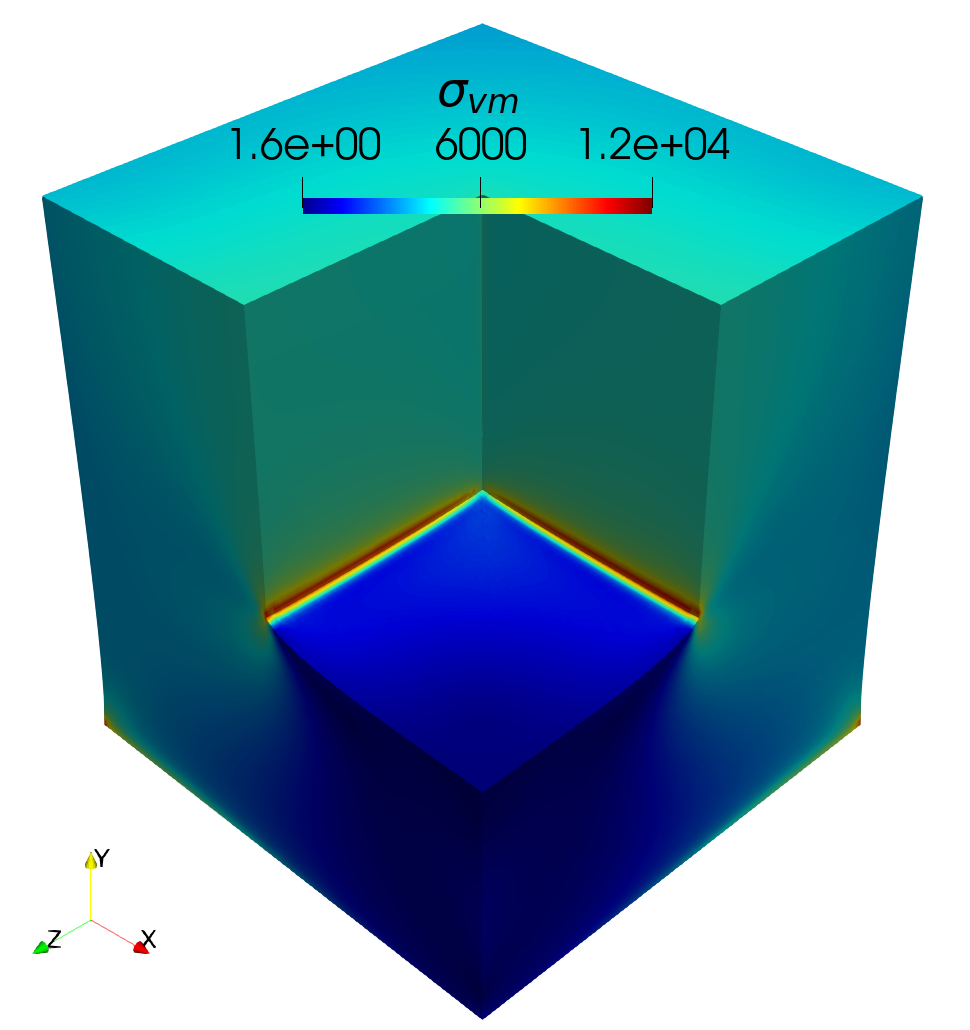}
\end{subfigure}%
\begin{subfigure}{0.33\textwidth}
\centering
\includegraphics[width=.99\linewidth]{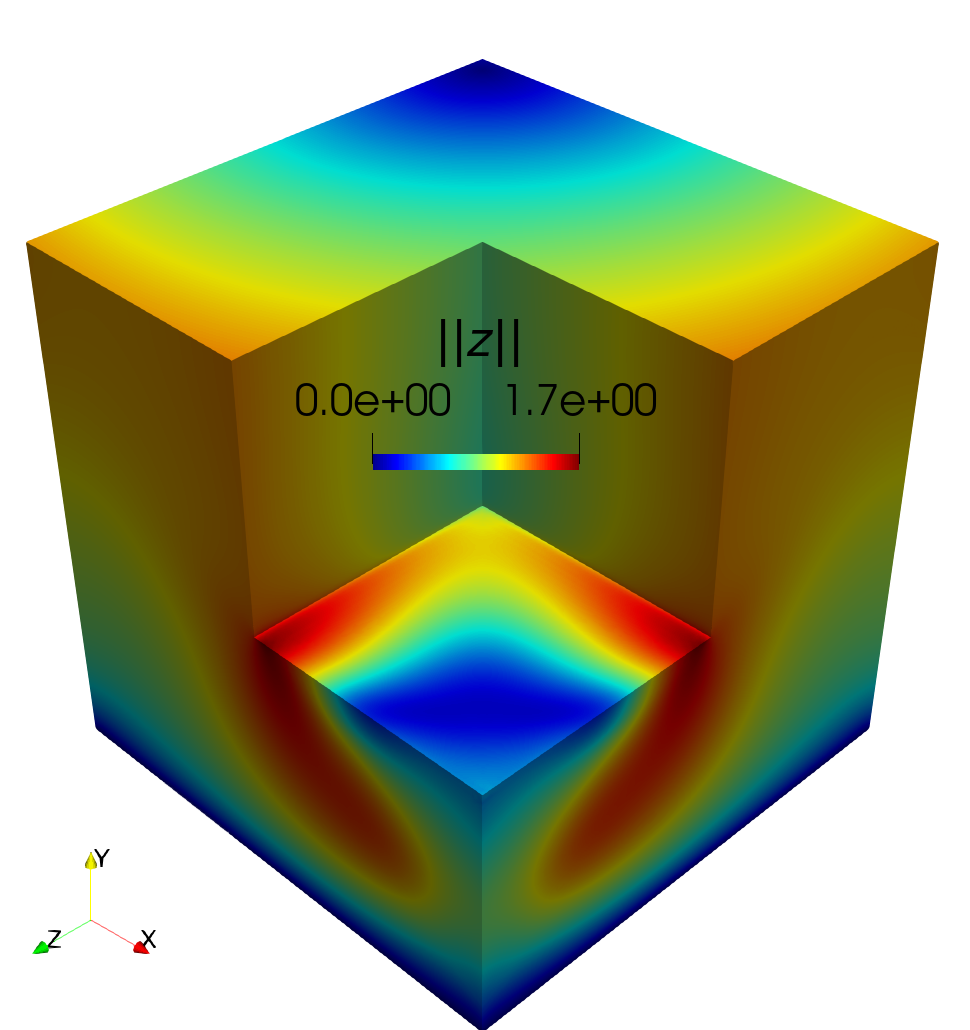}
\end{subfigure}
\caption{The domain $\Omega$ and its initial mesh for the nonlinear elasticity
example (left), the deformed domain $\Omega_t$ after loading (exaggerated by a
factor of 5) with the von Mises stress $\sigma_{vm}$ (MPa) plotted (center),
and the norm of the adjoint solution vector $\bs{z}$ (right).}
\label{fig:elasticity}
\end{figure}

As a model problem, we consider the domain $\Omega$ shown in Figure
\ref{fig:elasticity} which corresponds to a cube of dimensions
$5 \text{mm} \times 5 \text{mm} \times 5 \text{mm}$ with a
cube of size $2.5 \text{mm} \times 2.5 \text{mm} \times 2.5 \text{mm}$
missing from its uppermost corner. Displacements are set to
zero, $\bs{u} = \bs{0}$, on the minimal $y$-face of the geometry and the
problem is driven by a prescribed $y$-displacement of $u_y = 0.1 \text{mm}$,
which corresponds to a $2\%$ strain in the $y$-direction. We consider
a material with elastic modulus $E = 192.7$GPa and Poisson's
ratio $\nu=0.27$, from which the bulk and shear moduli can be determined
by the relationships $\kappa = \frac{E}{3(1-2 \nu)}$ and
$\mu = \frac{E}{2(1 + \nu)}$, respectively.

We again consider solving the finite element problem corresponding to the
governing equations on a coarse space with piecewise linear Lagrange basis
functions and on a fine space with piecewise quadratic Lagrange basis
functions. As a quantity of interest, we consider the von Mises stress
$\sigma_{vm}$ integrated over the domain
\begin{equation}
\qoi_{vm}(\bs{u}) := \int_{\Omega} \, \sigma_{vm} \; \text{d} \Omega,
\end{equation}
where $\sigma_{vm} := \sqrt{\frac32 s_{ij} s_{ij}}$ (with summation
on repeated indices implied), and $\bs{s}$ denotes
the deviatoric portion of the Cauchy stress tensor, as defined by
$\bs{s} := \mu J^{-\nicefrac53} \text{dev}(\bs{F}\bs{F}^{T})$.

\begin{figure}[ht!]
\centering
\begin{subfigure}{0.47\textwidth}
\centering
\includegraphics[width=.99\linewidth]{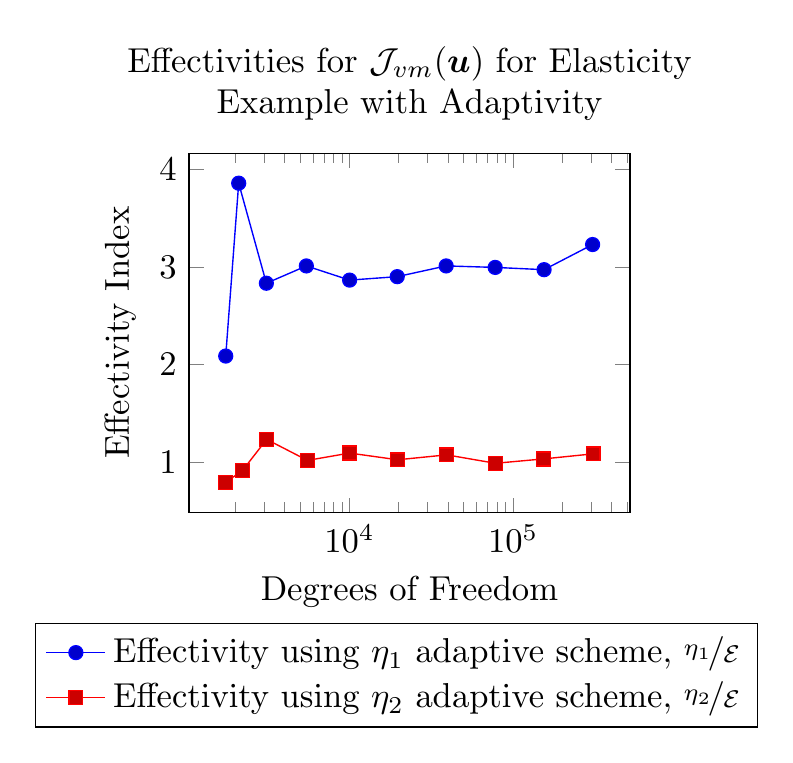}
\end{subfigure}%
\begin{subfigure}{0.43\textwidth}
\centering
\includegraphics[width=.99\linewidth]{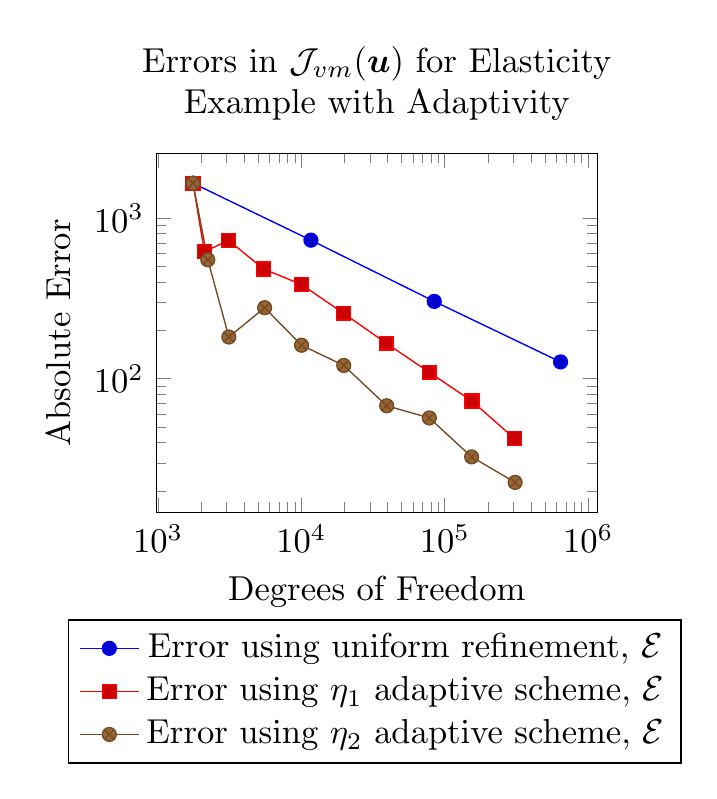}
\end{subfigure}
\caption{Behavior of the adaptive scheme using the estimates $\eta_1$
and $\eta_2$ for the QoI $\qoi_{vm}(u)$ for the finite deformation
elasticity example.}
\label{fig:elasticity_plots}
\end{figure}

We consider the initial mesh of the domain $\Omega$ shown in the
left-most image of Figure~\ref{fig:elasticity} and perform
$10$ iterations of the process:
\begin{equation*}
\text{solve primal PDE} \rightarrow
\text{solve adjoint PDE} \rightarrow
\text{estimate error} \rightarrow
\text{adapt mesh}
\end{equation*}
where either the traditional adjoint-weighted residual $\eta_1$
estimate or the newly proposed estimate $\eta_2$ is used to drive
mesh adaptivity. At each adaptive iteration, the mesh size field
is set according to equation~\eqref{eq:size_field} so that the target
number of elements $T$ in the resultant mesh is twice the number of
elements in the current mesh. At each mesh instance, we evaluate
the QoI $\qoi^H_e(\bs{u}^H)$ on the space and measure its error
with respect to a reference approximation for the exact value of
the QoI, compute an approximation to the exact error $\error$ using
this reference value, and compute the effectivity indices
$\eta_1/\error$ and $\eta_2/\error$.
To find the reference value, the primal problem is solved
on a mesh with $17.3$ million degrees of freedom using the fine
space $\fspace^h$, where this mesh is finer at every spatial location
than the final adapted meshes used in the results below, which contain
about $300,000$ degrees of freedom. The
reference value is found to be $\qoi_{vm}(\bs{u}) \approx 3.731860\text{GPa}$.

Figure~\ref{fig:elasticity_plots} illustrates the behavior of
the error estimates $\eta_1$ and $\eta_2$ as adaptive iterations
are performed. The left-hand side of Figure~\ref{fig:elasticity_plots}
illustrates illustrates that the newly proposed estimate
$\eta_2$ is more effective for the QoI $\qoi_{vm}(\bs{u})$ than
the traditional adjoint-weighted residual estimate $\eta_1$.
Note that for this problem the error in the QoI is negative, which is why
the effectivity index for $\eta_1$ is consistently greater than one.
This is in contrast to the examples in the previous section.
The right-hand side of Figure~\ref{fig:elasticity_plots} highlights
once again that the localizations of each estimate lead to adaptive
schemes that are different in terms of convergence behavior.
Specifically, driving mesh adaptivity with the estimate $\eta_2$
leads to a significantly more accurate adaptive scheme for the
QoI $\qoi_{vm}(\bs{u})$ when compared to driving adaptivity
with the estimate $\eta_1$ or when compared to uniform
refinement.

\begin{figure}[ht]
\centering
\begin{subfigure}{0.33\textwidth}
\centering
\includegraphics[width=.99\linewidth]{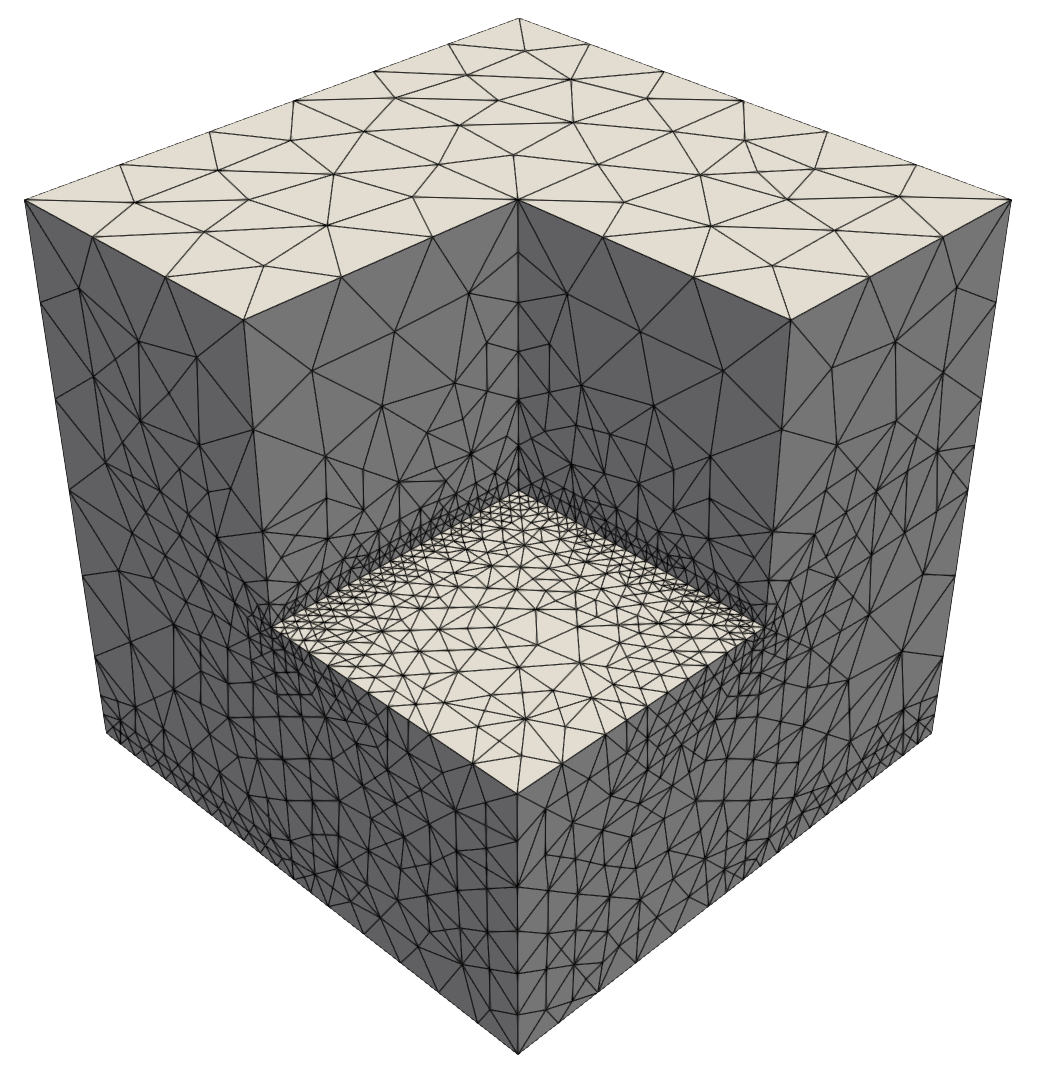}
\end{subfigure}%
\begin{subfigure}{0.33\textwidth}
\centering
\includegraphics[width=.99\linewidth]{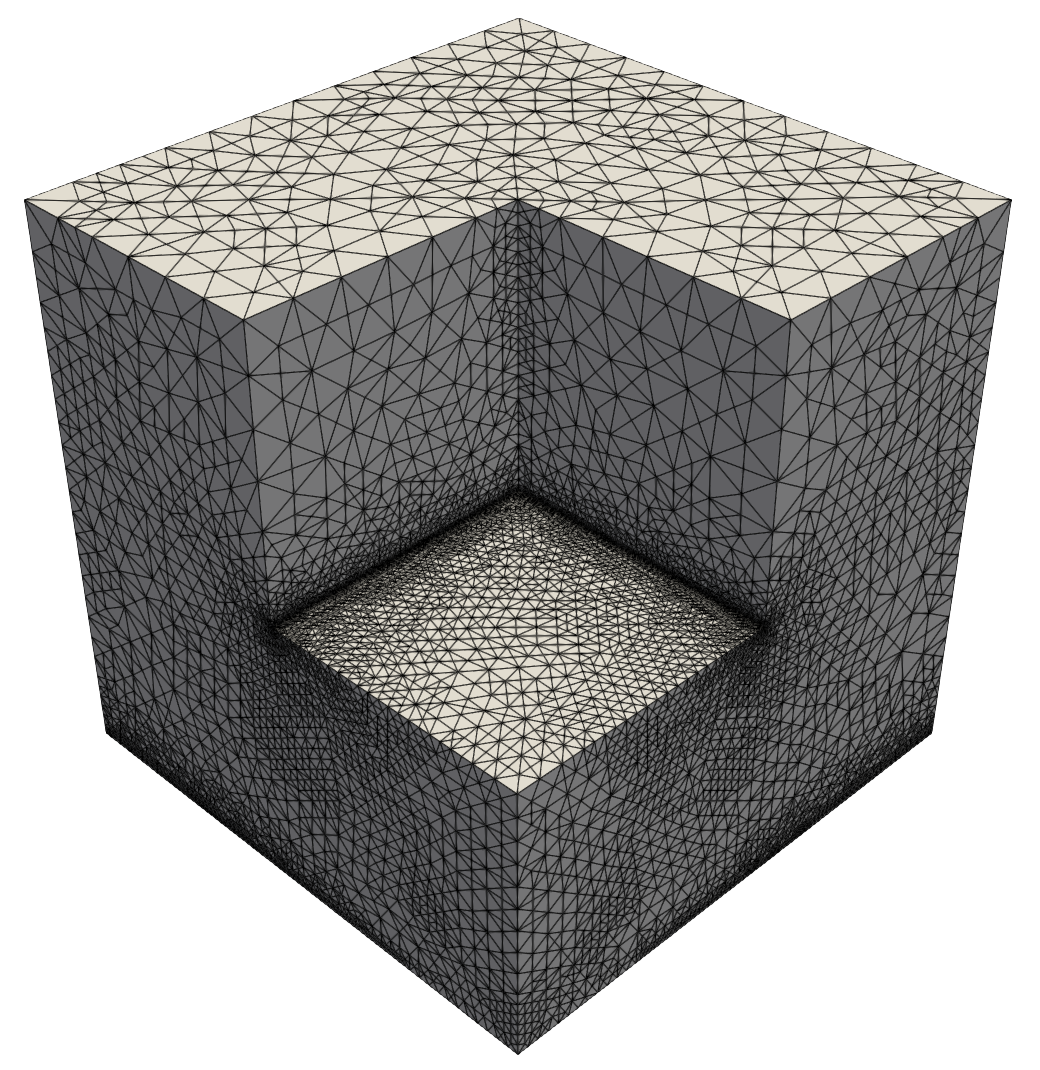}
\end{subfigure}%
\begin{subfigure}{0.33\textwidth}
\centering
\includegraphics[width=.99\linewidth]{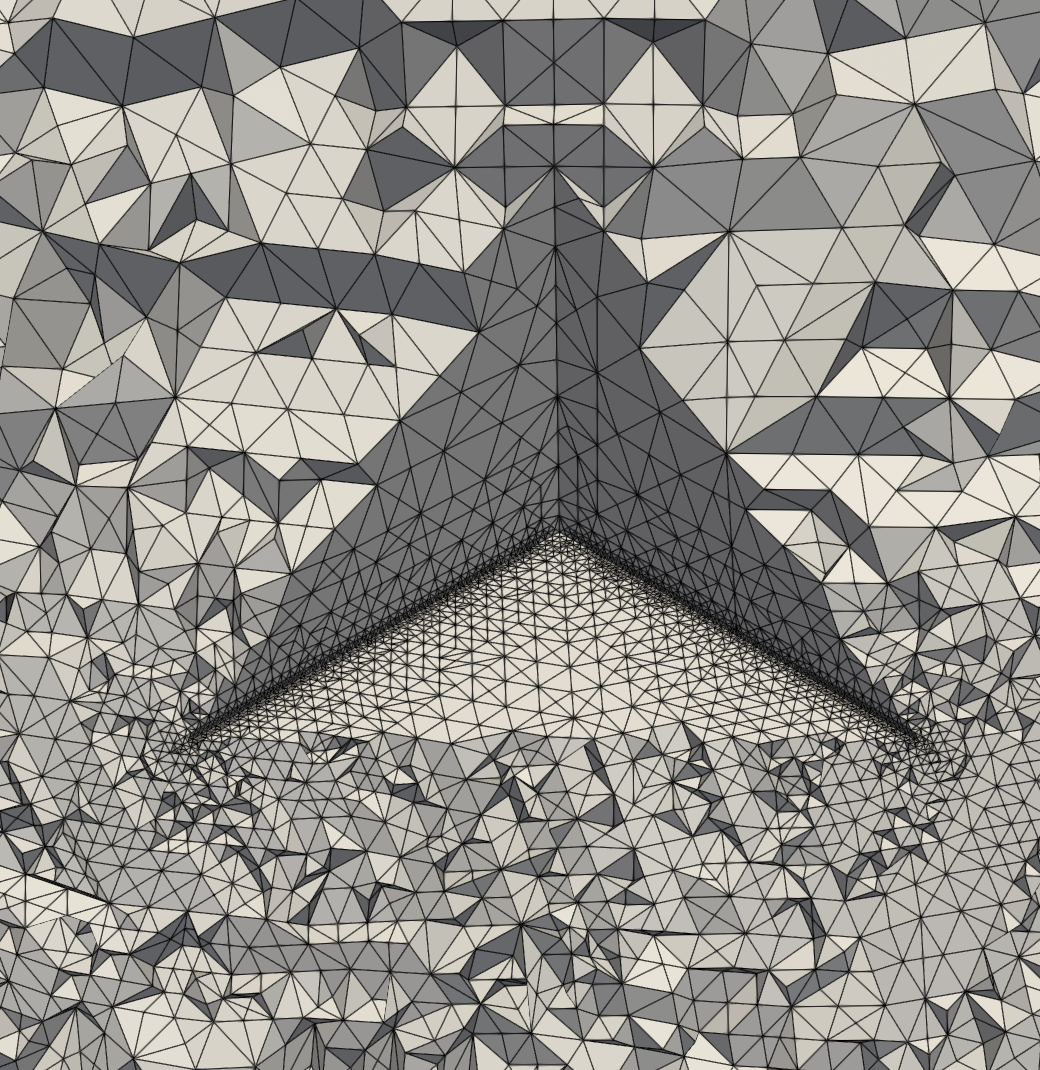}
\end{subfigure}
\caption{Meshes using the $\eta_2$ adative scheme: the adapted mesh at
the fifth adaptive iteration (left), the adapted mesh at the final adaptive
iteration (center), a close-up and cutaway of the final adaptive mesh
(right).}
\label{fig:elasticity_meshes}
\end{figure}

\section{Conclusions}
\label{sec:conclusions}

In this paper, we have considered goal-oriented \emph{a posteriori}
error estimation for nonlinear QoIs when solving nonlinear PDEs with
continuous Galerkin finite element methods. As a starting point, we have
reviewed the derivation of a traditional two-level, or discrete, adjoint-weighted
residual  approach to approximate QoI discretization errors. Inherent in this derivation
is the linearization of both the nonlinear PDE residual and the nonlinear functional of
interest, where higher-order terms are subsequently disregarded. We refer to these
higher-order terms as linearization errors, as they introduce an error into the error estimate.
By considering these linearization errors, we have derived a novel
two-level adjoint-based error estimate that exactly represents the
QoI discretization error between two function spaces at the expense
of increased computational complexity.

We have motivated our interest in this novel error estimate with several
\emph{a priori} justifications and, in particular, have reasoned that the
traditional adjoint-weighted residual error estimate can under-predict
the QoI discretization error for nonlinear QoIs. We have demonstrated that
this under-prediction can in fact occur for two example problems with a
nonlinear Poisson's equation in two spatial dimensions and that
this behavior can be QoI dependent. For these problems,
we have demonstrated that the newly proposed error estimate
does not suffer from this under-prediction and is effective for each
investigated QoI. Additionally, we have demonstrated that the newly proposed
error estimate is more effective than the traditional adjoint-weighted
residual estimate for a finite deformation elasticity example in three
spatial dimensions.

The solution of a nonlinear scalar problem is required for the newly proposed
error estimate. We have provided the analytic solution to this problem when
quadratic QoIs are considered and have outlined how Newton's method can be
used to solve the general nonlinear scalar problem. We have demonstrated
heuristically that the analytic solution for quadratic QoIs is a good initial
guess for Newton's method when more general nonlinear QoIs are considered.

We have shown how the newly proposed error estimate can be localized
and incorporated into an iterative adaptive mesh algorithm. We have
demonstrated that the newly proposed estimate can lead to more
accurate approximations of a nonlinear QoI with fewer degrees of freedom
when compared to uniform refinement and traditional adjoint-based
approaches by considering a nonlinear Poisson's example problem and a finite
deformation elasticity example problem.

Orthogonally, we have demonstrated that the evaluation of the PDE residual
linearization error can lead to a convenient verification check for the
adjoint solution used in traditional adjoint-weighted residual error
estimates.

As an avenue for future work, we have suggested incorporating known recovery
techniques into the novel estimate to reduce its computational complexity.
One could also investigate the effect that the choice of coarse
and fine spaces has on the newly proposed error estimate, as we have only
considered piecweise linear and quadratic Lagrange basis functions, respectively,
in the present work.

\section{Acknowledgments}
\label{sec:ack}

Supported by the Advanced Simulation and Computing program at Sandia National
Laboratories, a multimission laboratory managed and operated by National
Technology and Engineering Solutions of Sandia LLC, a wholly owned subsidiary
of Honeywell International Inc. for the U.S. Department of Energy's National
Nuclear Security Administration under contract DE-NA0003525. This article
describes objective technical results and analysis. Any subjective views or
opinions that might be expressed in the article do not necessarily represent
the views of the U.S. Department of Energy or the United States Government.


\appendix

\section{QoI Values for the Manufactured Solution}
\label{sec:qoi_values_manufactured}

\noindent The QoI values for the manufactured solution and domain defined
in Section~\ref{ssec:manufactured} are:
\begin{equation}
\begin{aligned}
\qoi_1(u) &= \frac{64 \pi^2 (e^{\nicefrac52}-1)^2 (e^5 + e^{\nicefrac52} + 1)}
{e^5(16 \pi^2 + 25)^2} \approx 2.57052599061823, \\
\qoi_2(u) &=  \frac{65536 \pi^6(e^{15} + e^{\nicefrac{45}{4}}
+ e^{\nicefrac{15}{4}} + 1)}{9 e^{\nicefrac{15}{2}} (256 \pi^4 + 4000 \pi^2 +
5625)^2} \approx 2.64090163593838, \\
\qoi_3(u) & = \frac{32 \pi^4(e^{\nicefrac52} - 1)^2 (e^5 + e^{\nicefrac52} +
1)}{25 e^5 (16 \pi^2 + 25)} \approx 9.28106693871883 \times 10^{1}, \\
\qoi_4(u) &\approx 5.67945022.
\end{aligned}
\end{equation}
Here $\qoi_4(u)$ was approximated numerically using Mathematica.

\section{QoI Values for the Solution with Gradient Singularities}
\label{sec:qoi_values_singular}

\noindent The QoI values for the problem definition defined in
Section~\ref{ssec:singular} are:
\begin{equation}
\begin{aligned}
\qoi_1(u) &\approx 6.540644835, \\
\qoi_2(u) &\approx 1.238067612 \times 10^{1}, \\
\qoi_3(u) &\approx 1.596007278 \times 10^{2}, \\
\qoi_4(u) &\approx 9.391778787.
\end{aligned}
\end{equation}

\end{document}